\documentclass[12pt]{article}%
\usepackage[english]{babel}
\usepackage[titletoc,title]{appendix}
\usepackage{amssymb}
\usepackage{amsfonts}
\usepackage{amsmath}
\usepackage{mathrsfs}
\usepackage{natbib}
\usepackage{appendix}
\usepackage{pstricks}
\usepackage{pst-plot}
\usepackage{bm}
\usepackage{bbm}
\usepackage[nohead]{geometry}
\usepackage[singlespacing]{setspace}
\usepackage[bottom]{footmisc}
\usepackage{indentfirst}
\usepackage{endnotes}
\usepackage{graphicx}
\usepackage{rotating}
\usepackage{subfigure}
\usepackage{verbatim}
\usepackage[hidelinks]{hyperref}
\usepackage{cleveref}
\usepackage{enumitem}
\usepackage{url}
\usepackage[labelsep=period]{caption}
\usepackage[normalem]{ulem}
\usepackage{setspace}

\usepackage{newtxtext,newtxmath}
\usepackage{microtype}

\usepackage{tikz}

\definecolor{uncblue}{RGB}{75, 165, 211}
\definecolor{nberblue}{RGB}{0, 90, 155}
\definecolor{resgreen}{RGB}{34, 84, 67}
\definecolor{pennblue}{RGB}{0, 44, 119}
\definecolor{pennred}{RGB}{152, 30, 50}

\newtheorem{corollary}{Corollary}

\newtheorem{definition}{Definition}

\newtheorem{lemma}{Lemma}

\newtheorem{proposition}{Proposition}

\newcommand*{\lo}{\underline}
\newcommand*{\hi}{\overline}
\newcommand*{\lt}{\left}
\newcommand*{\rt}{\right}
\newcommand*{\ls}{\leqslant}
\newcommand*{\gs}{\geqslant}

\newcommand*{\lmt}{\rightarrow}
\newcommand*{\tight}{\setlength{\itemsep}{0pt}}

\newcommand*{\nn}{\nonumber}

\newcommand*{\dd}{\mathrm{d}}

\usepackage{accents}

\newenvironment{proof}[1][Proof]{\noindent\textbf{#1.} }{\ \rule{0.5em}{0.5em}}
\makeatletter
\def\@biblabel#1{\hspace*{-\labelsep}}
\makeatother
\title{Algorithmic Attention and Content Creation on Social Media Platforms\footnote{Chen: SC Johnson Graduate School of Management, Cornell University (\href{mailto:yc2535@cornell.edu}{yc2535@cornell.edu}). Li: Department of Economics, University of North Carolina (\href{mailto:lifei@email.unc.edu}{lifei@email.unc.edu}). Preuss: SC Johnson Graduate School of Management, Cornell University (\href{mailto:preuss@cornell.edu}{preuss@cornell.edu}).  The order of authors is alphabetical. We are grateful to Nageeb Ali, Simon Anderson, Ian Ball, Arjada Bardhi, Heski Bar-Isaac, Gary Biglaiser, \"{O}zlem Bedre-Defolie, Gabriel Carroll, Yeon-Koo Che, Daniel Chen, Dongkyu Chung, Roberto Corrao, Alexandre de Corniere, Tom Conningham, Jacques Cremer, Rahul Deb, Piotr Dworczak, Mira Frick, Yingni Guo, Andrei Hagiu, Kevin He, Wei He, Shota Ichihashi, Justin Johnson, Bruno Jullien, Tony Ke, Kyungmin Kim, Chiu-Yu Ko, Jiangtao Li, Yingkai Li, Yunan Li, Jeanine Miklos-Thal, Ellen Muir, Robin Ng, Alessandro Pavan, Martin Peitz, Patrick Rey, Ron Siegel, Yangbo Song, Bruno Strulovici, Tat-How Teh, Guoqiang Tian, Udayan Vaidya, Jiamingmei Wang, Jiaming Wei, Julian Wright, Kai Hao Yang, Leeat Yariv, Huseyin Yildrim, Nathan Yoder, Hanzhe Zhang, Jidong Zhou and seminar and conference participants for their helpful comments and valuable discussions. All remaining errors are our own.}}
\author{Yi Chen \quad\quad\quad Fei Li\quad\quad\quad Marcel Preuss}
\date{\today\vspace{10pt}\\ 
}

\begin{document}

\maketitle

\begin{abstract}
We study revenue-maximizing attention allocation on an ad-funded social media platform governed by recommendation algorithms. Attention is costly and can be monetized through advertising or allocated to increase creators’ exposure, creating a trade-off between monetization and production incentives. In a two-sided model with heterogeneous viewers and creators under private information, the optimal recommendation mix includes content that is ex post suboptimal for some viewers to leverage network externalities. These distortions are targeted: low-ability creators are excluded, while high-ability creators are subsidized through exposure or monetary payments. Two-sided complementarities reshape content variety and quality, with implications for personalization regulation and advertising markets.

\bigskip

\textbf{Keywords:} Social Media; Platform; Attention; Mechanism Design; Network Effect

\textbf{JEL Codes:} D81, D82, D83
\end{abstract}

\onehalfspacing

\clearpage%

\pagenumbering{arabic} 

\setlength{\parskip}{6pt}

\section{Introduction}

\noindent Social media platforms hold a prominent position in the digital economy. More than five billion users worldwide spend over two hours per day consuming user-generated content, including short-form videos, live streams, commentary, tutorials, and entertainment. This massive content consumption is fueled by an unmatched supply of content by digital creators: on TikTok alone, approximately 34 million videos are uploaded each day. Social media platforms capitalize on this popularity by selling their users' attention to advertisers, leading to vast amounts of ads embedded in the users' organic feeds. In 2025, social media advertising spending exceeded USD 200 billion.

To manage user interactions at such scale, major platforms such as TikTok, Instagram, YouTube, or X rely on centralized recommendation algorithms to determine which content and sponsored messages appear in each user's feed.\footnote{On Instagram, the Reels and Explore tabs consist exclusively of recommended content. Even the main feed, which before 2022 was reserved for followed accounts, is now algorithmically ranked (Instagram Announcement, 2022). On TikTok, recommended content is even more prominent.} By directing exposure across heterogeneous content and advertisements, these systems govern how attention is generated, allocated, and monetized.

Designing such systems raises important economic questions about user attention allocation. Attention serves two distinct roles: it generates advertising revenue and shapes creators' exposure and incentives to produce content. At the same time, attention is scarce and costly, and attracting it requires engaging content. The platform's recommendation system must therefore balance monetization incentives against the need to provide creators with sufficient exposure to sustain the production of engaging content.

This paper develops a framework to study the optimal allocation of attention on a social media platform. The platform operates as a two-sided marketplace that hosts content across multiple horizontally differentiated categories and serves viewers and creators. Viewers differ in their interests across categories and in their privately known valuation of content quality, and they incur opportunity cost of attention when consuming both content and sponsored advertisements. Creators differ in category and privately known ability, benefit from exposure, and choose costly effort to determine quality. The platform commits to a \emph{recommendation mechanism} that allocates attention across content and advertisements subject to participation and incentive constraints. The platform faces a \emph{trade-off} between harvesting attention through advertising and cultivating it by strengthening production incentives. 

We characterize the advertisement-revenue-maximizing allocation of attention. Allocating attention to creators activates network effects by strengthening effort incentives and improving content quality, thereby increasing the monetizable attention base of all interested viewers.%
Leveraging network effects requires allocating some viewers’ attention to content whose realized value falls below their individual attention cost. We refer to these distorted recommendations as \emph{attention subsidies}. By increasing creator exposure, these subsidies improve content quality and thereby benefit other viewers who value the content more. 

The optimal design uses attention subsidies \emph{discriminatorily}. Low-ability creators are excluded because inducing their effort would require too much subsidized attention. All included creators receive subsidized exposure, but from different sources. Medium-ability creators receive such exposure primarily \emph{within} their own category, whereas sufficiently high-ability creators receive additional \emph{cross-category} attention when the marginal return to effort is large. Through this selective allocation, the platform endogenously determines the strength and scope of network effects.

In this two-sided market framework, private information on one side amplifies distortions on the other. To limit information rents, the platform allocates attention more selectively toward higher-value viewers and higher-ability creators than under complete information. As a result, participating creators exert less effort and overall viewer engagement declines. Because attention and effort are complementary across the two sides, these distortions propagate through the network: changes in incentives on one side feed back into the other, and distortions extend even to the top of the user-type space on both sides. At the extensive margin, private information raises the bar both for inclusion and for receiving cross-category reach. Hence, whether a category survives or thrives depends not only on the size and value composition of its viewer base, but also on the severity of information asymmetry. Categories with large, high-valuation audiences and limited informational distortions may sustain creators with broad algorithmic reach, whereas categories with small audiences or severe informational frictions receive too little exposure to operate at scale.

Despite the richness of the allocation, the mechanism admits a simple implementation. On the creator side, exposure depends on realized quality, sustaining effort incentives under private information. On the viewer side, the nested structure allows the platform to present a common ordered feed of content and advertisements. Viewers consume sequentially and choose a stopping point that replicates the optimal allocation. This implementation is natural under the “infinite scroll” interface common on social media platforms, and it yields a simple empirical prediction: lower-ranked content should receive less attention and have lower match quality for a given viewer. 

The comparison between complete and private information offers a new perspective on the role of user data in social-media platforms’ ability to generate engagement. It highlights a regulatory trade-off between sustaining category diversity and protecting user privacy through policies such as the recent EU Digital Markets Act (DMA). When platforms can use rich user data to allocate attention, they can target recommendations more precisely, exploit attention-effort complementarities more effectively, and sustain a broader set of content categories.%
By contrast, restrictions on the use of user data limit personalization and leave participating viewers with privacy benefits and information rents. 
Since it is more costly to provide incentives and leverage network effects, the platform reduces participation on the extensive margin and shrinks the scale and diversity of content provision: content from niche categories is less likely to be recommended, some viewers lose access to content they value, and talented creators in those categories may be unable to find a viable audience.

We then extend the model to allow direct monetary payments to creators.%
In general, the optimal mechanism may use exposure and payment jointly, and the relative reliance on the two instruments can vary across creator types. Under a mild condition, however, the pattern becomes especially simple. For each category, there is a cutoff ability: below the cutoff, creators are incentivized only through exposure; above the cutoff, exposure is held fixed and additional incentives are provided through positive monetary transfers. Thus, monetization programs are targeted toward high-ability creators precisely when additional exposure is a relatively inefficient way to reward them. This pattern resembles the coexistence of algorithmic amplification and direct monetization programs on major platforms such as TikTok and YouTube, which combine algorithmic exposure with direct monetization programs to reward creators.

As the platform earns more from each unit of advertising, for example because of improved ad-targeting technology, richer profiling-based targeting, or greater market power in the advertising market, viewer attention becomes more valuable in direct monetization. The platform therefore has a stronger incentive to preserve attention for advertising and substitutes toward monetary payments when incentivizing creators, which reduces the use of cross-category exposure-based incentives and improves recommendation accuracy. 

This comparative static provides a potential rationale for why platforms choose different mixes of creator payments and exposure-based rewards. Platforms with stronger advertising monetization rely more on direct payments and less on distorted cross-category exposure, whereas platforms with weaker advertising monetization rely more heavily on reach-based rewards. It yields a testable implication: improvements in ad monetization should be associated with greater use of creator payment programs and more accurate recommendations. 

Finally, it raises a policy trade-off between protecting users from profiling-based advertising and preserving recommendation accuracy. Regulations such as the EU Digital Services Act (DSA) and DMA are motivated by privacy, safety, and user autonomy, but they may also affect the platform’s attention-allocation trade-off. By reducing the profitability of targeted advertising, such rules make exposure a relatively cheaper incentive instrument, potentially increasing the platform’s reliance on algorithmic amplification and other recommendation distortions.

\paragraph{Contribution to Literature}
This paper belongs to the economics of social media platforms, which host user-generated content and monetize attention by selling advertising (see%
\cite{aridor2024economics} who provide a comprehensive survey). Much of this literature takes the algorithm as given and studies its welfare, political, and advertising implications. Our contribution is to provide a unified framework where algorithmic attention allocation is the platform's primary incentive instrument for shaping content production and viewer participation.

We are not the first to study how recommendation algorithms influence users' incentives. For example, \cite{berman2020curation} demonstrate that improving filtering technology can improve content quality by encouraging users to manually expand their social networks and intensify creator competition. \cite{qian2024digital} argue that recommendation algorithms can mitigate creators' moral hazard problem by threatening to withdraw attention from low-quality content, even at the expense of viewers' ex post surplus. While these studies examine recommendation design and incorporate some related elements in our study, the key trade-off of attention allocation in our setting and the corresponding two-sided strategic interaction and endogenous determination of network effects are not their focus.
\cite{bar2025selling} adopt a mechanism design approach to study visibility allocation and certification, focusing on how exposure and certification can screen content providers' willingness to pay for visibility. By contrast, we analyze algorithm design in a heterogeneous two-sided environment, where screening on both viewer and creator sides and the scale of network effects are jointly determined.\footnote{See also \citet{ng2025moderating}, who study content moderation in a signaling framework where creator effort serves as a signal of truthfulness. They demonstrate that a platform may optimally limit moderation intensity because stricter enforcement crowds out creators' incentives to produce engaging content. }

More broadly, our paper adds to the literature on platform design, which studies platforms as multi-sided marketplaces and analyzes how to set fees and access rules to maximize profits \citep{jullien2021two}. One related strand examines media platforms in which content is produced by the platform itself and revenue is generated through advertising, beginning with \cite{anderson2005market}; see \cite{anderson2015advertising} for a survey and  \cite{choi2023platform, ichihashi2024mechanism} for recent developments.  In these models, platforms either host or directly produce content by themselves to attract viewers and monetize attention through advertising. The strategically active sides are advertisers and consumers, and content provision is typically modeled as centralized rather than decentralized, so creator incentives do not arise as a strategic margin. 

A second strand studies product-market platforms that design ranking rules, consideration sets, or search order to influence seller's pricing decisions (e.g.,%
\cite{johnson2023platform} and%
\cite{ichihashi2023buyer}). In these models, platform design shapes seller strategies by controlling which sellers consumers observe, effectively using allocation as an incentive device. As a result, platforms may optimally distort ex post efficient allocation to induce truthful revelation or to influence seller action, similar to our use of attention allocation as an incentive instrument for creators. In contrast to both strands, we analyze attention allocation under decentralized content production and focus on the implications of creator–viewer incentive complementarities on platform design. 

Also related is the literature on matching  design in two-sided markets, where platforms distort matching patterns to extract surplus (e.g., \cite{damiano2007price}; \cite{gomes2016many,gomes2024price}).%
See Section 6 of \cite{jullien2021two} for a survey. In these models, platforms typically monetize participation through entry fees or transaction-based payments and use matching rules as instruments for price discrimination. We also study how a platform matches content to consumers through attention allocation. However, unlike these matching settings,%
digital content is non-rival and non-exclusive, and viewer attention rather than content itself is the scarce resource. As a result, allocation in our environment shapes incentive complementarities and the scope of network effects, rather than serving primarily as a rent-extraction device. Also, we study the endogenous content production, which is typically absent from these models.

Finally, our optimal algorithm shares some common features with the literature on mechanism design with externalities. The algorithm addresses consumers' incentive problems arising from private information and free-riding by restricting their access to different types of content, akin to mechanisms that use exclusion to screen willingness to pay in the provision of excludable public goods (e.g., \cite{hellwig2003public} and  \cite{norman2004efficient}).  \cite{meisner2024monetizing} are especially related in studying profit-maximizing mechanisms for excludable digital content with network effects, but their focus is on user-facing monetization by a content creator, whereas ours is on platform-controlled recommendation and attention allocation as incentive instruments.

\paragraph{Organization} The rest of the paper is organized as follows. Section~\ref{sec:model} sets up the model. Section~\ref{sec:com} studies the optimal algorithm in the complete information benchmark. Section~\ref{sec:inc} characterizes the optimal design under incomplete information and discusses its implementation. Section~\ref{sec:money} extends the model by allowing for monetary incentives for creators. Section~\ref{sec:conclusion} concludes.

\section{Model}\label{sec:model}

We model a social media platform as a monopolistic two-sided online marketplace where users produce and consume digital content, such as articles, music, and videos. The platform employs a personalized recommendation algorithm to distribute user-generated content and sponsored advertisements (ads) in order to maximize advertising revenue.

\subsection{Environment}

The platform intermediates between two groups of users: content \emph{creators} (he), who incur costs to produce content and benefit from attracting viewer attention, and content \emph{viewers} (she), who spend attention and derive utility from content but not from advertisements. Content is organized into $N>1$ horizontal \emph{categories}, representing distinct topical domains (e.g., news, entertainment, sports). Let $\mathcal{N}\equiv\{1,\ldots,N\}$.

\paragraph{Content Production}
A creator is characterized by a vertical type called \emph{ability} $\theta\in\Theta\equiv[0,\overline{\theta}]$, and a horizontal \emph{category} $i\in\mathcal{N}$ which is the only category in which he can produce content. There is a continuum of creators in category $i$ with mass $\alpha_i>0$. Let $\alpha\equiv\sum_i\alpha_i$ denote the total mass of creators. Within category $i$, abilities follow a CDF $F_i$ with a continuous PDF $f_i$. We assume strictly increasing hazard rate $\frac{f_i(\theta)}{1-F_i(\theta)}$. We refer to a creator of ability $\theta$ in category $i$ as a $(\theta,i)$-creator, sometimes simply $\theta$-creator when no confusion arises.

A creator exerts effort $e\in\mathbb{R}_+$ to produce one unit of content. At ability level $\theta$, effort increases \emph{quality} of this piece of content according to:
$$q=\theta e.$$
Ability and effort are thus complementary: higher ability enables a higher marginal return to effort. The cost of effort is linear and normalized to $e$.

A creator derives utility from the total mass of viewer attention received. Receiving $X\in\mathbb{R}_+$ mass of attention while exerting effort $e$, a creator's net payoff is:
$$u(X)-e,$$
where $u(0)=0$ and $u'(\cdot)>0$. 
The attention-driven utility $u(X)$ captures benefits from reach that are outside the platform’s direct control. These benefits include external monetization opportunities, such as sponsorships, private advertising deals, live-stream selling, and off-platform sales, as well as non-monetary intrinsic motives, such as psychological satisfaction, social influence, and career advancement.\footnote{The intrinsic motive specification is consistent with the empirical findings in \cite{srinivasan2024paying}. Using Reddit data, the paper shows that content creation by anonymous creators remains highly responsive to viewer attention even when direct monetization and reputational opportunities are limited. \cite{toubia2013intrinsic} also document image-related utility as being a significant  motivating factor for Twitter users.
} 
In Sections~\ref{sec:com} and \ref{sec:inc}, we assume any utility creators receive stems directly from attention.\footnote{Our  baseline model assumes that creators value total attention regardless of its source. In practice, under some platform environments and monetization models, the benefit generated by a viewer’s attention may depend on the viewer’s preferences or audience type. Since our goal is not to commit to a specific business model, we abstract from this heterogeneity in the baseline and discuss its implications in Section~\ref{sec:dis}.
}%
Section~\ref{sec:money} allows creators to receive payment from the platform.

\paragraph{Content Consumption}

A viewer is characterized by a horizontal type $j\in\mathcal{N}$, which represents her \emph{preferred} content category, and a vertical type called \emph{valuation} $v\in V\equiv[\lo{v},\hi{v}]$, which measures how much she values quality in that category. 
There is a continuum of viewers in category $j$ with mass $\beta_j>0$. Let $\beta\equiv\sum_j\beta_j$ denote the total mass of viewers. Within category $j$, valuations follow a CDF $G_j$ with a continuous PDF $g_j$. We assume strictly increasing hazard rate $\frac{g_j(v)}{1-G_j(v)}$. We refer to a viewer of valuation $v$ in category $j$ as a $(v,j)$-viewer, sometimes simply $v$-viewer when no confusion arises.

Paying attention is costly: each unit of attention spent on content or ads costs a constant $c>0$, regardless of its use. This \emph{attention cost} captures the opportunity cost of time and potential ``nuisance cost'' of advertising that is often mentioned in the literature \citep{anderson2005market}. A $(v,j)$-viewer who spends one unit of attention on content of quality $q$ in category $i$ receives net payoff:
\[
vq\mathbbm{1}\{i=j\}-c.
\]
The specification captures horizontal preference differentiation through viewers’ preferred categories, and vertical heterogeneity through how much they value quality in that category.\footnote{Allowing heterogeneous attention costs does not change the analysis, since participation depends only on the ratio $v/c$. We therefore normalize $c$ across viewers and attribute heterogeneity to $v$.} The payoff from consuming content in other categories is normalized to zero to keep the model tractable and isolate our core mechanism. Section~\ref{sec:dis} discusses how this assumption can be relaxed.%

\paragraph{Sponsored Advertising}

The platform profits from viewer exposure to \emph{ads} that are displayed alongside content. We assume the platform earns a constant marginal revenue $z>0$ per unit of attention allocated to advertisements. This revenue can be interpreted as the equilibrium winning bid in an auction among advertisers with homogeneous values for viewer impressions.%

When the viewer spends a unit of attention on ads, we also normalize the benefit to zero, so that the net payoff is simply $-c$. Viewers allocate attention across content and advertisements. Preferences are additive across units of attention, so her total utility is the sum of the net payoffs generated by all units she spends.

\paragraph{Information Structure}

The platform observes users' horizontal categories $i,j$ but not their vertical types $\theta,v$. We make these assumptions because a creator's horizontal category is the topic on which they post content, which machine learning techniques can easily identify. The platform can also easily learn a viewer's horizontal category (favorite topic) after only a few visits by simply tracking the user's engagement with content from different topics. On the other hand, a creator's ability and a viewer's valuation for quality are considerably more difficult to infer. Although social media platforms collect extensive behavioral data, these data cannot fully capture the residual heterogeneity in viewers' intrinsic preferences or the transient shocks—such as mood, attention, or available leisure time—that affect their willingness to engage with content. Similarly, a creator's effective productivity may fluctuate over time due to changing ideas, or external circumstances, preserving an element of private information even in the long run.%
For example, many creators post content from their daily lives. When they are traveling, it may be easier for them to post exciting (high quality) content.%
Finally, the private-information model allows us to study how legal and regulatory constraints on personalization affect equilibrium outcomes. Even when the platform observes characteristics that are informative about users’ types, such constraints may restrict its ability to condition recommendations on them.

The platform observes the quality $q$ of each piece of content but cannot identify the components $\theta$ and $e$. This mirrors the real-world practice: quickly conducting experiments of various scales on randomly selected viewers and inferring content performance from engagement data.%

\subsection{Algorithm Design Problem}

Viewers and creators do not interact directly. Instead, the platform intermediates both sides through a recommendation algorithm that allocates viewers' attention across content of different categories and quality levels, as well as advertisements. By controlling how attention is allocated, the algorithm essentially transforms digital content into a \emph{club good} --- excludable (since access can be selectively restricted) yet non-rival (since one viewer's engagement does not diminish others'). Throughout, we use the term ``algorithm'' to refer to the induced allocation rule rather than the computational procedure that implements it.%

\paragraph{Examples}

Before moving further, we discuss users' strategic interaction under a few stylized recommendation algorithms. The purpose is to show how attention allocation governs and links users' incentives on both sides.%

First, consider an algorithm that recommends to each viewer only content from her preferred category, regardless of the quality. At the same time, to generate advertising revenue, the platform mixes some ads in the recommended feed. While such an algorithm maximizes relevance from the viewer's perspective, it provides no incentive for creators to exert effort, and the equilibrium quality collapses. Anticipating the lack of quality, viewers anticipate no utility from either content or ads, and thus choose not to engage.%

Next, suppose the algorithm recommends only preferred-category content to each viewer, subject to a minimum quality threshold for exposure. This policy partially restores incentives on the supply side as high-ability creators exert just enough effort to meet the threshold, while low-ability creators opt out. With meaningful qualities on the platform,%
high-valuation viewers choose to consume the content-ads mix. In turn, viewers' engagement encourages more creators to participate. Network effects arise in this case, albeit confined to individual categories. By adjusting the quality threshold, the algorithm%
endogenously determines the magnitude of these within-category network effects.%

Finally, suppose the algorithm recommends every preferred-category content item that exceeds a minimum quality threshold, and promotes non-preferred-category content to viewers only if its quality exceeds a higher threshold. Creators now expect an even larger audience at high quality levels, and those with high ability will exert more effort. While such non-preferred-category content  lowers the utility of viewers, the heightened effort from creators indirectly benefits those viewers who \emph{are} interested in those categories.%

The last case features cross-category spillovers generated by the algorithm: a viewer's attention affects the welfare of viewers in other categories. The key is that attention from both within- and cross-category viewers is valued by creators and can therefore provide incentives. This example illustrates how, by precisely controlling who is exposed to whose content, the algorithm endogenously determines the scope and magnitude of network effects on a social media platform. In other words, both the direction and the strength of network effects are artifacts of design, which can be leveraged to spur content production and ultimately increase advertising revenue.\footnote{The mechanism only require some creators to value some cross-category exposure. Also, if viewers derive positive utility from consuming some content outside their preferred category, cross-category attention becomes a more appealing incentive instrument for the platform. See Section \ref{sec:dis} for details.}

\paragraph{Recommendation Algorithm}

The preceding examples are useful for building intuition, but they represent only special cases of feasible recommendation rules. In practice, the platform implements personalized recommendations that condition on users' characteristics, selectively leveraging network effects. To characterize the optimal algorithm without loss of generality, we invoke the Revelation Principle \citep{myerson1986multistage} and look for the optimal direct mechanism.%
An algorithm is formulated as a (direct) mechanism that elicits users' information and distributes content and ads, while respecting users' incentive compatibility and individual rationality. Formally, an algorithm $\mathscr{A}\equiv\lt(\{e_i\}_{i\in\mathcal{N}}, \{x_{ij}\}_{i,j\in\mathcal{N}}, \{a_j\}_{j\in\mathcal{N}}\rt)$ consists of three sets of functions: 
\vspace{-0.3cm}
\begin{align*}
&e_i:\Theta\lmt\mathbb{R}_+,\\
&x_{ij}:\Theta\times V\lmt[0,1],\\
&a_j:V\lmt\mathbb{R}_+,
\end{align*}
where, given reported types, $e_i(\theta)$ denotes the recommended effort for a $(\theta,i)$-creator, $x_{ij}(\theta,v)$ the probability that a $(v,j)$-viewer receives recommendation on content produced by a $(\theta,i)$-creator, and $a_j(v)$ the total mass of ads mixed in the recommendation bundle for a $(v,j)$-viewer. For notational simplicity, define:
\begin{equation}\label{def:total-A}
X_i(\theta)\equiv\sum_j\beta_j\int_Vx_{ij}(\theta,v)\dd G_j(v)
\end{equation}
as the total amount of attention a $(\theta,i)$-creator receives, determined by $\{x_{ij}\}_{i,j}$. Then \emph{incentive compatibility} (IC) requires, for any $\theta,\theta',i$:
\begin{eqnarray}
&&\hspace*{-2em}u(X_i(\theta))-e_i(\theta)\gs u(X_i(\theta'))-\frac{\theta'e_i(\theta')}{\theta},\label{eq:IC-p}
\end{eqnarray}
and for any $v,v',j$:
\begin{eqnarray}
&&\sum_i\alpha_i\int_{\Theta}x_{ij}(\theta,v)\Big(v\theta e_i(\theta)\mathbbm{1}\{i=j\}-c\Big)\dd F_i(\theta)-ca_j(v)\nn\\
&\gs&\sum_i\alpha_i\int_{\Theta}x_{ij}(\theta,v')\Big(v\theta e_i(\theta)\mathbbm{1}\{i=j\}-c\Big)\dd F_i(\theta)-ca_j(v').\label{eq:IC-c}
\end{eqnarray}
Intuitively, constraint \eqref{eq:IC-p} says that a $(\theta,i)$-creator, who can produce content with quality $\theta'e_i(\theta')$ and receive attention according to $\{x_{ij}(\theta',v)\}_{v,j}$, finds it optimal to truthfully report $\theta'=\theta$ and follow the recommended effort $e_i(\theta)$. Similarly, constraint \eqref{eq:IC-c} states that a $(v,j)$-viewer  must prefer recommendation $\{x_{ij}(\theta,v)\}_{\theta,i}$ mixed with $a_j(v)$ amount of ads to the recommendation $\{x_{ij}(\theta,v')\}_{\theta,i}$ mixed with $a_j(v')$ amount of ads, intended for any other $v'$. \emph{Individual rationality} (IR) requires that any creator and viewer must receive payoff greater than their outside option, normalized to zero:
\begin{eqnarray}
&&u(X_i(\theta))-e_i(\theta)\gs0,\ \forall\ \theta,i,\label{eq:IR-p}\\
&&\sum_i\alpha_i\int_{\Theta}x_{ij}(\theta,v)\Big(v\theta e_i(\theta)\mathbbm{1}\{i=j\}-c\Big)\dd F_i(\theta)-ca_j(v)\gs0,\ \forall\ v,j.\label{eq:IR-c}
\end{eqnarray}

The platform maximizes the income from the ads market, that is:
\begin{eqnarray}
\max_{\mathscr{A}}\ z\sum_j\beta_j\int_Va_j(v)\dd G_j(v)\quad \mbox{ s.t.\ }\quad \eqref{eq:IC-p}, \eqref{eq:IC-c}, \eqref{eq:IR-p}, \eqref{eq:IR-c}.\nn
\end{eqnarray}

\subsection{Model Discussion}\label{sec:dis}

We discuss some simplifying assumptions that allow us to isolate the core insight.%

First, we assume that any positive-quality content in a viewer’s preferred category generates positive utility (i.e., $\lo{v}>0$), whereas cross-category content and advertisements generate zero payoff regardless of quality. This parsimoniously captures heterogeneity in preferences over digital content on the platform, including both organic and sponsored material. Under this formulation, a viewer’s private information is one-dimensional, which keeps the mechanism design problem tractable. Another advantage of this modeling approach is that it isolates the production-incentive role of cross-category exposure. Since cross-category content does not directly raise viewer utility in the baseline model, whenever the platform exposes viewers to such content, the rationale must come from its effect on creator incentives and the resulting quality improvements elsewhere on the platform.\footnote{The analysis is essentially unchanged if viewers receive a positive benefit from consuming content outside their preferred category, as long as it remains below the attention cost $c$, so that consuming all content is still suboptimal. In that case, cross-category attention becomes less costly for the platform and hence a more attractive incentive instrument.}

Second, we summarize a creator’s exposure benefit by total exposure and abstract from the exact composition of the audience generating that exposure. The key economic force is that viewer attention has incentive value for creators, whether through psychological gratification, career concerns, or monetizable traffic generated outside the platform. What matters for the platform’s design problem is therefore not that every viewer’s attention is equally valuable to a creator, but that attention from a broad set of viewers, including viewers for whom the content is not the best match, can still motivate content production. This is natural since content is an experience good: a viewer typically spends some attention before learning the content’s quality or relevance, and that engagement may already generate visibility, traffic, or other exposure value for the creator. We can therefore allow a category-$i$ creator to value within-category and cross-category exposure differently:
$$u\lt(\beta_i \int_V x_{ii}(\theta,v)\,\dd G_i(v)+\delta \sum_{j\neq i}\beta_j \int_V x_{ij}(\theta,v)\,\dd G_j(v)\rt).$$
The same attention-allocation trade-off remains, with $\delta$ changing the relative cost of using cross-category exposure as an incentive instrument.\footnote{More generally, the weight on cross-category attention could vary with creator and viewer categories, and some cross-category exposure may even carry negative value for creators, for instance by generating negative reviews. This would introduce an additional matching problem in using cross-category exposure as an incentive instrument. In particular, the optimal algorithm should avoid cross-category exposure that generates negative value for creators.%
}

Third, we assume that viewers cannot consume only their preferred-category content while avoiding advertisements and other content in the feed. This reflects the idea that both content and ads are, at least partially, experience goods: viewers derive payoffs only after spending attention and therefore cannot perfectly cherry-pick without incurring attention costs. This assumption is common in the literature (see, e.g., \cite{de2023social}), and it is particularly appropriate for static content such as short text or photos, where by the time viewers determine whether they like the content, the attention cost has already been largely sunk. In this sense, evaluation and consumption are nearly indistinguishable. Longer videos or texts differ somewhat because viewers may attempt to evaluate content after a few seconds. However, such evaluation remains imperfect, as evidenced by the pervasiveness of clickbait. Moreover, some platforms require viewers to watch advertisements for a minimum duration before skipping, such as non-skippable ads on YouTube and in-feed ads on TikTok. As long as viewers cannot screen content without incurring attention costs, the no-cherry-picking assumption is a reasonable approximation.\footnote{Suppose the platform shows a viewer a mass $a>0$ of ads if the viewer cannot cherry pick at all. Now suppose the user can detect and skip ads with half the attention cost, then the platform can just raise the mass of ads to $2a$ to achieve the same outcome.}

Fourth, we abstract from a hard attention-capacity constraint for viewers. This focuses the analysis on the opportunity cost of attention rather than on a fixed budget. The platform’s extractable attention from each viewer is determined endogenously by the quality and composition of the recommended feed. Adding a capacity constraint would introduce an additional rationing margin, analogous to limited-liability or budget constraints in mechanism design, without changing the central trade-off between monetizing attention and using attention to strengthen production incentives.

\section{Complete Information}\label{sec:com}

As a benchmark, suppose the platform has complete information about both creators' and viewers' types. This exercise allows us to set aside the screening issues and focus on how attention allocation endogenously shapes the network effect and thereby maximizes advertising revenue. 

In this case, the platform's problem can be written as follows:
\begin{eqnarray}
\max_{\mathscr{A}}\ z\sum_j\beta_j\int_Va_j(v)\dd G_j(v)\quad \mbox{ s.t.\ }\quad \eqref{eq:IR-p}, \eqref{eq:IR-c}.\nn
\end{eqnarray}
The algorithm serves as a recommendation mechanism in which obedience constraints ensure that users voluntarily follow the prescribed attention and effort choices.

\subsection{Simplifying the Design Problem}
We begin with two observations of constraints \eqref{eq:IR-p} and \eqref{eq:IR-c}. First, an optimal mechanism does not leave a positive measure of viewers' participation constraints slack; otherwise, it could profitably increase their ads exposure. Second, for any $(\theta,i)$-creator, it is without loss of generality to bind constraint \eqref{eq:IR-p}; otherwise, the algorithm can extract higher $e_i(\theta)$ to further entertain the viewers who watch and value this creator's content, and ultimately squeeze in more ads. These insights are summarized in the following lemma.

\begin{lemma}[Binding Constraints]\label{lem:binding}\ \\
Under complete information, there exists an optimal algorithm such that:
\begin{enumerate}[label=(\roman*)]\tight
    \item the creator participation constraint \eqref{eq:IR-p} binds for every $(\theta,i)$;
    \item the viewer participation constraint \eqref{eq:IR-c} binds for every $(v,j)$.
\end{enumerate}
\end{lemma}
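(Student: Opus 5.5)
The plan is to start from an arbitrary optimal algorithm $\mathscr{A}=(\{e_i\},\{x_{ij}\},\{a_j\})$ of the complete-information problem. I will then modify it in two steps, each of which keeps feasibility and does not lower revenue. I will not need to assume existence of an optimum beyond what the statement already presupposes. If needed, the same construction applied to any feasible algorithm shows that binding versions weakly dominate it. Because the problem imposes only \eqref{eq:IR-p} and \eqref{eq:IR-c}, with no cross-type IC constraints, every change can be made type by type.

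\emph{Step 1 (creators).} For each $(\theta,i)$, I replace $e_i(\theta)$ by $\tilde e_i(\theta)\equiv u(X_i(\theta))$ and leave $x_{ij}$ unchanged. Since \eqref{eq:IR-p} held originally, $\tilde e_i(\theta)\ge e_i(\theta)\ge 0$, and $\tilde e_i(\theta)$ is finite because $X_i(\theta)\le\beta$. Thus \eqref{eq:IR-p} now binds for every $(\theta,i)$, and $X_i$ is unaffected. On the viewer side, the left-hand side of \eqref{eq:IR-c} for a $(v,j)$-viewer changes only through the terms $x_{jj}(\theta,v)\,v\theta e_j(\theta)$. These terms are weakly larger after the change because $v\ge\lo v>0$, $\theta\ge0$ and $x\ge0$. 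Hence \eqref{eq:IR-c} still holds, and $a_j$ and revenue are unchanged.

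\emph{Step 2 (viewers).} With the new efforts fixed, I define for each $(v,j)$ the content surplus
\[
S_j(v)\equiv\sum_i\alpha_i\int_\Theta x_{ij}(\theta,v)\big(v\theta \tilde e_i(\theta)\mathbbm{1}\{i=j\}-c\big)\dd F_i(\theta).
\]
I then set $\tilde a_j(v)\equiv S_j(v)/c$. Feasibility of the original $a_j(v)\ge0$ gives $S_j(v)\ge c\,a_j(v)\ge0$, so $\tilde a_j(v)\ge a_j(v)$ and $\tilde a_j(v)\in\mathbb{R}_+$. With this choice \eqref{eq:IR-c} binds exactly. Raising ads does not affect creators, since $X_i$ depends only on $x_{ij}$, so \eqref{eq:IR-p} still binds. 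Revenue $z\sum_j\beta_j\int \tilde a_j\,\dd G_j\ge z\sum_j\beta_j\int a_j\,\dd G_j$, so the modified algorithm is also optimal and satisfies both (i) and (ii).

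The main obstacle is a technical one. I must make sure the modified functions are admissible, meaning measurable and integrable, so that the objective is well defined. $\tilde e_i=u\circ X_i$ is measurable, and because $u$ is increasing with $X_i\le\beta$, it is bounded by $u(\beta)$. $\tilde a_j$ is measurable as an integral of measurable functions, and it is bounded by $(\hi v\,\hi\theta\,u(\beta)\alpha)/c$. If the paper's notion of optimality is only up to null sets, the argument goes through pointwise in the same way, and it gives the stronger "for every" version claimed in the lemma. I would also remark, though it is not needed, that any optimum must satisfy (ii) almost everywhere, since otherwise Step 2 strictly raises revenue.
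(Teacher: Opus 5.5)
Your proof is correct and follows the same two-step logic as the paper: first raise each creator's effort to $u(X_i(\theta))$, which weakly raises every viewer's content surplus because $\lo v>0$, and then raise ads until each viewer's participation constraint binds. Your constructive, type-by-type version (creators first, then viewers) is slightly cleaner than the paper's positive-measure contradiction argument, since it delivers binding for every type directly rather than only almost everywhere.
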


Now we substitute out $a_j(v)$ using the viewers' binding IR constraints, omit the constant multiplier $\frac{z}{c}$, and rewrite the problem as:
\begin{align}
\max_{\substack{\{e_i(\cdot)\}_i,\\ \{x_{ij}(\cdot,\cdot)\}_{i,j}}}&
\sum_{i,j}\alpha_i\beta_j\int_{\Theta}\int_Vx_{ij}(\theta,v)\Big(\underbrace{v\theta e_i(\theta)\mathbbm{1}\{i=j\}-c}_{\mbox{\footnotesize Surplus}}\Big)\dd G_j(v)\dd F_i(\theta)\label{eq:obj}\\
\mbox{s.t. }& e_i(\theta)=u(X_i(\theta)),\ \forall\ \theta,i,\label{eq:eff}\tag{IR-C}
\end{align}

The objective in problem \eqref{eq:obj} is the total viewer surplus from watching content only, which is then extracted by the platform using ads. For each combination of $(\theta,i,v,j)$, $v\theta e_i(\theta)\mathbbm{1}\{i=j\}-c$ represents viewer $(v,j)$'s net payoff from watching creator $(\theta,i)$'s content. A positive net payoff leaves room for the platform to show more ads. Constraint \eqref{eq:eff} pins down a creator's effort. Rigorously speaking, this is a relaxed problem of the platform's original optimization. To ensure that the solution to the relaxed problem satisfies the non-negativity constraint $a_j(v)\gs 0$ for advertising in the original problem, we require:
\begin{equation}
\sum_i\alpha_i\int_{\Theta}x_{ij}(\theta,v)\Big(v\theta e_i(\theta)\mathbbm{1}\{i=j\}-c\Big)\dd F_i(\theta)\gs0,\ \forall\ v,j.\label{eq:pos}\tag{NN-C}
\end{equation}
This is necessary because, after the replacement, $a_j(v)$ is no longer an explicit variable in the optimization. In the remainder of the paper, we assume $c$ is sufficiently small so that this constraint never binds.%

To understand the platform's trade-off in attention allocation, we examine the marginal impact of $x_{ij}(\theta,v)$%
on the platform's profit. Substituting constraint \eqref{eq:eff} into the objective of problem \eqref{eq:obj}, we decompose the marginal effect of $x_{ij}(\theta,v)$ into two terms below.

\begin{definition}[Network-Effect-Adjusted Surplus]\ \\
In the optimal algorithm, the value of $x_{ij}(\theta,v)$ is determined by the sign of
\begin{eqnarray}
\underbrace{v\theta e_i(\theta)\mathbbm{1}\{i=j\}-c}_{\text{\emph{Local Surplus}}}\ +\ \underbrace{u'(X_i(\theta))\theta\beta_i\int_Vx_{ii}(\theta,v')v'\dd G_i(v')}_{\text{\emph{Network Effect}}}.\label{eq:decom}
\end{eqnarray}
In particular, $x_{ij}(\theta,v)=1$ if expression \eqref{eq:decom} is positive, 
$x_{ij}(\theta,v)=0$ if it is negative, and any value in $[0,1]$ is optimal if it is zero.
\end{definition}

The derivative in expression \eqref{eq:decom} captures two effects when recommending type-$(\theta,i)$ creator's content to viewer $(v,j)$. The first term captures the direct pointwise surplus of $(v,j)$-viewer from watching content produced by $(\theta,i)$-creator. This effect is \emph{local}: it depends only on the creator-viewer pair in question and is independent of other allocations. When positive, this surplus relaxes the participation constraint of viewer $(v,j)$ and thus allows for more inserted ads. The second term captures the indirect network effect.%
By assigning attention $x_{ij}(\theta,v)=1$, the creator $(\theta,i)$ receives $u'(X_i(\theta))$ marginal utility, which in turn allows the platform to extract more effort. The increased effort then boosts the quality of content by a factor of $\theta$ and benefits all viewers in the same category%
who spend attention on it, eventually enabling more ads to be blended in for these viewers. Due to this non-negative network effect,%
attention can improve profit even if the local surplus alone is negative, provided the production-side externality is strong enough. In sum, pointwise \emph{local} surplus maximization is no longer the right principle to follow.

The optimal algorithm is challenging to solve because problem \eqref{eq:obj} is inherently high-dimensional: with $n$ categories, the algorithm consists of $n^2+2n$ functions. Moreover, due to network effects,%
categories are interlinked, and one cannot simply maximize viewer surplus within each category in isolation.%
Fortunately, we can reduce the dimensionality of the platform's problem by exploiting the symmetry of viewer preferences and the separability of the objective function.

First, consider cross-category attention $x_{ij}$ where $j\ne i$. Because these viewers derive a constant net utility of $-c$ from such non-preferred content, and because digital content is non-rival, the platform's profit and the creator's incentives are independent from the specific composition of the audience. We can therefore aggregate the attention from all those ``mismatched'' viewers into a single variable called creator $(\theta,i)$'s \emph{extended reach},%
the mass of attention from outside category $i$:
\begin{equation}
R_i(\theta)\equiv\sum_{j\ne i}\beta_j\int_Vx_{ij}(\theta,v)\dd G_j(v) \in [0,\beta-\beta_i],\label{eq:reach_def}
\end{equation}
so that total attention received can be recast as%
$X_i(\theta)=\beta_i\int_Vx_{ii}(\theta,v)\dd G_j(v)+R_i(\theta).$

Second, the constant marginal cost of attention implies that the cost of incentivizing one creator is separable from the cost of incentivizing another. Since the platform extracts the full viewer surplus via binding participation constraints, the total ads profit is proportional to the sum of the viewer surplus generated by creators in each category. Thus, the aggregate maximization problem decomposes into $n$ additive sub-problems, one for each creator category. We can thus characterize the optimal schedule $\{e_i(\cdot), x_{ii}(\cdot), R_i(\cdot)\}$ for each category $i$ independently, as stated in \autoref{lem:sep}.

\begin{lemma}[Separability]\label{lem:sep}\ \\
Consider the sub-problem for each category $i\in\mathcal{N}$,
\begin{eqnarray}
\hspace*{-2em}\max_{e_i,x_{ii},R_i}&&\beta_i\int_{\Theta}\int_V x_{ii}(\theta,v)\Big(v\theta e_i(\theta)-c\Big)\dd G_i(v)\dd F_i(\theta)-c\int_{\Theta}R_i(\theta)\dd F_i(\theta)\label{eq:sep}\\
\hspace*{-2em}\mbox{\emph{s.t.}}&&e_i(\theta)=u\lt(\beta_i\int_V x_{ii}(\theta,v)\dd G_i(v)+R_i(\theta)\rt), \ \forall\ \theta,i\nn
\end{eqnarray}
and denote the solution $\{\hat{e}_i,\hat{x}_{ii},\hat{R}_i\}_i$. There exists $\{e_i^*,x_{ij}^*\}_{i, j}$ such that
\begin{align*}
e_i^*(\theta)=\hat{e}_i(\theta), \quad x_{ii}^*(\theta,v)=\hat{x}_{ii}(\theta,v),\quad \sum_{j\ne i}\beta_j\int_Vx_{ij}^*(\theta,v)\dd G_j(v)=\hat{R}_i(\theta), \forall\ \theta,i,v,
\end{align*}
and solves problem \eqref{eq:obj}. 
\end{lemma}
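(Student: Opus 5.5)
The plan is to show that problem \eqref{eq:obj} is equivalent to the sum over $i$ of the sub-problems \eqref{eq:sep}, and that any collection of sub-problem solutions can be realized by actual cross-category attention functions $x_{ij}$, $j\neq i$.

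\textbf{Step 1: rewrite the objective.} Fix any feasible $\{e_i,x_{ij}\}$ for \eqref{eq:obj} and define $R_i$ as in \eqref{eq:reach_def}. Split the objective into diagonal terms ($j=i$) and off-diagonal terms ($j\neq i$). Every off-diagonal term has integrand $-c\,x_{ij}(\theta,v)$. Exchanging the order of integration (Fubini, since the integrand is bounded), the off-diagonal part equals
\[
-c\sum_i\alpha_i\int_\Theta R_i(\theta)\,\dd F_i(\theta).
\]
The diagonal part is $\sum_i\alpha_i\beta_i\int\int x_{ii}(v\theta e_i-c)$. By the decomposition $X_i(\theta)=\beta_i\int_V x_{ii}(\theta,v)\dd G_i(v)+R_i(\theta)$, constraint \eqref{eq:eff} becomes exactly the constraint in \eqref{eq:sep}. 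Hence the value of \eqref{eq:obj} at $\{e_i,x_{ij}\}$ equals $\sum_i\alpha_i\,\Pi_i(e_i,x_{ii},R_i)$, where $\Pi_i$ is the objective of \eqref{eq:sep}. Each $(e_i,x_{ii},R_i)$ is feasible for sub-problem $i$, and $R_i\in[0,\beta-\beta_i]$. Because the sub-problems share no variables, this gives
\[
\text{value of }\eqref{eq:obj}\ \ls\ \sum_i\alpha_i\,\Pi_i(\hat e_i,\hat x_{ii},\hat R_i).
\]

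\textbf{Step 2: realize the sub-problem solutions.} Given $\hat R_i(\theta)\in[0,\beta-\beta_i]$, set, for every $j\neq i$ and every $v$,
\[
x^*_{ij}(\theta,v)\equiv\frac{\hat R_i(\theta)}{\beta-\beta_i}\in[0,1].
\]
Then
\[
\sum_{j\ne i}\beta_j\int_V x^*_{ij}(\theta,v)\,\dd G_j(v)=\hat R_i(\theta).
\]
Take $x^*_{ii}=\hat x_{ii}$ and $e_i^*=\hat e_i$. The constraint \eqref{eq:eff} holds because $X_i^*(\theta)$ equals the argument of $u$ in \eqref{eq:sep}. By Step 1 this allocation attains the upper bound, so it is optimal in \eqref{eq:obj}. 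Measurability is inherited from $\hat R_i$.

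\textbf{Main obstacle.} The argument is mostly bookkeeping. The one point that needs care is that the domain restriction $R_i\in[0,\beta-\beta_i]$ written into the sub-problem is exactly the image of the map $\{x_{ij}\}_{j\neq i}\mapsto R_i$. Both directions are needed: Step 1 uses that every feasible allocation produces such an $R_i$, and Step 2 uses the uniform construction to reach any such $R_i$. This works because viewers' payoff from cross-category content is the constant $-c$, independent of $v$ and $j$, so the composition of the audience is irrelevant.

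Two further remarks:
\begin{enumerate}[label=(\roman*)]\tight
    \item Existence of the sub-problem solutions is assumed in the statement.
    \item Constraint \eqref{eq:pos} is not an issue, since the paper maintains that $c$ is small enough that it never binds.
\end{enumerate}
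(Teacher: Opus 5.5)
Your proof is correct and follows essentially the same route as the paper, which also spreads $\hat R_i$ uniformly across cross-category viewers via $x^*_{ij}(\theta,v)=\hat R_i(\theta)/(\beta-\beta_i)$ and rewrites the objective of \eqref{eq:obj} as $\sum_i\alpha_i$ times the category-$i$ sub-objective. The only difference is presentational: the paper argues optimality by contradiction, while you give a direct upper bound. Your version is slightly cleaner, because it states explicitly that every feasible allocation yields $R_i\in[0,\beta-\beta_i]$ and that every such $R_i$ can be realized.
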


Although summing the objective functions in \eqref{eq:sep} across categories recovers the platform's total viewer surplus, the objective for a given category $i$ does not correspond to viewer surplus in category $i$. Rather, it equals all viewers' total surplus from consuming content generated by creator category $i$. The first term captures the surplus%
of within-category viewers, while the second term accounts for that of cross-category viewers (incurring cost $c$ without deriving utility).%
This representation disentangles the roles of $e_i, x_{ii}$, and $R_i$ in the objective function and the constraints, rendering the original optimization problem additively separable across categories.%

Notice that the extended reach%
does not uniquely determine the underlying attention allocation $\{x_{ij}^*\}_{j\neq i}$, as what matters for optimality is the induced aggregate exposure $R_i^*(\theta)$ of each creator. Any $\{x_{ij}^*\}_{j\neq i}$ that generates the same $R_i^*(\theta)$ yields the same incentives and the same objective value, and is therefore optimal. As a result, the lemma characterizes a class of optimal policies rather than a unique attention assignment across categories.

\subsection{Optimal Design under Complete Information}

We now characterize the optimal allocation when recommendations condition on users' categories and vertical types. A key insight from problem \eqref{eq:sep} is that within-category attention $x_{ii}$ and creator effort $e_i$ are complementary. If a creator reaches only a subset of viewers in his category, the participation constraint limits his effort and thus content quality. Expanding within-category attention relaxes this constraint, inducing higher effort and improving quality. Importantly, the resulting quality gains benefit not only newly reached viewers but also those already paying attention. This%
feedback between attention and effort generates increasing returns to within-category attention.

\begin{proposition}[Algorithm: Complete Information]\label{prop:complete}\ \\
For each category $i$, the optimal algorithm is characterized by an increasing total attention function $X_i(\cdot)$ such that:
$$x_{ii}(\theta,v)=\mathbbm{1}\lt\{G_i(v)\gs 1-\tfrac{X_i(\theta)}{\beta_i}\rt\},\quad R_i(\theta)=(X_i(\theta)-\beta_i)^+,\quad e_i(\theta)=u(X_i(\theta)).$$
Moreover, there exist cutoff abilities $0<\theta_i^l\ls\theta_i^m\ls\theta_i^h$ such that:
$$X_i(\theta)\lt\{\begin{array}{ll}
=0 & \mbox{ if }\theta<\theta_i^l,\\
\in(0,\beta_i] & \mbox{ if }\theta_i^l\ls\theta\ls\theta_i^m,\\
\in(\beta_i,\beta) & \mbox{ if }\theta_i^m<\theta<\theta_i^h,\\
=\beta & \mbox{ if }\theta\gs\theta_i^h.\end{array}\rt.$$
\end{proposition}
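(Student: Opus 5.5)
The plan is to work category by category, using \autoref{lem:sep}. Under complete information there are no incentive constraints, so the sub-problem \eqref{eq:sep} separates further into a pointwise problem for each ability $\theta$. In that pointwise problem the platform chooses $x_{ii}(\theta,\cdot)$ and $R_i(\theta)$, and the binding (IR-C) pins down $e_i(\theta)=u(X_i(\theta))$. I will therefore reparametrize by the total attention $X=X_i(\theta)\in[0,\beta]$ and proceed in three steps.

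\textbf{Step 1 (inner problem).} Fix $\theta$ and $X$, so that effort $u(X)$ is fixed. The platform then allocates the mass $X$ across within-category and cross-category viewers. Each unit given to a within-category viewer of valuation $v$ contributes $v\theta u(X)-c$, which exceeds the $-c$ contributed by a unit of extended reach and is increasing in $v$. Hence the optimal allocation fills the top of $G_i$ first: $x_{ii}(\theta,v)=\mathbbm{1}\{G_i(v)\gs 1-\min(X,\beta_i)/\beta_i\}$ and $R_i=(X-\beta_i)^+$. These are the displayed forms in the proposition, and $e_i(\theta)=u(X_i(\theta))$ is just (IR-C). Substituting gives a reduced value
$$\Phi_i(X,\theta)=\theta\,u(X)\,S_i(X)-cX,\qquad S_i(X)\equiv\beta_i\int_{G_i^{-1}(1-\min(X,\beta_i)/\beta_i)}^{\hi v} v\,\dd G_i(v).$$
Here $S_i$ is continuous and nondecreasing in $X$, and it is constant for $X\gs\beta_i$.

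\textbf{Step 2 (monotonicity).} The only $\theta$-dependence of $\Phi_i$ is the term $\theta\cdot u(X)S_i(X)$, and $u(X)S_i(X)$ is nondecreasing in $X$. So $\Phi_i$ has increasing differences in $(X,\theta)$. Since $\Phi_i$ is continuous on the compact set $[0,\beta]$, a maximizer exists. Topkis's theorem then makes the argmax correspondence increasing in the strong set order, and I will select, say, the largest maximizer to obtain an increasing $X_i(\cdot)$. I will deliberately avoid first-order conditions here. Because of the attention–effort complementarity, $\Phi_i$ need not be concave in $X$, so $X_i(\theta)$ can jump, and a lattice argument is the robust route. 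This is the step I expect to be the main obstacle, in particular handling ties and the possible nonconvexity cleanly.

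\textbf{Step 3 (cutoffs).} With $X_i(\cdot)$ increasing, I define
$$\theta_i^l=\inf\{\theta:X_i(\theta)>0\},\quad \theta_i^m=\sup\{\theta:X_i(\theta)\ls\beta_i\},\quad \theta_i^h=\inf\{\theta:X_i(\theta)=\beta\}.$$
These definitions give the four regimes immediately, with the cutoffs allowed to exceed $\hi\theta$ when a regime is empty. It remains to show strict positivity $\theta_i^l>0$. Since $S_i(X)\ls\hi v X$ and $u(X)\ls u(\beta)$, we have $\Phi_i(X,\theta)\ls X\bigl(\theta\,\hi v\,u(\beta)-c\bigr)<0$ for every $X>0$ whenever $\theta<c/(\hi v\,u(\beta))$. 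So $X_i(\theta)=0$ on a neighborhood of $0$.

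Finally, I will assemble the per-category solutions via \autoref{lem:sep}. Any split of $R_i(\theta)$ across categories $j\ne i$ is admissible, and the non-negativity constraint \eqref{eq:pos} is maintained by the standing small-$c$ assumption.
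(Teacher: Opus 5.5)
Your route is the paper's: reduce each $\theta$ to a scalar choice of total attention $X$, fill the within-category audience from the top, show that $\theta\,u(X)S_i(X)-cX$ has increasing differences, apply Topkis, and define the cutoffs as infima. Your greedy inner problem at fixed $X$ and your weak-increasing-differences-plus-largest-selection version of Topkis are, if anything, tidier than the paper's use of the sign of \eqref{eq:decom} and a strictly positive cross partial. Your global bound giving $\theta_i^l>0$ is also fine.

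The gap is Step 3's claim that the definitions give the four regimes immediately. At $\theta_i^l$ they do not. The definition $\theta_i^l=\inf\{\theta:X_i(\theta)>0\}$ only yields $X_i(\theta)>0$ for $\theta>\theta_i^l$, whereas the statement asserts $X_i(\theta_i^l)>0$. Because $X_i$ is increasing, this endpoint claim is exactly the entry discontinuity the paper emphasizes: every active creator's attention is bounded away from zero. An $X_i$ that rises continuously from $0$ at $\theta_i^l$ is consistent with everything you establish.

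The missing idea is that the benefit of attention is $O(X^2)$ while its cost is $O(X)$. In the paper, this appears as the derivative of the reduced objective in $X$ being $-c$ at $X=0$ for every $\theta$. You can close the gap with your own bound by keeping $u(X)$ rather than replacing it with $u(\beta)$:
\begin{itemize}
\item You have $\Phi_i(X,\theta)\le X\bigl(\theta\,\overline{v}\,u(X)-c\bigr)$. Since $u$ is continuous with $u(0)=0$, there is $\varepsilon>0$ such that $\Phi_i(X,\theta)<0=\Phi_i(0,\theta)$ for all $X\in(0,\varepsilon)$ and all $\theta\le\overline{\theta}$.
\item Hence every positive maximizer is at least $\varepsilon$.
\item If $\theta_i^l<\overline{\theta}$, the limit of $X_i(\theta)$ as $\theta\downarrow\theta_i^l$ is therefore at least $\varepsilon$. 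By joint continuity of $\Phi_i$, this limit is a maximizer at $\theta_i^l$.
\item Your largest-maximizer selection then gives $X_i(\theta_i^l)\ge\varepsilon>0$.
\end{itemize}
The same right-continuity of the largest selection delivers $X_i(\theta_i^h)=\beta$. The value at $\theta_i^m$ is only a tie-breaking choice on a null set.
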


\begin{figure}[t]
   \begin{center}
    \subfigure[Attention]
    {\includegraphics[width=0.48\textwidth]{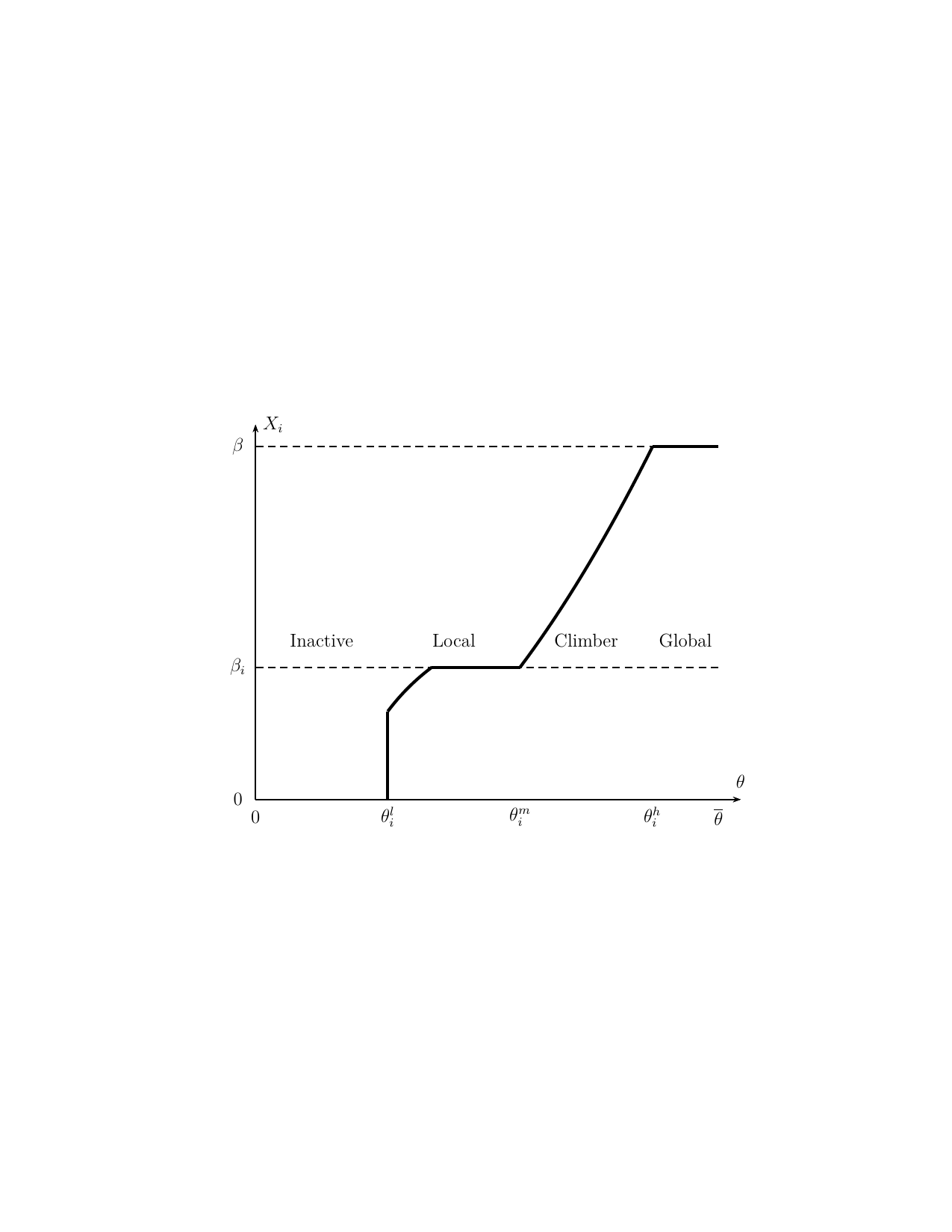}\label{fig:com-attention}}\hfill
    \subfigure[Surplus]
    {\includegraphics[width=0.48\textwidth]{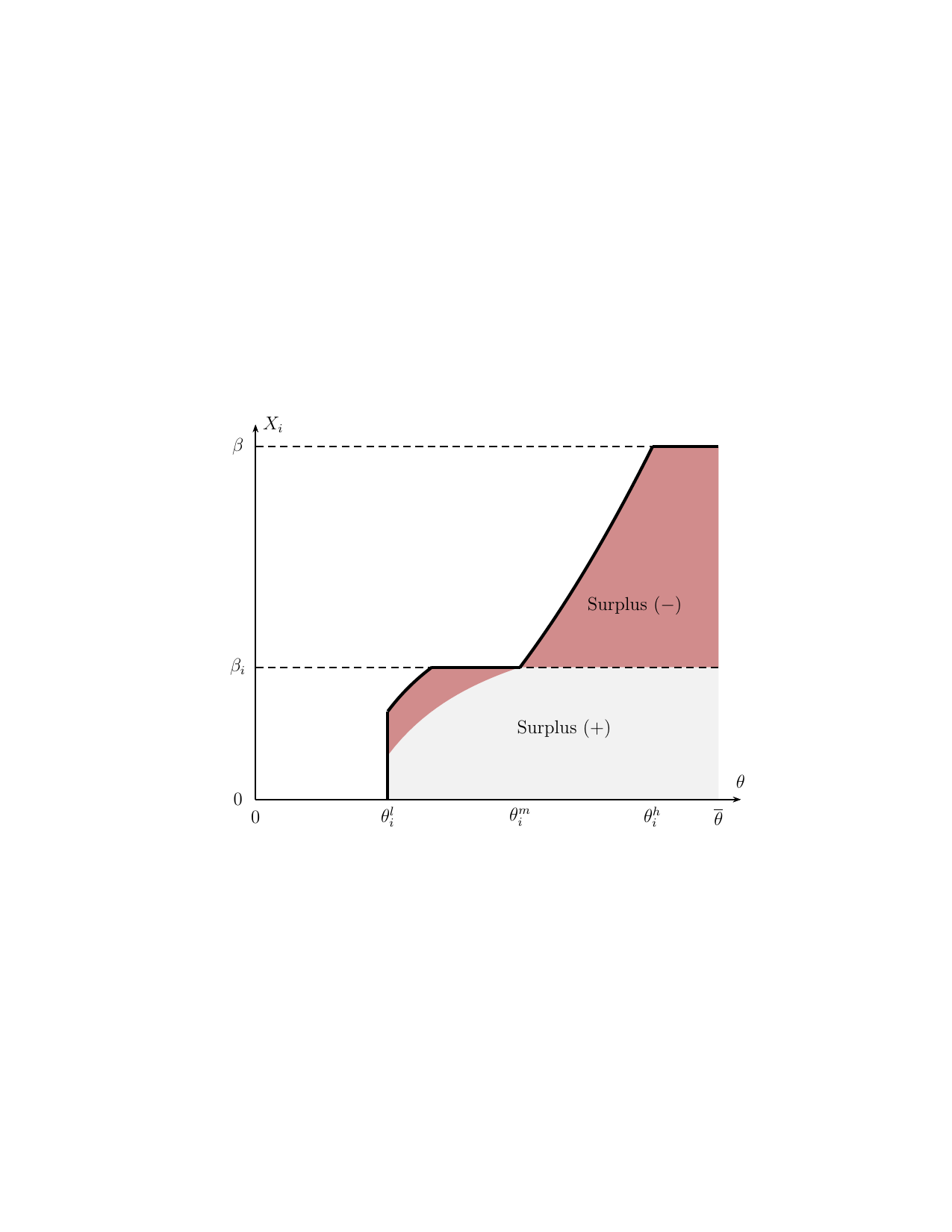}\label{fig:com-surplus}}
    \caption{Optimal Allocation and Viewer Surplus in Category $i$.}%
   \end{center}
    {\scriptsize Notes: (a) The total attention $X_i$ as a function of $\theta$. When below $\beta_i$, it consists of only within-category attention. The part above $\beta_i$ is the extended reach $R_i$. (b) The gray area denotes total attention allocation if only positive-surplus attention is recommended, and the red area represents negative-surplus attention that should be recommended after accounting for network effects.}
    \label{fig:algo-com}
\end{figure}

\autoref{prop:complete} shows that the optimal algorithm can be summarized by a single increasing total-attention function, which specifies how much attention each creator type receives on the path. The function is illustrated in \autoref{fig:com-attention}. Given this required amount of attention, the platform sources viewers’ attention according to a \emph{pecking order}: creators with ability just above the inclusion threshold receive attention only from high-value viewers in their own categories. As creator ability rises, the algorithm expands their reach to lower-value within-category viewers. Only sufficiently high-ability creators receive cross-category reach. Regardless of the creator's ability, the effort is designed to deplete creator surplus due to the binding participation constraint.%
This characterization naturally segments creators by ability:%

\begin{itemize}\tight
    \item \emph{Inactive}: types $\theta<\theta_i^l$, with zero effort and zero attention;
    \item \emph{Local Entertainers}: types $\theta_i^l\ls \theta\ls \theta_i^m$, with positive and increasing effort and attention only from own category;
    \item \emph{Ladder Climbers}: types $\theta_i^m<\theta<\theta_i^h$, with positive and increasing effort, full attention from own category, and increasing but partial reach outside own category;
    \item \emph{Global Influencers}: types $\theta\gs\theta_i^h$, with constant effort and full attention from all viewers.
\end{itemize}

We now unpack the four creator segments induced by the optimal algorithm.%
First, the optimal algorithm excludes creators with abilities below $\theta_i^l$. For these creators, any increment of exposure generates too little quality improvement, making it unprofitable to give up the opportunity to directly monetize viewer attention to incentivize them. In particular, a creator of ability $\theta$ is worth including only if there exists some attention level $X_i$ such that
\begin{equation}\label{cutoff:mb-mc}
\theta u(X_i)X_i\mathbb{E}\lt[v\big|G_i(v)\gs 1-\tfrac{X_i}{\beta_i}\rt]\ge cX_i.
\end{equation}
The left-hand side is the total benefit from allocating $X_i$ units of within-category attention to this creator: attention induces effort, effort raises content quality, and the resulting content generates viewer utility and hence additional attention. The right-hand side is the corresponding opportunity cost, namely, the monetization value of the same attention elsewhere on the platform. The cutoff $\theta_i^l$ is therefore the lowest ability for which this condition holds for some $X_i\in[0,\beta]$. The inequality also makes clear that the cutoff $\theta_i^l$ is lower when category $i$ has a larger viewer base $\beta_i$, or when the valuation distribution $G_i$ becomes more favorable. Intuitively, when a category contains a larger mass of high-value viewers, the creator's effort is more effective at generating viewer surplus and thus profits for the platform.%

Second, total attention $X_i(\theta)$ exhibits \emph{discontinuity} at $\theta_i^l$: if a creator receives attention, it must be bounded away from zero.%
This jump reflects the complementarity between creator effort and viewer attention within the same category: recommending content to an infinitesimal mass of viewers would generate vanishing utility for the creator and hence induces vanishing effort, and the resulting low quality cannot justify viewers' attention. Formally, in \eqref{cutoff:mb-mc}, the benefit from including a $\theta$-creator is $O(X_i^2)$, whereas the opportunity cost is $O(X_i)$.%
Hence, for sufficiently small $X_i$, the benefit is dominated by the cost, and the total attention must be bounded away from zero.

Third, the pecking-order principle determines how attention is allocated among active creators. Viewer attention differs in its opportunity cost as an incentive instrument. For a given creator, high-valuation viewers in his own category gain the most from his content, lower-valuation viewers in the same category gain less, and cross-category viewers receive no direct consumption value. The optimal algorithm therefore grants attention according to this ranking.

Specifically, a category-$i$ creator with ability $\theta\ls\theta_i^m$ receives attention $X_{i}(\theta)\ls \beta_{i}$; thus, it suffices to allocate attention only from high-valuation viewers within his own category, i.e., $G_i(v)\gs 1-\tfrac{X_i(\theta)}{\beta_i}$. As ability rises, the creator's audience expands to include progressively lower-valuation viewers in that category. This generates a \emph{nested} within-category allocation: higher-ability creators reach larger within-category audiences, and higher-valuation viewers watch a larger set of same-category content.

For sufficiently high-ability creators, within-category attention alone is no longer sufficient to provide the desired incentives. The platform therefore provides additional incentives through extended reach, $R_i(\theta)=X_i(\theta)-\beta_i$, represented in \autoref{fig:com-attention} by the vertical distance between $X_i(\theta)$ and $\beta_i$. Additional exposure raises effort through activating cross-category network effects, and higher ability makes the induced effort increase especially valuable. Thus $R_i(\theta)$ increases with $\theta$ and reaches its maximum at $\theta\gs\theta_i^h$.

The optimal algorithm directs viewers to pay \emph{excessive} attention to content with \emph{negative pointwise surplus}, i.e., $v\theta e_i(\theta)\mathbbm{1}\{i=j\}-c<0$. In \autoref{fig:com-surplus},%
the red area shows the excessive attention that seemingly generates negative surplus. From the platform's perspective, such distortions on the viewer side function as \emph{subsidies} to creators. The optimal mechanism subsidizes almost every participating creator to leverage both within- and cross-category network effects, and those subsidies are discriminatory across creators. When a creator barely qualifies for production, positive surplus is achieved only for very high-valued within-category viewers. Yet the algorithm subsidizes him by recommending the content to some lower-valued viewers within the category to take advantage of within-category network effects. As ability increases, higher content quality reduces the need for within-category subsidies, which eventually become unnecessary. Once ability becomes sufficiently high ($\theta>\theta_i^m$), the platform further subsidizes production by utilizing cross-category network effects, and the subsidy grows with ability. Overall, the subsidy magnitude is non-monotonic in creator ability, and the subsidy source differs across creator types, reflecting variation in both the benefit of incentivizing different creators and the cost of using different viewers’ attention to do so.%

\autoref{prop:complete} implies the increasing pattern of effort and quality in ability. Under complete information, the creator's participation constraint binds, so effort fully dissipates the utility from attention. Since total attention is increasing in ability, so is effort. As a consequence, higher-ability creators produce strictly higher-quality content.

Under stronger assumptions, the optimal algorithm admits a sharper characterization, making the role of network effects especially transparent.

\begin{corollary}\label{cor:com}
If $u$ is strictly concave and $u(X)\ \mathbb{E}_i\lt[v\big|G_i(v)\gs 1-\tfrac{X}{\beta_i}\rt]$ is strictly increasing for all $X\in(0,\beta_i]$, then:
$$X_i(\theta)=\lt\{\begin{array}{ll}
\beta_i & \mbox{ if }\theta_i^l\ls\theta\ls\theta_i^m,\\
{u'}^{-1}\lt(\frac{c}{\theta\beta_i\mathbb{E}_i[v]}\rt) & \mbox{ if }\theta_i^m<\theta<\theta_i^h,\end{array}\rt.$$
where $\theta_i^l=\frac{c}{u(\beta_i)\mathbb{E}_i[v]}$, $\theta_i^m=\frac{c}{\beta_iu'(\beta_i)\mathbb{E}_i[v]}$, and $\theta_i^h=\frac{c}{\beta_iu'(\beta)\mathbb{E}_i[v]}$.
\end{corollary}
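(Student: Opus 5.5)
By \autoref{lem:sep} and \autoref{prop:complete}, the problem separates into a pointwise choice of total attention $X$ for each creator $(\theta,i)$. The within-category allocation is the top-$X/\beta_i$ quantile, $R=(X-\beta_i)^+$, and $e=u(X)$. I will write the resulting objective as
\[
\Pi(X;\theta)=\theta u(X)\,\beta_i\int_{G_i(v)\gs 1-\min\{X,\beta_i\}/\beta_i} v\,\dd G_i(v)-cX ,
\]
which I will split into two branches. On $X\in[0,\beta_i]$ I write $\Pi=\theta u(X)X\,m(X)-cX$ with $m(X)\equiv\mathbb{E}_i[v\mid G_i(v)\gs 1-X/\beta_i]$. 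On $X\in[\beta_i,\beta]$ the within-category audience is everyone, so $\Pi=\theta u(X)\beta_i\mathbb{E}_i[v]-cX$. The plan is to show that the maximizer is $0$, $\beta_i$, or an interior point of the second branch, and that these are ordered in $\theta$ by the stated cutoffs.

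On the second branch, $\Pi$ is strictly concave in $X$ because $u$ is strictly concave. The derivative at $X\in(\beta_i,\beta)$ is $\theta\beta_i\mathbb{E}_i[v]u'(X)-c$. The constrained maximizer on $[\beta_i,\beta]$ is therefore $\beta_i$ if $\theta\ls c/(\beta_iu'(\beta_i)\mathbb{E}_i[v])=\theta_i^m$, and $\beta$ if $\theta\gs\theta_i^h$. In between it is ${u'}^{-1}\!\left(c/(\theta\beta_i\mathbb{E}_i[v])\right)$. Since $u'(\beta)<u'(\beta_i)$, we get $\theta_i^m<\theta_i^h$.

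The key step, and the main obstacle, is the first branch: ruling out interior optima $X\in(0,\beta_i)$. On this branch,
\[
\Pi(X)/X=\theta\,u(X)m(X)-c ,
\]
and $u(X)m(X)$ is strictly increasing by hypothesis. For any $0<X<\beta_i$ this gives $\Pi(X)/X<\Pi(\beta_i)/\beta_i$. If $\Pi(\beta_i)\gs 0$, then $\Pi(X)<(X/\beta_i)\Pi(\beta_i)\ls\Pi(\beta_i)$. If $\Pi(\beta_i)<0$, then $\Pi(X)<0=\Pi(0)$. Either way the interior is strictly dominated. This is exactly the "scale" argument behind the discontinuity discussed after \autoref{prop:complete}, now strengthened to a global statement by the monotonicity of $u\,m$.

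It remains to compare $0$ with the second-branch optimum. Note that $m(\beta_i)=\mathbb{E}_i[v]$, so the two branches agree at $X=\beta_i$ with value $\Pi(\beta_i)=\beta_i\big(\theta u(\beta_i)\mathbb{E}_i[v]-c\big)$. This is nonnegative iff $\theta\gs c/(u(\beta_i)\mathbb{E}_i[v])=\theta_i^l$.

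I still need to check that $\theta_i^l\ls\theta_i^m$, which amounts to $u(\beta_i)\gs\beta_iu'(\beta_i)$. This holds by strict concavity and $u(0)=0$. Hence for $\theta<\theta_i^l$, where $\theta<\theta_i^m$ and the second-branch optimum is $\beta_i$, the best positive choice has negative value and $X=0$. For $\theta\gs\theta_i^l$ the optimum on $[\beta_i,\beta]$ weakly exceeds $\Pi(\beta_i)\gs0$, so it is chosen. This yields $X_i=\beta_i$ on $[\theta_i^l,\theta_i^m]$ and the interior formula on $(\theta_i^m,\theta_i^h)$, matching the corollary; ties at the cutoffs are measure zero and immaterial.
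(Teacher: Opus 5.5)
Your proof is correct and follows essentially the same route as the paper. Dividing the pointwise objective by $X$ shows that the average return $\theta u(X)\,\mathbb{E}_i[v\mid G_i(v)\ge 1-X/\beta_i]-c$ is strictly increasing, which rules out $X\in(0,\beta_i)$; the first-order condition under strict concavity on $[\beta_i,\beta]$ then pins down $\beta_i$, the interior formula, and the cutoffs. Your explicit case split on the sign of $\Pi(\beta_i)$ and your check that $\theta_i^l<\theta_i^m$ (via $u(\beta_i)>\beta_i u'(\beta_i)$) fill in steps the paper leaves implicit.
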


The monotonicity on $u(X)\ \mathbb{E}_i\lt[v\big|G_i(v)\gs 1-\tfrac{X}{\beta_i}\rt]$ requires that as a piece of content is recommended to more%
viewers, the increase of creator's utility must compensate the declining \emph{average} valuation among viewers.%
Then, the marginal incentive effect of additional within-category attention always dominates its marginal cost. This generates global increasing returns with respect to within-category attention, so the optimal allocation is bang-bang: if a creator's ability exceeds the inclusion cutoff, the optimal allocation expands his exposure until all viewers in the category are included and the maximum feasible within-category network effects are activated. This \emph{all-or-nothing} structure yields a closed-form expression for $\theta_i^l$, obtained by imposing equality in condition~\eqref{cutoff:mb-mc} at $X_i=\beta_i$.

The closed-form cutoffs also give simple conditions under which a category can receive some attention: some creator in the category can be included only if the inclusion cutoff is no larger than the highest creator ability, or
$\mathbb{E}_i[v]u(\beta_i)\gs \tfrac{c}{\hi{\theta}}$, 
which requires the category to have a sufficiently large viewer base and a sufficiently high average viewer value. For each category, the formula gives the \emph{critical mass} of viewers required for the category to remain active, taking as given its value distribution. This observation also clarifies the role of viewer-base expansion. Beyond increasing traffic in already active categories, a larger viewer base can bring marginal categories above their survival thresholds and expand the variety of categories on the platform. The same logic determines whether a category can generate influencers with cross-category exposure.%

Since $u$ is concave, when $\theta_i^m<\theta<\theta_i^h$, the optimal $X_i(\theta)$ is solved from the first-order condition:
\begin{equation}\label{cutoff:mb-mc-B}
-c+u'\bigl(X_i(\theta)\bigr)\theta\beta_i\mathbb{E}_i[v]=0.
\end{equation}
This condition equates the platform's marginal cost of cross-category attention to its marginal incentive benefit. On the cost side, directing a cross-category viewer to the creator displaces attention that could otherwise be monetized. On the benefit side, the additional attention raises the creator's effort, and hence content quality, which benefits the creator's within-category audience. In this case, the pecking order property yields the cross-category reach:
$R_i(\theta)={u'}^{-1}\lt(\frac{c}{\theta\beta_i\mathbb{E}_i[v]}\rt)-\beta_i.$
The cutoffs $\theta_i^m$ and $\theta_i^h$ can be solved from the two boundary cases of \eqref{cutoff:mb-mc-B}: $\theta_i^m$ is the type at which cross-category reach first becomes positive, $X_i(\theta)=\beta_i$, and $\theta_i^h$ is the type at which cross-category reach reaches its maximum, $X_i(\theta)=\beta$. These cutoffs are lower when category $i$ has a larger or more valuable viewer base, because cross-category attention generates stronger incentive benefits when the resulting quality improvement is enjoyed by more, or higher-value, within-category viewers. For such categories, the platform therefore extends the use of cross-category network effects to a broader range of creator types.

\section{Incomplete Information}\label{sec:inc}

We now return to the incomplete information case and study how screening on both sides of the market interacts with network effects. We first simplify the incentive compatibility constraints using a Myersonian approach and then characterize the resulting optimization problem. We next examine how private information on both sides constrains the platform's ability to manage network effects for advertising monetization, and the resulting implications for user participation. Finally, we discuss how to implement the optimal direct mechanism.

\subsection{Simplifying the Design Problem}

We begin with characterizing feasible direct mechanisms. Fix a direct mechanism $\mathscr{A}$, write:
$$U_i(\theta)\equiv u(X_i(\theta))-e_i(\theta)$$
as $(\theta,i)$-creator's indirect utility under truthful report, and write
$$W_j(v)\equiv \sum_i\alpha_i\int_{\Theta}x_{ij}(\theta,v)\Big(v\theta e_i(\theta)\mathbbm{1}\{i=j\}-c\Big)\dd F_i(\theta)-ca_j(v)$$ 
as $(v,j)$-viewer's indirect utility under truthful report. The standard Myersonian approach provides the necessary and sufficient conditions for a direct mechanism to be incentive compatible: 

\begin{lemma}[User IC]\label{lem:IC}\ \\ 
Creators' IC constraint, \eqref{eq:IC-p}, is satisfied if and only if for all $\theta,i$:
\vspace{-10pt}\begin{itemize}\tight
\item $q_i(\theta)=\theta e_i(\theta)$ is increasing in $\theta$, and
\item $U_i(\theta)=U_i(0)+\int_0^{\theta}\frac{q_i(\tilde{\theta})}{\tilde{\theta}^2}\dd\tilde{\theta}$.
\end{itemize}
\vspace{-10pt}
Viewers' IC constraint, \eqref{eq:IC-c}, is satisfied if and only if for all $v,j$:
\vspace{-10pt}\begin{itemize}\tight
\item $Q_j(v)\equiv\alpha_j\int_{\Theta}x_{jj}(\theta,v)q_j(\theta)\dd F_j(\theta)$ is increasing in $v$, and
\item $W_j(v)=W_j(\lo{v})+\int_{\lo{v}}^vQ_j(\tilde v)\dd \tilde{v}$.
\end{itemize}
\end{lemma}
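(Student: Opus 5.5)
Both parts are standard Myersonian envelope arguments, once each side's deviation payoff is written in the right form.

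\emph{Creator side.} Rewrite the creator's deviation payoff in terms of quality. A $(\theta,i)$-creator who reports $\theta'$ must deliver quality $q_i(\theta')=\theta' e_i(\theta')$, which costs him effort $q_i(\theta')/\theta$. His payoff is therefore
\[
\pi(\theta,\theta')\equiv u(X_i(\theta'))-\frac{q_i(\theta')}{\theta}.
\]
This is linear in the ``type'' $t\equiv -1/\theta$, with allocation $q_i(\theta')$ and transfer $u(X_i(\theta'))$, so the problem is single-crossing in $(t,q)$.

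For necessity, add the IC constraints \eqref{eq:IC-p} for $(\theta,\theta')$ and $(\theta',\theta)$. This gives $(q_i(\theta)-q_i(\theta'))(1/\theta'-1/\theta)\gs0$, so $q_i$ is increasing. Next, $U_i(\theta)=\max_{\theta'}\pi(\theta,\theta')$ is a supremum of affine functions of $-1/\theta$. Hence it is convex in $t$ and absolutely continuous on compact subintervals of $(0,\hi\theta]$. By the envelope theorem (Milgrom--Segal), $\dd U_i/\dd\theta=q_i(\theta)/\theta^2$ almost everywhere. Integrating gives the formula.

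For sufficiency, use the usual computation:
\[
U_i(\theta)-\pi(\theta,\theta')=\int_{\theta'}^{\theta}\frac{q_i(s)-q_i(\theta')}{s^2}\,\dd s\gs0
\]
by monotonicity, in both orders of $\theta,\theta'$.

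The one delicate point is the endpoint $\theta=0$. There $q_i(0)=0$, the type $t=-1/\theta$ diverges, and $U_i(0)=u(X_i(0))-e_i(0)$. I would handle it in three steps. First, establish the formula on $[\epsilon,\theta]$. Second, show that $U_i$ is continuous at $0$ from the right: IC for $\theta$ near $0$ against a report of $0$ gives $U_i(\theta)\gs U_i(0)$, and the reverse deviation bounds $U_i(\theta)-U_i(0)$ above by $q_i(\theta)/\theta$ times a vanishing factor. Third, let $\epsilon\to0$ by monotone convergence, since the integrand is nonnegative. I expect this endpoint step to be the main obstacle, together with the convention that a $0$-type produces zero quality at any effort. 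The alternative is to state the result for $\theta>0$ and treat type $0$ by a limit.

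\emph{Viewer side.} This is exactly the textbook linear case. The deviation payoff of a $(v,j)$-viewer reporting $v'$ is
\[
vQ_j(v')-C_j(v'),\qquad C_j(v')\equiv c\sum_i\alpha_i\int_\Theta x_{ij}(\theta,v')\,\dd F_i(\theta)+ca_j(v').
\]
Here $C_j(v')$ does not depend on the true $v$, because cross-category content and ads give $-c$ regardless of type. So the payoff is affine in $v$ with slope $Q_j(v')$. Adding paired IC constraints gives $(v-v')(Q_j(v)-Q_j(v'))\gs0$, so $Q_j$ is increasing. Then $W_j$ is a maximum of affine functions, hence convex with derivative $Q_j$ almost everywhere, which yields the envelope formula. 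Conversely, monotonicity plus the envelope formula gives
\[
W_j(v)-\bigl[vQ_j(v')-C_j(v')\bigr]=\int_{v'}^{v}\bigl(Q_j(s)-Q_j(v')\bigr)\,\dd s\gs0.
\]
No endpoint issue arises here, since $\lo v>0$.
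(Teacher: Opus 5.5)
Your main line is the paper's: you sum paired IC inequalities to get monotonicity of $q_i$ (resp.\ $Q_j$), obtain the rent formula from the envelope condition, and prove sufficiency by the standard integral comparison. The viewer side is identical. Your convexity-in-$t=-1/\theta$ argument is Lipschitz because the slopes $q_i(\theta')$ lie in $[0,q_i(\hi{\theta})]$. It is a more careful version of the paper's ``sandwich'' limit, which only gives $U_i'(\theta)=q_i(\theta)/\theta^2$ at continuity points of $q_i$ and leaves implicit the absolute continuity needed to integrate.

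The step that fails is your Step 2 at $\theta=0$. There is no reverse-deviation bound. The $0$-type cannot produce positive quality, so mimicking a $\theta$ with $q_i(\theta)>0$ is infeasible. The would-be bound $U_i(\theta)-U_i(0)\ls q_i(\theta)\bigl(\tfrac{1}{0}-\tfrac{1}{\theta}\bigr)$ has an infinite factor, not a vanishing one.

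Right-continuity at $0$ is in fact false under IC. Take $X_i(0)=e_i(0)=0$ and, for $\theta>0$, $q_i(\theta)=\lambda\theta^2$ and $u(X_i(\theta))=\kappa+2\lambda\theta$, with $\kappa,\lambda>0$ small. Then:
\begin{itemize}
\item $U_i(\theta)=\kappa+\lambda\theta$ for $\theta>0$;
\item the IC slack between positive types is $\lambda(\theta-\theta')^2/\theta\gs0$;
\item reporting $0$ yields only $u(0)=0$;
\item type $0$ can mimic no one.
\end{itemize}
Yet $U_i(\theta)-U_i(0)=\kappa+\lambda\theta\neq\lambda\theta=\int_0^\theta q_i(s)s^{-2}\,\dd s$.

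So your fallback is the right resolution. IC makes $U_i$ increasing on $(0,\hi{\theta}]$ and bounded below by $u(X_i(0))\gs U_i(0)$. Hence $U_i(0^+)$ exists, your monotone-convergence step shows $\int_0^\theta q_i(s)s^{-2}\,\dd s<\infty$, and the formula holds with $U_i(0)$ read as $U_i(0^+)$.

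Likewise, your sufficiency identity uses $e_i(\theta')=q_i(\theta')/\theta'$, which is meaningless at $\theta'=0$. There you need $e_i(0)=0$ as a separate condition; otherwise, under the stated formula, types just above $0$ gain by reporting $0$. The paper's proof skips this measure-zero type altogether. That is harmless for the platform's problem, since it sets the bottom rent to zero anyway.
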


In order for creators and viewers to truthfully report their types $\theta$ and $v$, exact amount of information rent must be granted. This limits the platform's ability to extract effort from creators and insert ads to viewers. Also, incentive compatibility requires the equilibrium quality $q_i(\theta)$ to be increasing in $\theta$, and the quality-weighted total attention $Q_j(v)$ to be increasing in $v$.

Define:
$$\Psi_j(v)\equiv v-\frac{1-G_j(v)}{g_j(v)},\quad \Phi_i(\theta)\equiv\theta-\frac{1-F_i(\theta)}{f_i(\theta)}$$
as the \emph{virtual types} of a $(v,j)$-viewer and a $(\theta,i)$-creator, respectively. Following the standard procedure, we set the lowest type user's payoff to be zero for optimality. By \autoref{lem:IC}, all users' payoffs are monotone in their vertical types, so all user types' IR are automatically satisfied. Using the users' IC conditions,%
we can replace $a_j(v)$ in the objective and simplify the platform's problem to the following:
\begin{align}
\max_{\substack{\{e_i(\cdot)\}_i,\\ \{x_{ij}(\cdot,\cdot)\}_{i,j}}}&\ \sum_{i,j}\alpha_i\beta_j\int_{\Theta}\int_Vx_{ij}(\theta,v)\Big(\underbrace{\Psi_j(v)\theta e_i(\theta)\mathbbm{1}\{i=j\}-c}_{\mbox{\footnotesize Virtual surplus}}\Big)\dd G_j(v)\dd F_i(\theta),\label{eq:obj-inc}\\ 
\mbox{s.t. }%
&\ U_i(\theta)=\int_0^{\theta}\frac{e_i(\tilde{\theta})}{\tilde{\theta}}\dd\tilde{\theta},\ \forall\ \theta,i,\quad \theta e_i(\theta)\mbox{ is increasing},\label{eq:U-inc}\tag{IC-I}%
\end{align}

Comparing the objective of problem \eqref{eq:obj-inc} to its counterpart \eqref{eq:obj} with the complete information,%
the viewer type $v$ in the parentheses of the objective in problem \eqref{eq:obj-inc} is now replaced by its virtual counterpart $\Psi_j(v)$. Indeed, from the platform's perspective, the ads profit created by each match between $(\theta,i)$ and $(v,j)$ is exactly the \emph{virtual viewer surplus} in the parentheses, that is, the viewer's surplus minus the associated information rent the platform has to leave to higher-type viewers. Using \autoref{lem:IC}, viewers' IC can be replaced by constraint \eqref{eq:U-inc}.

As before, to understand the trade-off in determining attention allocation, we%
plug constraint \eqref{eq:U-inc} into the objective and take pointwise first derivative with respect to $x_{ij}(\theta, v)$. After some algebra, we decompose the marginal effect of incremental attention $x_{ij}(\theta,v)$ into three terms.

\begin{definition}[Network-Effect-Adjusted Virtual Surplus]\ \\
In the optimal algorithm, the value of $x_{ij}(\theta,v)$ is determined by the sign of
\begin{eqnarray}
&&\underbrace{\Psi_j(v)\theta e_i(\theta)\mathbbm{1}\{i=j\}-c}_{\text{\emph{Virtual Surplus}}}+\underbrace{u'(X_i(\theta))\theta\beta_i\int_Vx_{ii}(\theta,\tilde{v})\Psi_i(\tilde{v})\dd G_i(\tilde{v})}_{\text{\emph{Network Effect: Current Type}}}\nn\\
&&-\underbrace{u'(X_i(\theta))\frac{\beta_i}{f_i(\theta)}\int_{\theta}^{\hi{\theta}}\int_Vx_{ii}(\tilde{\theta},\tilde{v})\Psi_i(\tilde{v})\dd G_i(\tilde{v})\dd F_i(\tilde{\theta})}_{\text{\emph{Network Effect: Higher Types}}}.\label{eq:decom-inc}
\end{eqnarray}
In particular, $x_{ij}(\theta,v)=1$ if expression \eqref{eq:decom-inc} is positive, 
$x_{ij}(\theta,v)=0$ if it is negative, and any value in $[0,1]$ is optimal if it is zero.
\end{definition}
The first term represents the direct pointwise effect of assigning attention, mirroring its counterpart in \eqref{eq:decom}. Instead of extracting all viewer surplus, the platform now only captures the \emph{virtual surplus} due to information rent. The second term echoes the indirect network effect in \eqref{eq:decom}. Holding constant the information rent of the current type, the incremental attention allows the platform to extract more effort, globally benefiting all within-category viewers who watch this content. The third term is new to the incomplete information setting, as it represents the ``doubly global'' effect. An increase in effort by the current type globally raises information rent for all higher types, causing them to reduce effort; this reduction, in turn, globally affects all viewers who watch their content. The resulting aggregate impact is summarized by the double integral term.%
The second term, as in the complete-information benchmark, favors greater attention allocation; whereas the third term, arising under incomplete information, works against this force, discouraging excessive attention assignment.

\subsection{Optimal Design under Incomplete Information}

We now characterize the optimal allocation when users' vertical types are private information. Throughout the rest of the paper, we assume $\Psi_j(\lo{v})>0$. As in \autoref{sec:com}, we still assume the attention cost $c$ to be small so that the non-negativity condition \eqref{eq:pos} never binds.

\begin{proposition}[Algorithm: Incomplete Information]\label{prop:incomplete}\ \\
Suppose $\tfrac{u(X)}{X}$ is decreasing and $u(X)\ \mathbb{E}_i\lt[\Psi_i(v)\big|G_i(v)\gs 1-\tfrac{X}{\beta_i}\rt]$
is strictly increasing for all $X\in[0,\tfrac{\beta_i}{2}]$. Then for each category $i$, the optimal algorithm is characterized by an increasing total attention $X_i(\cdot)$ such that:
$$x_{ii}(\theta,v)=\mathbbm{1}\lt\{G_i(v)\gs 1-\tfrac{X_i(\theta)}{\beta_i}\rt\},\quad R_i(\theta)=(X_i(\theta)-\beta_i)^+,\quad e_i(\theta)=\frac{1}{\theta}\int_0^{\theta}\tilde{\theta}\ \dd u\big(X_i(\tilde{\theta})\big).$$
Moreover, there exist cutoff abilities $0<\theta_i^L\ls\theta_i^M\ls\theta_i^H$, with $\theta_i^L>\theta_i^l$, $\theta_i^M>\theta_i^m$, $\theta_i^H>\theta_i^h$, such that:
$$X_i(\theta)\lt\{\begin{array}{ll}
=0 & \mbox{ if }\theta<\theta_i^L,\\
\in(0,\beta_i] & \mbox{ if }\theta_i^L<\theta\ls\theta_i^M,\\
\in(\beta_i,\beta) & \mbox{ if }\theta_i^M<\theta<\theta_i^H,\\
=\beta & \mbox{ if }\theta\gs\theta_i^H.\end{array}\rt.$$
\end{proposition}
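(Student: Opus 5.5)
The proof will follow the complete-information argument, applied to the relaxed problem \eqref{eq:obj-inc}. The first step is to make the problem separable by category, as in \autoref{lem:sep}. With $U_i(0)=0$ and $U_i(\theta)=u(X_i(\theta))-e_i(\theta)$, constraint \eqref{eq:U-inc} gives $u(X_i(\theta))=e_i(\theta)+\int_0^\theta e_i(\tilde\theta)/\tilde\theta\,\dd\tilde\theta$. Multiplying by $\theta$ and differentiating shows $\dd(\theta e_i)=\theta\,\dd u(X_i)$. This yields the stated formula $e_i(\theta)=\frac{1}{\theta}\int_0^\theta\tilde\theta\,\dd u(X_i(\tilde\theta))$, and quality $q_i=\theta e_i$ is increasing exactly when $X_i$ is increasing.

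Next I substitute into the category-$i$ objective. Write $S_i(\theta)\equiv\beta_i\int_V x_{ii}(\theta,v)\Psi_i(v)\,\dd G_i(v)$ and $H_i(\theta)\equiv\int_\theta^{\hi\theta}S_i\,\dd F_i$. Integrating by parts, the within-category term $\int q_i S_i\,\dd F_i$ becomes $\int\theta\,u'(X_i)\,H_i(\theta)\,\dd X_i(\theta)$, or equivalently $\int u(X_i(\theta))\big[\theta S_i(\theta)f_i(\theta)-H_i(\theta)\big]\dd\theta$ plus a boundary term. The category-$i$ objective is then
$$\int u(X_i)\big[\theta S_i f_i-H_i\big]\dd\theta-c\int X_i\,\dd F_i.$$
This is the pointwise form behind \eqref{eq:decom-inc}.

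The second step is the \emph{pecking order}. Fix the total attention $X_i(\theta)$. Among within-category allocations of mass $\min(X_i,\beta_i)$, the choice $x_{ii}=\mathbbm{1}\{G_i(v)\gs 1-X_i/\beta_i\}$ maximizes $S_i(\theta)$ pointwise, because $\Psi_i$ is increasing by the hazard-rate assumption. This also makes $Q_i(v)$ increasing in $v$, so viewer IC holds. Cross-category attention has virtual value $-c$, while within-category attention has $\Psi_i\theta e_i-c$ with $\Psi_i(\lo v)>0$. So within-category viewers are used first, and $R_i=(X_i-\beta_i)^+$.

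One complication is that $S_i(\theta)$ enters $H_i$ for all lower types, and raising $S_i$ raises $H_i$, which lowers the coefficient of lower types. I would handle this by arguing at the level of $q_i$. Given any increasing $q_i$, the contribution of type $\theta$ is $q_i(\theta)S_i(\theta)$ with $q_i\gs0$, so maximizing $S_i$ for a given $X_i$ is optimal once $q_i$ is held fixed. Holding $q_i$ fixed holds $X_i$ fixed through $u(X_i)=\int_0^\theta\tilde\theta^{-1}\,\dd q_i+q_i/\theta$.

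The third step reduces everything to one increasing function $X_i(\cdot)$. I write the objective as $\int J_i(\theta,X_i(\cdot))\,\dd\theta$ and establish the cutoff structure.

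\emph{Exclusion and the jump at $\theta_i^L$.} The benefit from small $X$ is of order $u(X)\cdot X\cdot\mathbb{E}[\Psi_i\mid\text{top}]$, which is $O(X^2)$, while the cost is $cX$. The assumption that $u(X)\,\mathbb{E}_i[\Psi_i\mid G_i\gs1-X/\beta_i]$ is strictly increasing on $[0,\beta_i/2]$ gives increasing returns there, so the optimum sits at $0$ or jumps to at least some positive level.

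\emph{Monotonicity and the upper cutoffs.} For monotonicity of the unconstrained pointwise optimizer, I will use that $\Phi_i$ is increasing by the hazard-rate assumption and that $u(X)/X$ is decreasing. If pointwise maximization fails to give a monotone $X_i$, I will fall back on ironing. $\theta_i^M$ and $\theta_i^H$ are then the types where the optimal $X_i$ crosses $\beta_i$ and hits $\beta$.

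The step I expect to be the main obstacle is the \emph{strict} comparisons with the complete-information cutoffs $\theta_i^l,\theta_i^m,\theta_i^h$, because the objective is nonlocal through $H_i$. My first attempt is to bound the marginal value of attention to type $\theta$ under incomplete information strictly below its complete-information counterpart \eqref{eq:decom}. Two facts give this. First, $\Psi_i(v)<v$ for $v<\hi v$. Second, the higher-types term is strictly positive wherever higher types are active. Then at $\theta_i^l,\theta_i^m,\theta_i^h$ the incomplete-information marginal condition is strictly negative, and each cutoff shifts strictly up. For the exclusion cutoff, I compare the maximal net value of inclusion, $\max_X\{\theta u(X)X\,\mathbb{E}[\Psi_i\mid\cdot]-\text{rent}-cX\}$, with \eqref{cutoff:mb-mc}. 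Since $\mathbb{E}[\Psi_i\mid\cdot]<\mathbb{E}[v\mid\cdot]$, the cutoff rises strictly, so $\theta_i^L>\theta_i^l$.
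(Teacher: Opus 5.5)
Your first two steps are sound. The ODE $\dd(\theta e_i)=\theta\,\dd u(X_i)$ is the paper's. Your pecking-order argument (fix $q_i$, which pins down $X_i$, then maximize $S_i$ using $q_i\gs0$ and $\Psi_i>0$ increasing) is a cleaner exchange argument than the paper's first-order one.

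The gap is the third step, which carries the real content of the proposition: you never show that the stated hypotheses make $X_i(\cdot)$ increasing. Your reduced objective is non-local. $H_i(\theta)$ depends on $X_i$ above $\theta$; equivalently, the integrand contains $J_i(X_i(\theta))\,\theta U_i(\theta)$ with $\theta U_i(\theta)=\int_0^\theta u(X_i(\tilde\theta))\,\dd\tilde\theta$. So there is no ``unconstrained pointwise optimizer'' whose monotonicity you could check. ``Falling back on ironing'' also has no standard meaning for an objective that is neither linear nor local in $X_i$, and the proposition asserts that no ironing is needed.

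The missing device is the paper's costate. Treat $U_i$ as a state with multiplier $\gamma_i(\theta)=-\theta H_i(\theta)$ in your notation. Then maximize pointwise the Hamiltonian $H_i^I/f_i=\theta(u(X)-U_i)J_i(X)-cX+\frac{\gamma_i}{\theta f_i}(u(X)-U_i)$, where $J_i(X)=\min\{X,\beta_i\}\,G_i^{-1}\big(1-\min\{X,\beta_i\}/\beta_i\big)$ is your $S_i$ under the pecking order. The key is the a priori bound $-\beta_i\lo v\,\theta(1-F_i(\theta))\ls\gamma_i(\theta)\ls0$, which follows from $0\ls J_i\ls J_i(\beta_i)=\beta_i\lo v$.

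This bound does the work in two places, and neither appears in your sketch.

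\emph{Ruling out small $X$.} On $(0,\beta_i/2]$, the derivative of $[H_i^I(\theta,X)-H_i^I(\theta,0)]/X$ has the sign of the sum of three terms, with $w=G_i^{-1}(1-X/\beta_i)$. The first is $\frac{\gamma_i}{\theta^2 f_iJ_i}\big(\frac{Xu'}{u}-1\big)$. The second is the elasticity term, which the virtual-value hypothesis makes positive. The third is $\frac{U_i}{u}\big(1-\frac{\Psi_i(w)}{w}\big)$. Decreasing $u(X)/X$ is exactly what signs the first (higher-type rent) term, and the conclusion is that no $X\in(0,\beta_i/2)$ is ever optimal. Your ``$O(X^2)$ versus $cX$'' heuristic drops both rent terms and yields only a jump to ``some positive level,'' which is too weak for the next step.

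\emph{Monotonicity on the remaining range.} For $X>\tilde X\gs\beta_i/2$,
$\frac{\dd}{\dd\theta}\big[H_i^I(\theta,X)-H_i^I(\theta,\tilde X)\big]/f_i=(u(X)-u(\tilde X))\big(J_i(X)+J_i(\tilde X)-\frac{\gamma_i f_i'}{\theta f_i^2}\big)>0$.
This holds because $J_i(X)\gs\min\{X,\beta_i\}\lo v$, and increasing hazard rate gives $f_i'\gs-f_i^2/(1-F_i)$, so the last term is at least $-\beta_i\lo v$. Topkis then gives monotonicity. This is why the hypotheses are stated on $[0,\beta_i/2]$, and it is where the creator hazard rate actually enters. ``$\Phi_i$ increasing'' alone only covers the range where all higher types have $X_i\gs\beta_i$, so that the Hamiltonian collapses to $\beta_i\lo v\,\Phi_i(\theta)u(X)-cX$.

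\emph{Cutoff comparisons.} You read the shifts in $\theta_i^M$ and $\theta_i^H$ off the sign of \eqref{eq:decom-inc} versus \eqref{eq:decom}, each evaluated at its own optimal allocation. That presumes the cutoffs solve first-order conditions, which requires concavity of $u$ (assumed only in \autoref{cor:inc}); otherwise $X_i$ may jump across $\beta_i$ or to $\beta$. Compare discrete gains at a common $X$ instead. Because $\gamma_i\ls0$, $U_i\gs0$ and $\Psi_i\ls v$, the gain $H_i^C(\theta,X^H)-H_i^C(\theta,X^L)$ strictly exceeds its incomplete-information counterpart for any $X^H>X^L$. The comparison therefore holds globally, and continuity gives the strict inequalities. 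This is what the paper does, and it is the form your exclusion-cutoff argument already takes.
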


The monotonicity on $\tfrac{u(X)}{X}$ and $u(X)\ \mathbb{E}_i\lt[\Psi_i(v)\big|G_i(v)\gs 1-\tfrac{X}{\beta_i}\rt]$
is a regularity assumption to guarantee that total attention is increasing without ironing. The assumption on $\tfrac{u(X)}{X}$ requires that the creator's utility cannot grow too fast when his exposure is relatively small. The monotonicity on $u(X)\ \mathbb{E}_i\lt[\Psi_i(v)\big|G_i(v)\gs 1-\tfrac{X}{\beta_i}\rt]$
is an incomplete information analogy of the assumption in \autoref{cor:com}. It requires that as a piece of content is recommended to more and more viewers, the increase of creator's utility must compensate the declining \emph{average} \emph{virtual} valuation among viewers.

Similar to the complete information case, the creators in each category are partitioned into four intervals based on ability. 
\vspace{-10pt}\begin{itemize}\tight
    \item \emph{Inactive}: types $\theta<\theta_i^L$, with zero effort and zero attention;
    \item \emph{Local Entertainers}: types $\theta_i^L\ls \theta\ls \theta_i^M$, with positive and increasing quality and attention only from own category;
    \item \emph{Ladder Climbers}: types $\theta_i^M<\theta<\theta_i^H$, with positive and increasing quality, full attention from own category, and increasing but partial reach outside own category;
    \item \emph{Global Influencers}: types $\theta\gs\theta_i^H$, with constant quality and full attention from all viewers.
\end{itemize}
Within-category attention jumps from zero to a strictly positive level at $\theta_i^L$ and increases thereafter. Within a given category, attention allocation is monotone in viewer value: higher-value viewers consume everything that lower-value viewers do, and possibly more. Cross-category attention will be assigned only if the creator ability is significantly high ($\theta>\theta_i^M$), and is maxed out at $\theta_i^H$. Except for creators at the entry cutoff, all participating creators earn information rents; accordingly, recommended effort is distorted away from the level that would fully extract their surplus.%

Relative to complete information, private information on both sides makes it more costly for the platform to expand and monetize network effects. As a result, the optimal algorithm features both lower viewer attention and lower creator effort. Given the complementarity between attention and effort, this joint contraction is natural, but it is driven by two layers of forces. First, standard informational distortions require leaving information rents to both viewers and creators, limiting ads extraction and weakening effort incentives. Second, private information interacts with network effects: it becomes more costly to distort viewer attention and leverage network effects, while the incentive gains from such distortions are reduced. Lower effort and content quality further depress viewers' willingness to allocate attention, generating a feedback loop that dampens activity on both sides of the market. 

The interaction between network effects and private information is most clearly reflected in two distortions of the optimal allocation: 
one on the intensive margin and one on the extensive margin.%
First, network externalities imply that asymmetric information distorts allocations even for the highest types on both sides of the market.\footnote{Allocation distortions at the top also arise in other screening environments with externalities (e.g., \cite{oren1982nonlinear,lockwood2000production}).}%
For a highest-valued viewer of $v=\hi{v}$, we have $\Psi_j(\hi{v})=\hi{v}$ so that normally there is no concern about leaving information rent to even higher types. However, with two-sided incomplete information her allocation is nevertheless distorted: she watches less and worse content due to information friction on the creator side. A similar logic applies to the most able creator of $\theta=\hi{\theta}$: effort is distorted downward and even attention is reduced if $\hi{\theta}$ falls below $\theta_i^H$. Although the most able creator does not have an even higher type above, he still faces viewers who allocate less attention due to their own information rent, and therefore reduces effort.%

Second, private information raises the creator-inclusion cutoff $\theta_i^L$ above the complete information counterpart $\theta_i^l$.%
Under private information, the benefit of including a creator is now evaluated using viewers' virtual types that reflect the information rents left to the viewers. Meanwhile, the cost of including a creator includes the information rent paid to higher types of creators, lowering the quality of their content and thereby the utility of their interested viewers (this is the \textit{doubly global} effect).%
Both forces require the algorithm to be more selective, shutting down more low-ability creators.%
Global Influencers are more difficult to sustain in niche categories for the same reason.%

\paragraph{Category Survival and Reach} 
Information asymmetry contracts the platform along both intensive and extensive margins, and these distortions are especially pronounced for niche categories with thinner and lower-valued viewer bases: two-sided private information threatens its existence from both directions by further raising an already high inclusion threshold. The following corollary imposes additional assumptions and makes this point more transparent.

\begin{corollary}\label{cor:inc}
If $u$ is strictly concave and $u(X)\ \mathbb{E}_i\lt[\Psi_i(v)\big|G_i(v)\gs 1-\tfrac{X}{\beta_i}\rt]$  is strictly increasing for all $X\in(0,\beta_i]$, then:
$$X_i(\theta)=\lt\{\begin{array}{ll}
\beta_i & \mbox{ if }\theta_i^L<\theta\ls\theta_i^M,\\
{u'}^{-1}\lt(\frac{c}{\Phi_i(\theta)\beta_i\mathbb{E}_i[\Psi_i(v)]}\rt) & \mbox{ if }\theta_i^M<\theta<\theta_i^H,\end{array}\rt.$$
where $\Phi_i(\theta_i^L)=\frac{c}{u(\beta_i)\mathbb{E}_i[\Psi_i(v)]}$, $\Phi_i(\theta_i^M)=\frac{c}{\beta_iu'(\beta_i)\mathbb{E}_i[\Psi_i(v)]}$, and $\Phi_i(\theta_i^H)=\frac{c}{\beta_iu'(\beta)\mathbb{E}_i[\Psi_i(v)]}$.
\end{corollary}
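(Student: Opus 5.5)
The natural route is to rewrite the category-$i$ subproblem of \eqref{eq:obj-inc} in terms of the single function $X_i(\cdot)$, as the paper does for Corollary~\ref{cor:com}, but with virtual types on both sides. Take as given the structure from Proposition~\ref{prop:incomplete}: a nested within-category allocation $x_{ii}(\theta,v)=\mathbbm{1}\{G_i(v)\gs 1-X_i(\theta)/\beta_i\}$ and $R_i=(X_i-\beta_i)^+$. Define
\[
S_i(X)\equiv \beta_i\int_{1-\min\{X,\beta_i\}/\beta_i}^{1}\Psi_i\big(G_i^{-1}(s)\big)\,\dd s ,
\]
so that $S_i(X)=\min\{X,\beta_i\}\,\mathbb{E}_i[\Psi_i(v)\mid G_i(v)\gs 1-X/\beta_i]$ for $X\ls\beta_i$, and $S_i(X)=\beta_i\mathbb{E}_i[\Psi_i(v)]$ for $X\gs\beta_i$. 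The category-$i$ objective is then $\int_\Theta [\,q_i(\theta)S_i(X_i(\theta))-cX_i(\theta)\,]\,\dd F_i(\theta)$ with $q_i=\theta e_i$. From \eqref{eq:U-inc} with $U_i(0)=0$ we get $u(X_i(\theta))-e_i(\theta)=\int_0^\theta e_i/\tilde\theta$. This step is where care is needed: I would treat $S_i(X_i(\theta))$ as a weight that is increasing in $\theta$ and do the Myersonian integration by parts on the creator side. That yields the effective creator contribution $\Phi_i(\theta)\,u(X_i(\theta))\,S_i(X_i(\theta))$ up to terms involving $\dd S_i$. I would check that those terms vanish or have the right sign, using the rent formula $e_i(\theta)=\frac1\theta\int_0^\theta\tilde\theta\,\dd u(X_i)$ from Proposition~\ref{prop:incomplete}. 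So the pointwise problem I aim for is
\[
\max_{X\in[0,\beta]}\ \Phi_i(\theta)\,u(X)\,S_i(X)-cX .
\]
This matches the paper's pattern: the complete-information $\theta$ is replaced by $\Phi_i(\theta)$ and $v$ by $\Psi_i(v)$. It also reproduces the three-term decomposition \eqref{eq:decom-inc} when differentiated.

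Next I would solve the pointwise problem piece by piece. On $(0,\beta_i]$, write the objective as $X\,[\Phi_i(\theta)\,u(X)\,\mathbb{E}_i[\Psi_i\mid\cdot]-c]$. The bracket is strictly increasing by hypothesis, so the objective is either negative throughout or increasing once positive. This gives the bang-bang result: the optimal $X$ is either $0$ or at least $\beta_i$, exactly as in Corollary~\ref{cor:com}. On $[\beta_i,\beta]$ the objective is $\Phi_i(\theta)\beta_i\mathbb{E}_i[\Psi_i]\,u(X)-cX$, which is strictly concave. Its maximizer is the clamp of ${u'}^{-1}\!\big(c/(\Phi_i(\theta)\beta_i\mathbb{E}_i[\Psi_i])\big)$ to $[\beta_i,\beta]$.

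Comparing $X=0$ with the best $X\gs\beta_i$ then gives the cutoffs. $X=\beta_i$ is weakly profitable iff $\Phi_i(\theta)u(\beta_i)\mathbb{E}_i[\Psi_i]\gs c$, which gives $\theta_i^L$. The interior first-order condition starts at $\beta_i$ when $\Phi_i(\theta)\beta_iu'(\beta_i)\mathbb{E}_i[\Psi_i]=c$, giving $\theta_i^M$. It reaches $\beta$ at the analogous condition with $u'(\beta)$, giving $\theta_i^H$. Concavity gives $u(\beta_i)\gs\beta_iu'(\beta_i)$, so $\theta_i^L\ls\theta_i^M$. For any type that gets cross-category reach, the interior optimum beats $0$ because it beats $\beta_i$, which already beats $0$ once $\theta\gs\theta_i^M\gs\theta_i^L$.

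Finally I would verify feasibility. The increasing hazard rate makes $\Phi_i$ increasing, so the pointwise solution $X_i(\theta)$ is increasing. Hence $q_i(\theta)=\int_0^\theta\tilde\theta\,\dd u(X_i)$ is increasing, which gives creator IC by Lemma~\ref{lem:IC}. Nestedness gives increasing $Q_i(v)$, which gives viewer IC. The main obstacle is the creator-side integration by parts, which has to justify that the effective objective is $\Phi_i\,u\,S_i$ rather than something with an extra $\dd S_i$ correction. I would first attempt it by writing $q_i=\int_0^\theta\tilde\theta\,\dd u(X_i)$ explicitly and swapping the order of integration. Failing that, I would argue via the Proposition~\ref{prop:incomplete} FOC \eqref{eq:decom-inc}, evaluated at the bang-bang structure.
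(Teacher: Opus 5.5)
\textbf{The gap.} The step you flag as the main obstacle is a real gap, and the target you set for it is false. For a general schedule, the category-$i$ objective is not $\int_\Theta[\Phi_i u(X_i)S_i(X_i)-cX_i]\,\dd F_i$. Write $q_i(\theta)=\int_0^\theta\tilde\theta\,\dd u(X_i(\tilde\theta))=\theta u(X_i(\theta))-\int_0^\theta u(X_i(t))\,\dd t$ and apply Fubini. For every increasing $X_i$ this gives
\begin{align*}
\mathcal{J}[X_i]\equiv\int_{\Theta}\bigl[q_i S_i(X_i)-cX_i\bigr]\dd F_i
&=\underbrace{\int_{\Theta}\bigl[\Phi_i\,u(X_i)\,S_i(X_i)-cX_i\bigr]\dd F_i}_{\equiv\,\mathcal{K}[X_i]}\\
&\quad-\int_0^{\hi{\theta}}u\bigl(X_i(t)\bigr)\int_t^{\hi{\theta}}\bigl[S_i\bigl(X_i(\theta)\bigr)-S_i\bigl(X_i(t)\bigr)\bigr]\dd F_i(\theta)\,\dd t .
\end{align*}
The correction term does not vanish in general. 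Take, for example, a schedule that sits at a level in $(0,\beta_i)$ on one set of types and rises above it on a higher set.

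For the same reason, differentiating $\Phi_iuS_i-cX$ does not reproduce \eqref{eq:decom-inc}. There the doubly global term is $u'\tfrac{1}{f_i}\int_\theta^{\hi{\theta}}S_i(X_i)\,\dd F_i$, not $u'\tfrac{1-F_i}{f_i}S_i(X_i(\theta))$. The virtual-surplus term also carries $q_i$, not $\Phi_iu(X_i)$. So an attempt to prove the identity will fail. Your fallback, checking \eqref{eq:decom-inc} at the bang-bang structure, presupposes the structure you want to derive. It also gives only a necessary condition, in a problem that is not concave in $X$.

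\textbf{The fix.} What closes the argument is your ``right sign'' branch, made explicit as a relaxation. Since $u\gs0$ and $S_i'(X)=\Psi_i\bigl(G_i^{-1}(1-X/\beta_i)\bigr)>0$ on $(0,\beta_i)$ (by $\Psi_i(\lo{v})>0$), the correction is nonnegative for every increasing $X_i$, i.e.\ for every IC-feasible schedule. Hence $\mathcal{J}\ls\mathcal{K}$ on the feasible set.

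Your pointwise maximizer $X^*$ of $\mathcal{K}$ is increasing and takes values in $\{0\}\cup[\beta_i,\beta]$. Wherever $X^*(t)=0$, we have $u(X^*(t))=0$. Wherever $X^*(t)\gs\beta_i$, we have $S_i(X^*(\theta))=S_i(X^*(t))=\beta_i\mathbb{E}_i[\Psi_i(v)]$ for all $\theta\gs t$. So the correction vanishes at $X^*$, and $\mathcal{J}[X_i]\ls\mathcal{K}[X_i]\ls\mathcal{K}[X^*]=\mathcal{J}[X^*]$ for every feasible $X_i$.

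\textbf{Comparison with the paper.} With this step your proof is complete, and it takes a different route from the paper's. The paper stays in the optimal-control formulation of Proposition~\ref{prop:incomplete}. It uses the Hamiltonian comparison at $U_i=0$ below $\theta_i^L$ to show $X_i(\theta_i^L)\gs\beta_i$. It then invokes monotonicity to get $X_i\gs\beta_i$ on the whole active set, which makes the costate explicit, $\gamma_i(\theta)=-\beta_i\lo{v}\theta(1-F_i(\theta))$. Finally it reads off $\theta_i^L$ and the interior condition $\beta_i\mathbb{E}_i[\Psi_i(v)]\Phi_i(\theta)u'(X)=c$ from the Hamiltonian. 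Your relaxation needs no costate and delivers sufficiency rather than only necessary conditions. It also makes transparent that replacing $\theta$ by $\Phi_i(\theta)$ is exact only because the optimal schedule keeps $S_i\circ X_i$ constant on the active set.
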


The strict concavity on $u$ grants us a closed-form characterization of the optimal allocation. For the inclusion cutoff, information asymmetry implies the following upward shift:
$$\theta_i^L\gs\Phi_i(\theta_i^L)=\frac{c}{u(\beta_i)\mathbb{E}_i[\Psi_i(v)]}>\frac{c}{u(\beta_i)\mathbb{E}_i[v]}=\theta_i^l.$$
This first inequality holds by definition of virtual type, capturing the increased incentive costs of creators due to private information. The second inequality arises because viewers' private information reduces the platform's ability to profit from attention on ads, reflected in $\Psi_i(v)<v$ for almost all $v$. Information asymmetry on both sides jointly raises the participation cutoff. The same logic applies to the remaining cutoffs on the production side. Notice that these cutoffs are determined by group characteristics of the same category and are shaped by information asymmetry on both sides. 

The closed-form expression above yields a simple survival condition for a category: some creator in the category can be included only if the inclusion cutoff is no larger than the highest creator ability, or
$$\mathbb{E}_i[\Psi_i(v)]u(\beta_i)\gs \frac{c}{\hi{\theta}}.$$
For each category, this formula implicitly gives the \emph{critical mass} of viewers required for the category to remain active, taking as given its value distribution and informational frictions. Importantly, the critical mass is only affected by viewer-side private information through $\mathbb{E}_i[\Psi_i(v)]$, but not by creator-side private information, since the top creator's virtual type equals her true type. For categories with viewer base $\beta_i$ such that $u(\beta_i)\in\lt(\tfrac{c}{\hi{\theta}\mathbb{E}_i[v]},\tfrac{c}{\hi{\theta}\mathbb{E}_i[\Psi_i(v)]}\rt)$, viewer-side private information can make these categories extinct, even though they would remain viable under complete information.%

The discussion above points to a stark policy implication. In practice, social media platforms may observe or infer user characteristics that predict preferences, yet legal and regulatory constraints can limit how such information is used in recommender systems. For example, by granting content viewers the right to opt out of data sharing (CCPA) and requiring explicit consent for data collection (GDPR), these regulations may reduce the volume of first-party and third-party data that platforms can legally collect. Additionally, the EU's Digital Markets Act's (DMA) prohibition on cross-app data combination prevents platforms such as Meta from merging the data fragments about their users, leaving them with incomplete user profiles.
Under this interpretation, our results suggest a regulatory trade-off: limiting the use of user characteristics in personalized recommendations may raise participating users’ welfare by leaving them information rents, but may also reduce diversity in the platform’s content ecosystem by making it harder for niche content to reach its audience.

Also, the above discussion further clarifies the role of viewer-base expansion: by bringing marginal categories above their survival thresholds, a larger viewer base can mitigate the loss of network-effect leverage caused by restrictions on personalized recommendations and sustain greater category variety.

\paragraph{Non-Monotone Effort}  Information asymmetry also reshapes the relationship between creator ability, quality, and effort. In the optimal allocation, some creators are sufficiently able to be shown to all viewers in their own category, but not sufficiently able to justify costly cross-category reach. Since the attention reward is flat for these creators, incentive compatibility can only induce a constant quality level among them. When the within-category allocation is all-or-nothing, all Local Entertainers fall into this fixed-reward region.%
A similar logic applies to sufficiently able creators ($\theta\gs \theta_i^H$). Since they already reach all viewers, the platform runs out of incentive instruments and the resulting quality bunches at the maximum level. By contrast, Ladder Climbers face active trade-offs: they can marginally increase quality in exchange for marginally further reach among cross-category viewers. A typical quality function is depicted in \autoref{fig:quality}, where bunching appears as flat segments for Local Entertainers and Global Influencers.

Non-decreasing qualities does not necessarily translate into non-decreasing efforts. In the flat-quality segments in \autoref{fig:quality}, more able creators exert less effort to sustain the same quality level, whereas for Ladder Climbers, effort typically increases with ability. Intuitively,%
Local Entertainers and Global Influencers face insufficient marginal return from quality improvement and thus slack off, while Ladder Climbers aspire to increase their extended reach by working harder.%
Overall, effort is non-monotone in ability and typically double peaked.

\begin{figure}[t]
    \begin{center}
    \subfigure[Quality]{\includegraphics[width=0.45\linewidth]{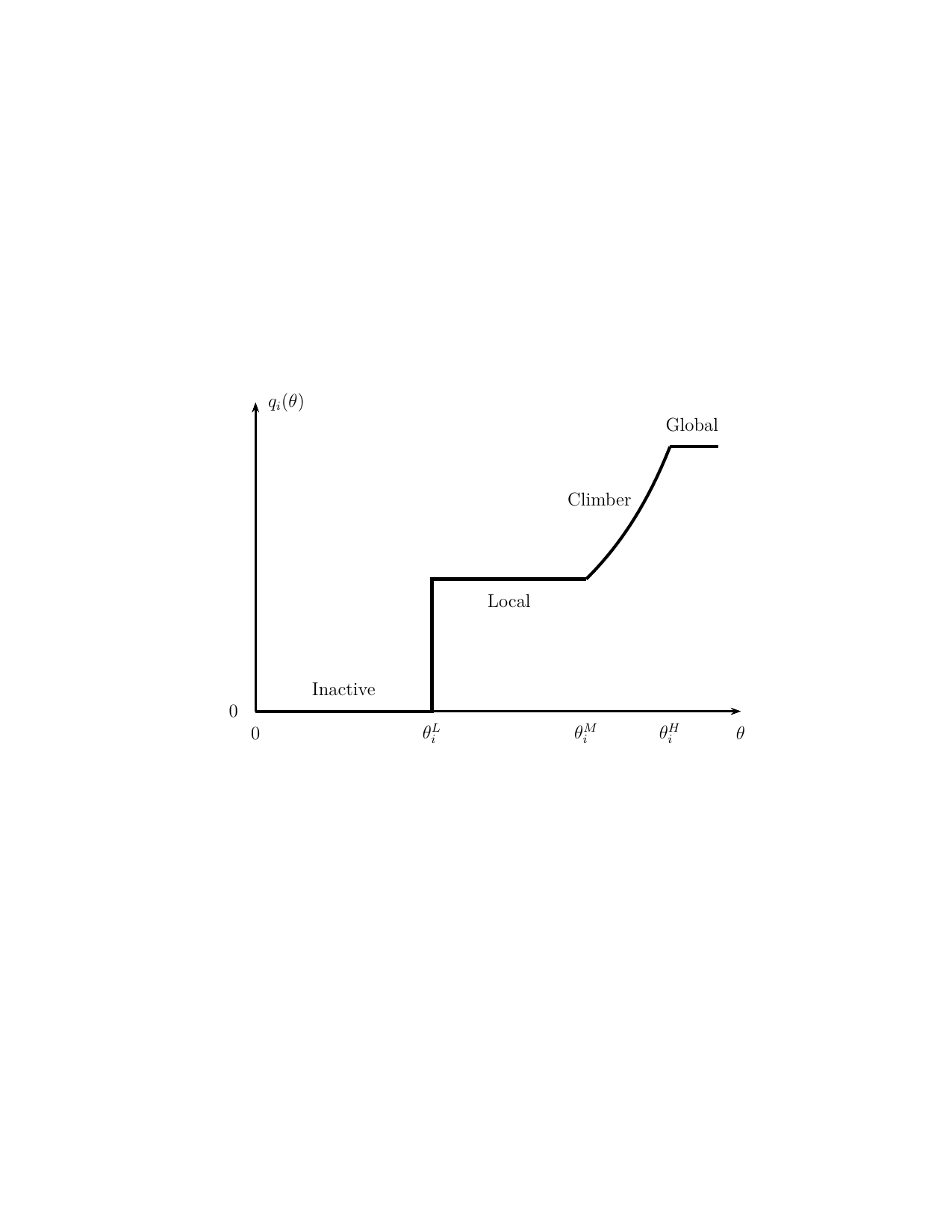}\label{fig:quality}}\hfill
    \subfigure[Effort]{\includegraphics[width=0.45\linewidth]{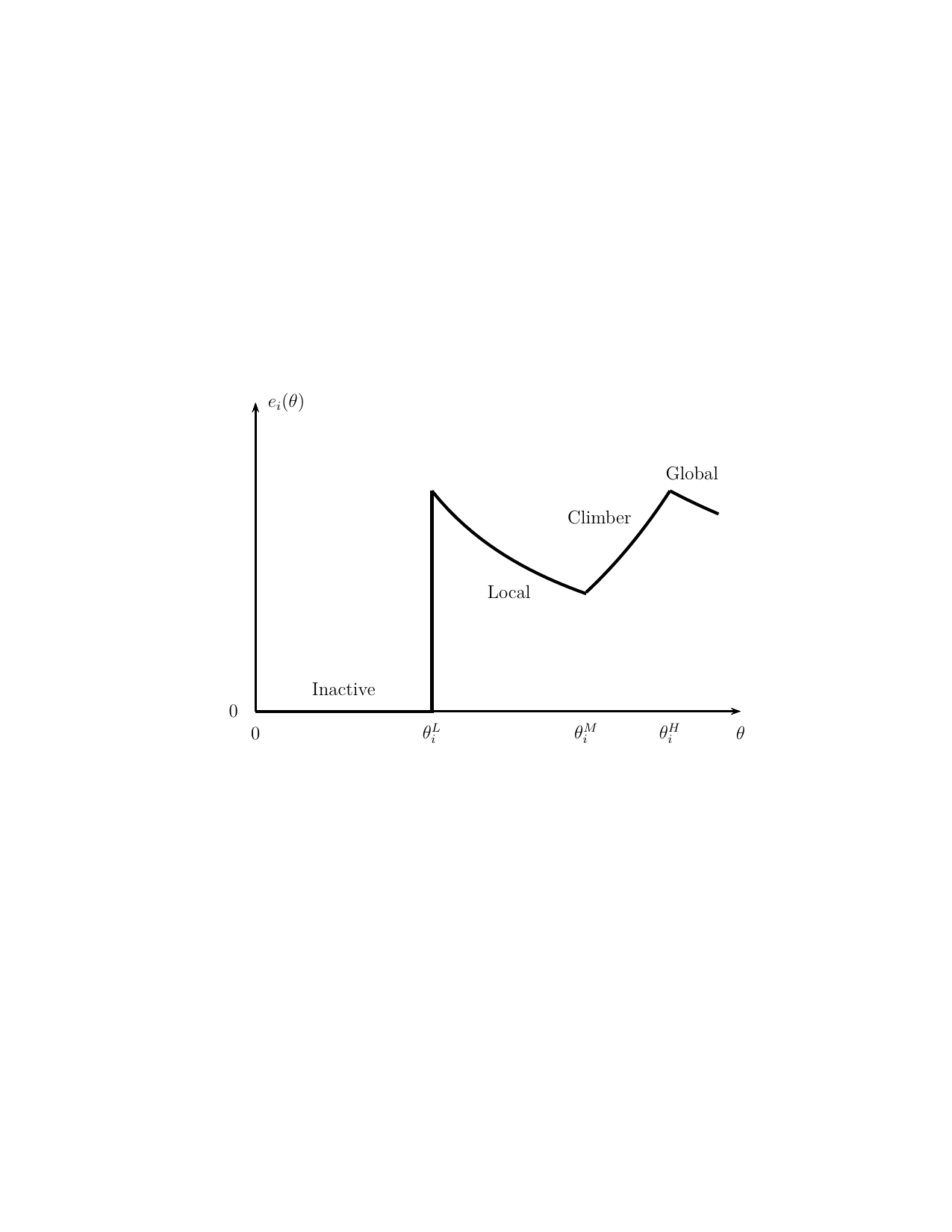}\label{fig:effort}}
    \caption{Quality and Effort as functions of ability $\theta$ for category-$i$ creators}
     \end{center}
    \label{fig:inc}
\end{figure}

\subsection{Implementation}\label{sec:implementation}

\paragraph{Creator Side}%
The optimal direct mechanism specifies effort and total attention $(e_i(\theta), X_i(\theta))$ for each creator ability $\theta$ in category $i$. As an outcome, a $(\theta,i)$-creator produces observable content quality at $q_i(\theta)=\theta e_i(\theta)$ and earns total attention $X_i(\theta)$. To implement this, the platform can use an attention allocation rule $\chi_i$ that maps observed quality $q$ into the promised total attention $\chi_i(q)$ for each creator in category $i$. Specifically, we can write $q_i(\theta)=\theta e_i(\theta)=\int_0^{\theta}\tilde{\theta}\ \dd u\lt(X_i(\tilde{\theta})\rt)$ from \autoref{prop:incomplete}. Eliminating the dependence on $\theta$, we have the following result.

\begin{proposition}[Implementation: Creators]\label{prop:impc}\ \\
For creators in category $i$, the platform promises total attention $\chi_i(q)$ implicitly defined by:
\begin{eqnarray}
q=\int_0^{\chi_i(q)}X_i^{-1}(X)\dd u(X)\nn,\label{eq:impc}
\end{eqnarray}
where $X_i^{-1}$ is the generalized inverse. Then effort $e_i(\theta)$ in \autoref{prop:incomplete} is optimal for $\theta$-creators.
\end{proposition}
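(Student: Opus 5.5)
The plan is to reduce the creator's problem under the quality-contingent rule $\chi_i$ to a one-dimensional choice of attention level along the schedule $X_i(\cdot)$ of \autoref{prop:incomplete}, and then to show that the resulting payoff is single-peaked at $X_i(\theta)$. Facing $\chi_i$, a $(\theta,i)$-creator picks a quality $q\gs 0$ and obtains $u(\chi_i(q))-q/\theta$, because producing quality $q$ costs effort $q/\theta$.

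\emph{Step 1: the rule is well defined and reproduces the direct mechanism on path.} Define $Q_i(X)\equiv\int_0^{X}X_i^{-1}(s)\,\dd u(s)$ for $X\in[0,\beta]$, with $X_i^{-1}(s)\equiv\inf\{\tilde\theta: X_i(\tilde\theta)\gs s\}$. Since $X_i^{-1}$ is nondecreasing and positive, and $u$ is strictly increasing, $Q_i$ is strictly increasing and continuous. Hence the implicit equation $q=Q_i(\chi_i(q))$ has a unique solution $\chi_i(q)=Q_i^{-1}(q)$ for $q\in[0,Q_i(\beta)]$. For $q>Q_i(\beta)$ I would set $\chi_i(q)=\beta$.

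Next, change variables $s=X_i(\tilde\theta)$ in the expression $q_i(\theta)=\int_0^\theta\tilde\theta\,\dd u(X_i(\tilde\theta))$ from \autoref{prop:incomplete}. This should give $q_i(\theta)=Q_i(X_i(\theta))$, and therefore $\chi_i(q_i(\theta))=X_i(\theta)$. Two features need care here:
\begin{itemize}
\item At the entry jump at $\theta_i^L$, and at any other jump, the generalized inverse equals the jump type on the whole skipped interval of attention levels. The contribution there is $\theta_i^L u(X_i(\theta_i^L))$, which matches the atom of $\dd u(X_i(\cdot))$ at $\theta_i^L$.
\item On flat segments of $X_i$, both sides pick up nothing.
\end{itemize}
Consequently, reporting-equivalent behaviour yields exactly the payoff $U_i(\theta)$ prescribed by \autoref{lem:IC}.

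\emph{Step 2: optimality.} Qualities above $Q_i(\beta)$ are strictly dominated by $q=Q_i(\beta)$, since they bring no extra attention but cost more effort. So it suffices to let the creator choose $X\in[0,\beta]$ and produce $q=Q_i(X)$, with payoff
$$\pi(X)=u(X)-\frac{1}{\theta}\int_0^X X_i^{-1}(s)\,\dd u(s)=\int_0^X\Bigl(1-\frac{X_i^{-1}(s)}{\theta}\Bigr)\dd u(s).$$
The integrand is nonnegative exactly when $X_i^{-1}(s)\ls\theta$, i.e., when $s\ls X_i(\theta)$ (up to ties on flat or jump parts), and nonpositive otherwise. Hence $\pi$ is nondecreasing up to $X_i(\theta)$ and nonincreasing afterward, so it is maximized at $X=X_i(\theta)$. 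The creator therefore optimally produces $q_i(\theta)$, which means exerting effort $e_i(\theta)=q_i(\theta)/\theta$.

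Participation follows because $\pi(X_i(\theta))\gs\pi(0)=0$. For types below $\theta_i^L$ the integrand is negative on the initial segment, so these types optimally choose $q=0$, consistent with $X_i(\theta)=0$.

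\emph{Main obstacle.} The main obstacle I expect is the bookkeeping with the generalized inverse at the discontinuity of $X_i$ at $\theta_i^L$ and at any flat segments. Three things must be checked there:
\begin{itemize}
\item the sign of $1-X_i^{-1}(s)/\theta$ on the skipped interval;
\item that the on-path identity $\chi_i(q_i(\theta))=X_i(\theta)$ holds at the jump;
\item that indifferences at ties do not break optimality.
\end{itemize}
I would handle these by working directly with the Lebesgue–Stieltjes measure $\dd u(X)$ and the right-continuous version of $X_i$.
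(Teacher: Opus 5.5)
Your proof is correct and follows essentially the paper's route: both show the creator's payoff is single-peaked because the marginal return at attention level $\chi_i(q)$ has the sign of $\theta-X_i^{-1}(\chi_i(q))$. The paper gets this sign by differentiating $u(\chi_i(q))-q/\theta$ via the implicit function theorem. Your Stieltjes-integral formulation in the attention variable, together with your explicit check that $\chi_i(q_i(\theta))=X_i(\theta)$ (which the paper takes as immediate), stays valid at the jump at $\theta_i^L$ and on flat segments of $X_i$, where that derivative need not exist. One small boundary fix is needed: if $X_i(\hi{\theta})<\beta$, then $X_i^{-1}=+\infty$ on $(X_i(\hi{\theta}),\beta]$, so $Q_i(\beta)$ is infinite, and you should cap $\chi_i$ at $X_i(\hi{\theta})$ rather than at $\beta$.
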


\autoref{prop:impc} proposes one implementation directly from the optimal algorithm. Each creator maximizes $u(\chi_i(q))-\tfrac{q}{\theta}$ and self-sorts into the intended quality level for their ability. The implementation can be further simplified for practice, by truncating quality levels that are not chosen by any creator. For example, each category can set a minimum quality threshold $\lo{q}_i\equiv q(\theta_i^L)$, below which no attention is granted. It can also set a quality cap $\hi{q}\equiv q(\theta_i^H)$ at which all viewer attention is deployed to reward the creator.

\paragraph{Viewer Side} The optimal direct mechanism assigns a bundle content-ads mix for each $(v,j)$-viewer. Define:
$$\vartheta_j(v)\equiv\{\theta:x_{jj}(\theta,v)=1\}=\lt\{\theta:G_j(v)\gs1-\tfrac{X_j(\theta)}{\beta_j}\rt\}$$
as the set of creator types in category $j$ whose content is recommended to $(v,j)$-viewers. Then a $(v,j)$-viewer receives all content from $\vartheta_j(v)$ plus a total amount $\tfrac{\alpha_j}{c}\int_{\vartheta_j(v)}(v\theta e_j(\theta)-c)\dd F_j(\theta)-\tfrac{W_j(v)}{c}$ of non-preferred content and ads.

To implement this naturally, we use a single \emph{ordered feed} for all creators in each category. Ads are blended into the content, and their frequency varies along the feed. All items are arranged in a fixed order, and a viewer consumes them along this order and chooses where to stop. For intuition, although the model is static, the feed can be interpreted as unfolding over time. Each moment on the platform consumes one unit of attention. Viewers move through an ordered mix of content and ads and choose where to stop, with the stopping point determining total attention. By varying the composition and intensity of content along the feed, the platform induces viewers with different willingness to stay to choose different stopping points, thereby replicating the optimal allocation.

The within-category quality-weighted content has been defined as $Q_j(v)=\alpha_j\int_{\vartheta_j(v)}\theta e_j(\theta)\dd F_j(\theta)$. Further, define:
\begin{eqnarray}
A_j(v)\equiv\frac{\alpha_jv}{c}\int_{\vartheta_j(v)}\theta e_j(\theta)\dd F_j(\theta)-\frac{W_j(v)}{c}=\frac{vQ_j(v)-W_j(v)}{c}\nn
\end{eqnarray}
as the total attention a $(v,j)$-viewer spends.%
The next result proposes the exact ordered feed for all viewers in category $j$, described by the cumulative within-category quality-weighted content $Q$ as a function of cumulative attention $A$ spent in the feed.

\begin{figure}[t]
    \begin{center}
    \includegraphics[width=0.55\textwidth]{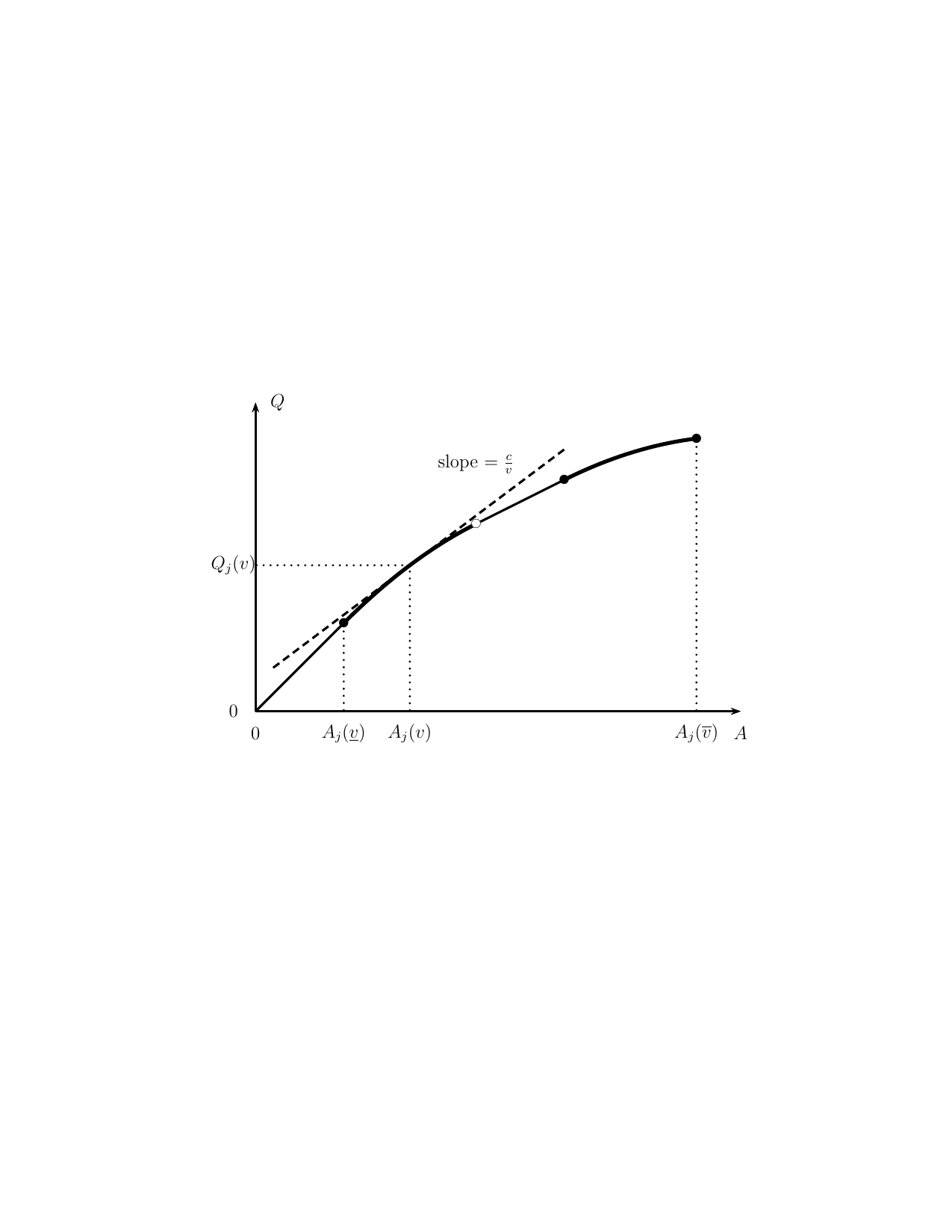}
\end{center}
\vspace{-0.5cm}
\caption{Implementation: Viewer Side}
\label{fig:allocation-ben}
\medskip

{\scriptsize Notes: The thick curves depict the set of points $\{(A_j(v),Q_j(v))\}_v$, while the thin curves are obtained by concavification.%
The dashed curve represents a $v$-viewer's indifference curve in the $(Q,A)$ space,%
attained only by stopping exactly at $A_j(v)$.}
\end{figure}

\begin{proposition}[Implementation: Viewers]\label{prop:impv}\ \\
For viewers in category $j$, the platform recommends the same ordered feed characterized by function $K_j$, such that cumulative attention $A$ is rewarded with cumulative within-category quality-weighted content $K_j(A)$, where:
$$K_j(A)\equiv\sup\lt\{\sum_k\rho_kQ_j(v_k):\sum_{k=1}^n\rho_kA_j(v_k)=A,\rho_k\gs0,\sum_{k=1}^n\rho_k\ls1\rt\}$$
is the concave envelope of the graph $\{(0,0)\}\cup\{(A_j(v),Q_j(v))\}_{v\in V}$.
\end{proposition}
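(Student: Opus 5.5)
The plan is to show that a viewer of type $(v,j)$ who faces the feed $K_j$ and chooses where to stop will optimally stop at cumulative attention $A_j(v)$. At that point she collects exactly $Q_j(v)$ units of quality-weighted content, so she obtains the payoff $W_j(v)$ of the direct mechanism in \autoref{prop:incomplete}.

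In the ordered feed, stopping at cumulative attention $A$ gives a $v$-viewer the payoff $vK_j(A)-cA$. This holds because each unit of attention costs $c$, and preferred-category content contributes $v$ times its quality. Non-preferred content and ads contribute nothing beyond their cost, so they just fill the gap between the content mass and $A$. By definition of $A_j(v)$ we have $vQ_j(v)-cA_j(v)=W_j(v)$. The task therefore reduces to two claims:
\begin{enumerate}[label=(\roman*)]\tight
\item $K_j(A_j(v))=Q_j(v)$ for every $v$, so that each point $(A_j(v),Q_j(v))$ lies on the concave envelope;
\item $vK_j(A)-cA\ls W_j(v)$ for all $A$, so that no other stopping point is better.
\end{enumerate}
A preliminary check is that $K_j$ is nondecreasing and $K_j(A)\ls A\cdot$(max quality rate). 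This ensures the feed is feasible, meaning cumulative content never requires negative ads. I would verify it using $Q_j$ increasing (\autoref{lem:IC}) and the nesting of $\vartheta_j(v)$, which makes the feed for higher $v$ a continuation of the feed for lower $v$.

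Claim (ii) is the supporting-line argument. Fix $v$. The map $(A,Q)\mapsto vQ-cA$ is linear, so its supremum over the concave envelope equals its supremum over the generating set $\{(0,0)\}\cup\{(A_j(v'),Q_j(v'))\}$. This uses the constraint $\sum\rho_k\ls1$ together with the point $(0,0)$, whose value is $0$. On the generating points, viewer IC \eqref{eq:IC-c} gives
\[
vQ_j(v')-cA_j(v')=W_j(v')+(v-v')Q_j(v')\ls W_j(v).
\]
Viewer IR gives $W_j(v)\gs0$ at the origin. Hence every point of the envelope satisfies $vQ-cA\ls W_j(v)$, which proves (ii).

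Claim (i) then follows. The bound just proved says that the line $vQ-cA=W_j(v)$ weakly lies above the entire envelope, and the point $(A_j(v),Q_j(v))$ lies on that line. So $K_j(A_j(v))\gs Q_j(v)$ by construction, while (ii) gives $vK_j(A_j(v))-cA_j(v)\ls W_j(v)=vQ_j(v)-cA_j(v)$. Since $v>0$, this yields $K_j(A_j(v))\ls Q_j(v)$, and the two inequalities give equality. Thus stopping at $A_j(v)$ attains the optimum $W_j(v)$, and the viewer weakly prefers it. Where ties occur, as in concavified flat regions, I would break them in the platform's favor, as is standard.

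The main obstacle is the ordering and feasibility of the feed. The envelope must be realizable as a sequence of items drawn from actual creator types. The increment of $K_j$ between $A_j(v)$ and $A_j(v')$ must equal the quality-weighted content from $\vartheta_j(v')\setminus\vartheta_j(v)$, padded with ads and cross-category content. On concavified segments between generating points, the platform uses a uniform mixing rate. I would handle this using the nesting of $\vartheta_j$ and monotonicity of $A_j$; the latter follows from $A_j' = vQ_j'/c\gs0$ via the envelope formula $W_j'=Q_j$.
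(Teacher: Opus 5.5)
Your argument is correct and is essentially the paper's proof. Viewer IC gives $vQ_j(v')-cA_j(v')\le W_j(v)$ at every generating point, and IR gives $0\le W_j(v)$ at the origin, so the line $vQ-cA=W_j(v)$ supports the whole envelope; this yields both $K_j(A_j(v))=Q_j(v)$ and the optimality of stopping at $A_j(v)$.

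Your last paragraph, on realizing $K_j$ with actual creator items, goes beyond what the paper proves, since the paper takes the feed $K_j$ as given. There, monotonicity of $A_j$ alone would not suffice, because $A_j(v')-A_j(v)=\frac{1}{c}\int_v^{v'}\tilde v\,\dd Q_j(\tilde v)$ can fall short of the newly added content mass when marginal viewers receive subsidized (negative local surplus) content.
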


\autoref{prop:impv} uses variables in the optimal algorithm to construct the implementation. Each $(v,j)$-viewer maximizes $vK_j(A)-cA$ and optimally stops at the intended ``stopping time'' $A_j(v)$. By construction, incentive compatibility already ensures that a $(v,j)$-viewer weakly prefers stopping at $A_j(v)$ attention to stopping at $A_j(v')$ for any $v'$. We illustrate this construction in Figure \ref{fig:allocation-ben}.

Our implementation allows viewers to choose a level of total attention $A$ that does not correspond to any type's allocation specified by the direct mechanism. First, the set of equilibrium total attention $\{A_j(v)\}_{v\in V}$ may be disconnected. The concave envelope $K_j(A)$ construction extends the viewers' feasible choice to the convex hull of $\{A_j(v)\}_{v\in V}\cup\{0\}$ but ensures that the viewer does not stop before consuming her intended expanded set.
Second, $A_j(\cdot)$ is maximized at $\bar v$. One can extend viewer's feasible set by allowing her to browse beyond the amount intended for the highest type $\hi{v}$ in her category without violating IC: for $A>A_j(\hi{v})$, display only cross-category content or ads so that $K_j(A)$ plateaus at $Q_j(\hi{v})$ for all $A>A_j(\hi{v})$ and the viewer will automatically choose not to do so.

This implementation mirrors the ``infinite scroll'' interface common on social media platforms. Within each category, recommended content forms an ordered feed that moves from higher- to lower-surplus matches for each viewer. It also predicts that lower-ranked content receives less attention and has lower average match quality for that viewer.

\section{Platform Payments}\label{sec:money}

We now allow the platform to incentivize creators through monetary transfers as well as recommendation-based attention. One possible implementation of the transfer instrument is direct monetary payments from the platform to creators, as in TikTok Creator Rewards and YouTube Partner Program.%
We show that monetary payments often complement algorithmic attention: even when monetary transfers are feasible, attention remains a valuable incentive instrument, and the platform may optimally combine exposure with transfers to motivate content production. Moreover, the analysis helps explain why different creator incentive models coexist across platforms, and yields policy implications for regulations that restrict profiling-based advertising.

To formally incorporate the monetary channel, we assume that a creator values both attention and money. His payoff from total attention $X\gs0$ and transfer $T\gs0$ from the platform is $u(Y)$ where
$%
Y\equiv \pi(X)+T$
is the \emph{total money-metric reward}, decomposed into $\pi(X)$, the money-metric payoff from attention $X$, and direct payment $T$. We assume $u$ is strictly concave to capture overall diminishing returns while allowing $\pi$ to be any increasing and continuously differentiable function.%

With monetary transfers, the direct mechanism receives report $\theta$ and recommends effort $e_i(\theta)$. If the observable quality $q$ matches $\theta e_i(\theta)$, then attention profile $x_{i\cdot}(\theta,\cdot)$ is assigned and a direct transfer $T_i(\theta)\gs0$ is made.\footnote{We restrict attention to nonnegative transfers. Allowing $T_i(\theta)<0$ would amount to allowing the platform to sell recognition or visibility to creators, as in models of paid certification or status provision; see, e.g., \cite{bar2025selling}.} As in the main model, we substitute out the platform-inserted ads $a_j(v)$ and simplify the platform's problem to:
\begin{eqnarray}
\hspace*{-16pt}\max_{\substack{\{e_i(\cdot)\}_i,\{T_i(\cdot)\}_i,\\ \{x_{ij}(\cdot,\cdot)\}_{i,j}}}&\hspace*{-6pt}&\hspace*{-6pt}\sum_i\alpha_i\int_{\Theta}\lt(\sum_j\beta_j\int_Vx_{ij}(\theta,v)\Big(\Psi_j(v)\theta e_i(\theta)\mathbbm{1}\{i=j\}-c\Big)\dd G_j(v)-\frac{c}{z}T_i(\theta)\rt)\dd F_i(\theta),\nn\\
\hspace*{-16pt}\mbox{s.t.}&\hspace*{-6pt}&\hspace*{-6pt}e_i(\theta)+U_i(\theta)=u\lt(\pi(X_i(\theta))+T_i(\theta)\rt),\ \forall\ \theta,i,\nn\\
\hspace*{-16pt}&\hspace*{-6pt}&\hspace*{-6pt}U_i(\theta)=\int_0^{\theta}\frac{e_i(\tilde{\theta})}{\tilde{\theta}}\dd\tilde{\theta},\ \forall\ \theta,i,\quad \theta e_i(\theta) \mbox{ is increasing},\nn\\
\hspace*{-16pt}&\hspace*{-6pt}&\hspace*{-6pt}\sum_i\alpha_i\int_{\Theta}x_{ij}(\theta,v)\Big(\Psi_j(v)\theta e_i(\theta)\mathbbm{1}\{i=j\}-c\Big)\dd F_i(\theta)\gs0,\ \forall\ v,j.\nn
\end{eqnarray}
The incentives enter the problem in two places. On one hand, $T_i(\theta)$ is directly deducted from the platform's objective. On the other hand, it enters the utility of the creators so that higher effort can be extracted. As before, we assume small $c$ and ignore the non-negative advertising constraint.

To avoid ironing, in the rest of this section, we assume that $\tfrac{u(\pi(X)+T)-u(T)}{X}$ is decreasing in $X$ and $u(\pi(X)+T)\ \mathbb{E}_i\lt[\Psi_i(v)\big|G_i(v)\gs 1-\tfrac{X}{\beta_i}\rt]$ is strictly increasing for all $X\in[0,\tfrac{\beta_i}{2}]$ and $T\gs0$. 

\begin{proposition}[Optimal Algorithm with Money]\label{prop:money}\ \\
For each category $i$, the optimal algorithm is characterized by an increasing total attention $X_i(\cdot)$ such that $x_{ii}(\theta,v)$ and $R_i(\theta)$ are similarly defined as in \autoref{prop:incomplete}, and a potentially non-monotone transfer $T_i(\cdot)$ such that $e_i(\theta)=\frac{1}{\theta}\int_0^{\theta}\tilde{\theta}\ \dd u\lt(\pi(X_i(\tilde{\theta}))+T_i(\tilde{\theta})\rt)$.%
For all $\theta$ such that $X_i(\theta)\gs\beta_i$, the ability space is partitioned into alternating intervals of two patterns:%
\begin{itemize}\tight
    \item%
    $X_i(\theta)$ strictly increases while $T_i(\theta)=0$, and
    \item%
    $X_i(\theta)$ stays constant while $T_i(\theta)$ strictly increases.%
\end{itemize}
If $\pi$ is concave and $\pi'(\beta_i)>z$, then there exists $\theta_i^{\$}>\theta_i^M$ such that%
$T_i(\theta)=0$ for $\theta\ls\theta_i^{\$}$, and $T_i(\theta)>0$ and $X_i(\theta)=X_i(\theta_i^{\$})$ for $\theta>\theta_i^{\$}$.

\end{proposition}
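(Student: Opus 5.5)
The plan is to follow the proof of \autoref{prop:incomplete} and add the transfer as a second instrument. First, the separability argument of \autoref{lem:sep} carries over unchanged, because $T_i$ enters only creator $i$'s constraint and is a cost that is additive across creators. So I would fix a category $i$ and work with $(e_i,x_{ii},R_i,T_i)$. Write $Y_i(\theta)\equiv\pi(X_i(\theta))+T_i(\theta)$. Combining the creator's binding utility identity with \eqref{eq:U-inc} gives $e_i(\theta)+\int_0^\theta e_i(\tilde\theta)/\tilde\theta\,\dd\tilde\theta=u(Y_i(\theta))$. This is the same ODE as in the attention-only case with $u(X_i)$ replaced by $u(Y_i)$, so $q_i(\theta)=\int_0^\theta\tilde\theta\,\dd u(Y_i(\tilde\theta))$, which is the stated effort formula. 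The within-category allocation $x_{ii}$ and the pecking order $R_i=(X_i-\beta_i)^+$ then follow verbatim from \autoref{prop:incomplete}; the ironing assumptions stated before the proposition are the $T$-augmented versions of those used there. The remaining task is to determine how $Y_i$ is split between $X_i$ and $T_i$.

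Define $S_i(\theta)\equiv\beta_i\int_V x_{ii}(\theta,v)\Psi_i(v)\,\dd G_i(v)$. The category-$i$ objective becomes
\[
\int_\Theta q_i(\theta)S_i(\theta)\,\dd F_i(\theta)-c\int_\Theta R_i(\theta)\,\dd F_i(\theta)-\tfrac{c}{z}\int_\Theta T_i(\theta)\,\dd F_i(\theta),
\]
up to the within-category attention cost term. On the region where $X_i(\theta)\gs\beta_i$, every own-category viewer watches, so $S_i(\theta)=\beta_i\mathbb{E}_i[\Psi_i(v)]\equiv\bar S_i$ is constant. Since the optimal $X_i$ is increasing, this region is an upper interval of abilities. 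Substituting $q_i$ and integrating by parts should give, on this interval, the pointwise problem
\[
\max_{X\gs\beta_i,\ T\gs0}\ \bar S_i\,\Phi_i(\theta)f_i(\theta)\,u(\pi(X)+T)-c f_i(\theta)(X-\beta_i)-\tfrac{c}{z}f_i(\theta)T.
\]
The boundary term from integrating by parts must be matched with the lower region's contribution at the junction. I would check this carefully, because the lower region's $S_i$ varies with the allocation there.

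This pointwise problem is a cost-minimization across two instruments. Raising $Y$ through attention has marginal cost $c/\pi'(X)$, while raising it through money has marginal cost $c/z$. The optimum therefore solves $\min\{c(X-\beta_i)+\tfrac{c}{z}T:\pi(X)+T=Y\}$ for the target $Y$, and then picks $Y$ increasing in $\theta$, since $\Phi_i$ is strictly increasing (monotone hazard) and $u$ is strictly concave. The cost-minimizing path in $(X,T)$ as $Y$ grows is a lower-envelope construction: it increases $X$ where $\pi'(X)>z$ and increases $T$ where $\pi'(X)<z$. When $\pi$ is non-concave, $\pi'-z$ can change sign repeatedly, and $X$ may jump across non-convex stretches. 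This produces the alternating intervals in which either $X_i$ rises with $T_i=0$ or $X_i$ is flat with $T_i$ rising. Monotonicity of $Y_i$ makes $q_i$ increasing, so \eqref{eq:U-inc} holds.

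For the concave case with $\pi'(\beta_i)>z$, let $X^\$$ solve $\pi'(X^\$)=z$, or set $X^\$=\beta$ if $\pi'(\beta)>z$. Attention is strictly cheaper at $X=\beta_i$, so types just above $\theta_i^M$ (where $X_i=\beta_i$) use only $X$; hence $\theta_i^\$>\theta_i^M$. The cutoff is defined by $\bar S_i\Phi_i(\theta_i^\$)u'(\pi(X^\$))\pi'(X^\$)=c$. Above it, $X$ stays at $X^\$$ and $T$ solves $\bar S_i\Phi_i(\theta)u'(\pi(X^\$)+T)=c/z$, which is strictly increasing in $\theta$.

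The main obstacle is the junction with the region $X_i<\beta_i$. There $S_i$ depends on the allocation, so the integration by parts couples types through the ``doubly global'' term, and I would need to show that $T_i=0$ there and that the pecking-order structure is undisturbed. I would first try to show that transfers are dominated: a dollar of $T$ costs $c/z$, whereas within-category attention is cheaper (given $\pi'(\beta_i)>z$ for the last claim), and the monotonicity assumption on $u(\pi(X)+T)\,\mathbb{E}_i[\Psi_i(v)\mid\cdot]$ should rule out interior mixing there.
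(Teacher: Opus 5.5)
Your route is essentially the paper's. On the region $X_i\gs\beta_i$, your pointwise problem is exactly the paper's reduced Hamiltonian $\beta_i\underline{v}\,\Phi_i(\theta)u(Y_i)-cX_i-\tfrac{c}{z}T_i$ (note $\bar S_i=\beta_i\mathbb{E}_i[\Psi_i(v)]=\beta_i\underline{v}$). The split of $Y_i$ between exposure and money is the cost minimization behind \autoref{lem:resource}, and the concave case uses the same first-order conditions.

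Your worry about a junction term is unfounded. Fubini and integration by parts give $\int q_iS_i\,\dd F_i=\int_\Theta u(Y_i(\theta))\bigl[\theta S_i(\theta)f_i(\theta)-\int_\theta^{\overline{\theta}}S_i\,\dd F_i\bigr]\dd\theta$ with no boundary term. The bracket equals $\bar S_i\Phi_i(\theta)f_i(\theta)$ whenever every type above $\theta$ has $X_i\gs\beta_i$; this is the paper's closed-form costate $\gamma_i=-\theta\beta_i\underline{v}(1-F_i)$. That is why the step you assume rather than prove is load-bearing. Monotonicity of $X_i$ (and of $Y_i$) over the whole type space is part of the claim, and it does not follow verbatim from \autoref{prop:incomplete}, because $T_i$ is an extra control and the ironing assumption now involves $T$. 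The paper gives a separate single-crossing argument in $(X_i,Y_i)$, using the increasing hazard rate and the bound $\gamma_i\gs-\beta_i\underline{v}\theta(1-F_i)$, which is available once $X_i\gs\beta_i/2$. You need to supply that argument.

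There are also two smaller fixes and one clarification.

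First, in the corner case $\pi'(\beta)>z$, your condition $\bar S_i\Phi_i(\theta)u'(\pi(X^{\$}))\pi'(X^{\$})=c$ is the first-order condition for attention alone. It therefore picks out $\theta_i^H$, where reach saturates, not the money cutoff. Types just above $\theta_i^H$ satisfy $\bar S_i\Phi_i u'(\pi(\beta))<c/z$ and get $T_i=0$. The correct condition in both cases is $\bar S_i\Phi_i(\theta_i^{\$})u'(\pi(X^{\$}))=c/z$, which coincides with yours only when $\pi'(X^{\$})=z$.

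Second, the region $X_i<\beta_i$ is not a real obstacle. Suppose $X_i(\theta)\ls\beta_i$ and $T_i(\theta)>0$. Raise $X_i$ and lower $T_i$ while holding $Y_i$ fixed. Then $q_i$ is unchanged because it depends only on $Y_i$, and $S_i(\theta)$ weakly rises. The cost changes by $c(1-\pi'(X_i)/z)\,\dd X_i<0$, since concavity gives $\pi'(X_i)\gs\pi'(\beta_i)>z$. This is the paper's one-line argument, and it needs no appeal to the ironing assumption.

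Third, for non-concave $\pi$ the local rule ``raise $X$ where $\pi'>z$'' does not trace the cost-minimizing path. Taken literally, it would never restart exposure after a money stretch. Given $Y$, the optimal $X$ is the largest maximizer of $\pi(X)/z-X$ over $[\beta_i,\min\{\pi^{-1}(Y),\beta\}]$. So after a money stretch, $X$ stays flat even where $\pi'>z$, until $\pi(X)/z-X$ regains its earlier maximum. At that point $X$ jumps and $T_i$ resets to zero. Stating it this way delivers the alternating pattern cleanly.
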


\autoref{prop:money} predicts a similar segmentation of creator abilities by the total attention they receive. Its main novelty is to show how monetary transfers interact with attention allocation in incentivizing creators on the extended-reach segment, where $X_i(\theta)\ge \beta_i$.\footnote{The discussion focuses on sufficiently high-ability creators. For creators with $\theta\in[\theta_i^L,\theta_i^M]$, attention comes entirely from within-category viewers and generates positive viewer surplus. This adds another benefit of exposure and makes the optimal use of monetary transfers more complex, although the trade-off discussed below remains present.} In general, transfers follow an \emph{increase-and-reset} pattern: transfers rise when extended reach is held fixed, and reset to zero when extended reach jumps to a higher level. When the money-metric payoff from attention, $\pi$, exhibits diminishing marginal returns, this pattern collapses to a single cutoff, so that only sufficiently high-ability creators receive monetary payments.

We briefly discuss the intuition here. To incentivize these creators, what matters is the total money-metric reward $Y_i=\pi(X_i)+T_i$. Once the mechanism determines the reward $Y$ required to incentivize a creator type, what remains is a cost-minimization problem:
\begin{equation}\label{min-payment}
\min_{X_i\geq\beta_i,T_i\geq 0}\; X_i+T_i/z
\quad\text{s.t.}\quad
\pi(X_i)+T_i\geq Y.
\end{equation}
The objective is the \emph{attention-equivalent cost} to the platform. Direct exposure costs attention one-for-one, while one unit of money costs $1/z$ units of advertising attention. Thus, the platform delivers the required reward using extended reach, money, or both, depending on which instrument provides the cheaper reward.

To characterize how the platform optimally resolves the trade-off between exposure and transfers, consider the dual of problem \eqref{min-payment}:
\begin{equation}\label{max-payment}
\hat\pi(\widehat X_i)\equiv
\max_{X_i\geq\beta_i,T_i\ge0}
\pi(X_i)+T_i 
\quad\text{s.t.}\quad
X_i+T_i/z\le \widehat X_i,
\end{equation}
where $\widehat X_i$ measures attention-equivalent resources. One unit of exposure uses one unit of resources, while one unit of monetary transfer uses $1/z$ units because each unit of attention can generate $z$ units of advertising revenue. 

This optimization problem implies two properties. First, the platform can always spend all resources as exposure, implying
\begin{equation}\label{hat-c1}
\hat\pi(X)\ge \pi(X), \quad \forall X\ge \beta_i,
\end{equation}
Second, after reaching any point $(X,\hat\pi(X))$, each additional unit of attention-equivalent resource can always be converted into $z$ units of monetary reward, implying
\begin{equation}\label{hat-c2}
\hat\pi(X')-\hat\pi(X)\ge z(X'-X),
\quad \forall X'>X\geq\beta_i.
\end{equation}
Conversely, any function satisfying these two properties can be implemented through a feasible combination of exposure and transfers. Therefore, $\hat\pi$ is the pointwise smallest function satisfying conditions \eqref{hat-c1} and \eqref{hat-c2}. Geometrically, $\hat\pi$ is the $z$-ironing of $\pi$, obtained as the upper envelope of all $z$-sloped rays starting from $\pi$ and pointing to the right (\autoref{fig:allocation-ben1}). By a standard duality argument, $\hat\pi$ characterizes the cheapest way to deliver any required reward in problem \eqref{min-payment}. Where $\hat\pi$ coincides with $\pi$, rewards are delivered least costly through exposure alone. Where $\hat\pi$ lies strictly above $\pi$, exposure alone is inefficient and the gap is optimally filled through monetary transfers.

\begin{figure}
\begin{center}
\includegraphics[width=0.6\textwidth]{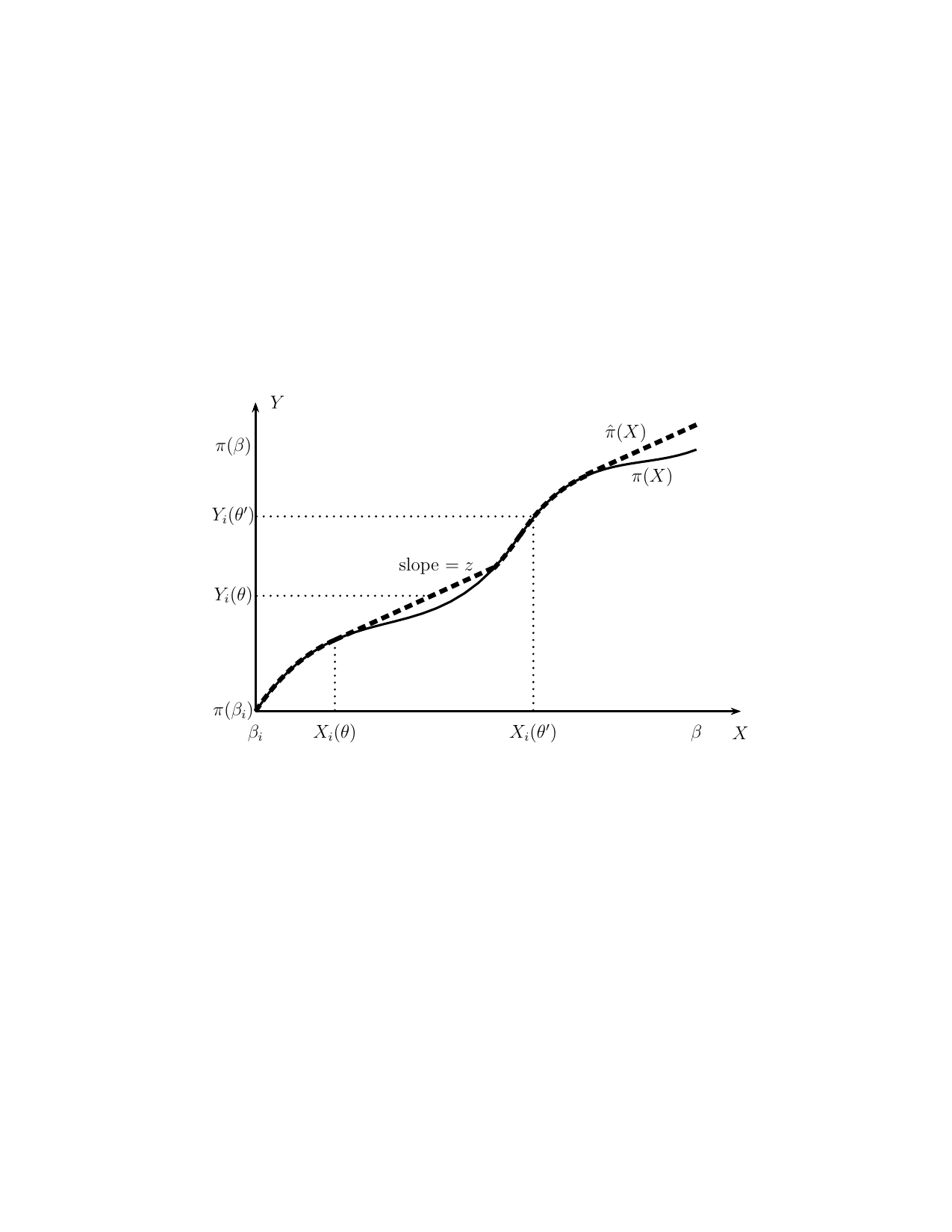}
\end{center}
\vspace{-0.5cm}
\caption{Illustration of Optimal Allocation}
\medskip
{\scriptsize Notes: The solid curve represents $\pi$, while the dashed curve represents $\hat{\pi}$. Suppose the total money-metric reward is $Y_i(\theta)$ for some $\theta$, such that  
$\pi(\hat{\pi}^{-1}(Y_i(\theta)))<Y_i(\theta)$. Then the creator receives attention $X_i(\theta)$ and a monetary transfer $
T_i(\theta)=Y_i(\theta)-\pi(X_i(\theta))>0$. 
Suppose the total money-metric reward is $Y_i(\theta')$ for some $\theta'$, such that $\pi(\hat{\pi}^{-1}(Y_i(\theta')))=Y_i(\theta')$. The creator receives attention $X_i(\theta')$ and zero monetary transfer.
}\label{fig:allocation-ben1}
\end{figure}

The following lemma shows that the optimal allocation on the extended-reach segment is exactly the cost-minimizing implementation encoded by $\hat\pi$.

\begin{lemma}[Resource Optimization]\label{lem:resource}
When $X_i(\theta)\ge \beta_i$, we must have
$
Y_i(\theta)=\hat\pi(\hat X_i(\theta)),
$ 
where
$
\hat X_i(\theta)=X_i(\theta)+T_i(\theta)/z$ is the corresponding attention-equivalent  resource being used. Given $Y_i(\theta)$, the optimal attention level is
$
X_i(\theta)
=
\max\left\{
X:
\pi(X)+z\left(\hat\pi^{-1}(Y_i(\theta))-X\right)
=
Y_i(\theta)
\right\}.
$ 
\end{lemma}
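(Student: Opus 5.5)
The plan is a pointwise exchange argument. On the extended-reach segment the instruments $(X_i(\theta),T_i(\theta))$ affect the constraints only through $Y_i(\theta)=\pi(X_i(\theta))+T_i(\theta)$. They affect the objective only through the attention-equivalent cost $X_i(\theta)+T_i(\theta)/z$. Fix an optimal mechanism and a type $\theta$ with $X_i(\theta)\ge\beta_i$. By the pecking order of \autoref{prop:incomplete} (which \autoref{prop:money} inherits), $x_{ii}(\theta,\cdot)\equiv 1$ there, so within-category attention is pinned at $\beta_i$. The residual exposure $X_i(\theta)-\beta_i=R_i(\theta)$ is cross-category reach. In the objective of the money problem, a type-$\theta$ creator therefore contributes the fixed within-category virtual surplus minus the cost
\[
c\,R_i(\theta)+\tfrac{c}{z}T_i(\theta)=c\Big(X_i(\theta)+\tfrac{T_i(\theta)}{z}\Big)-c\beta_i .
\]
The effort constraint $e_i+U_i=u(\pi(X_i)+T_i)$, the envelope formula for $U_i$, and the monotonicity of $\theta e_i(\theta)$ all depend on $(X_i,T_i)$ only through $Y_i$. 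The non-negativity constraint is slack by the small-$c$ assumption.

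First I will show that $(X_i(\theta),T_i(\theta))$ must solve the cost-minimization problem \eqref{min-payment} with $Y=Y_i(\theta)$, for almost every such $\theta$. Suppose some $(X',T')$ with $X'\ge\beta_i$ and $T'\ge0$ satisfies $\pi(X')+T'\ge Y_i(\theta)$ at strictly lower cost. If the inequality is strict, I add to $T'$ until it binds; this still costs less. Replacing the allocation on a positive-measure set of such types leaves $Y_i$, and hence $e_i$, $U_i$ and $q_i$, unchanged, so every constraint still holds. It also keeps the within-category allocation (and so viewers' virtual surplus) intact, while strictly raising the objective, which is a contradiction. For the replacement I take any cross-category assignment $\{x_{ij}\}_{j\ne i}$ generating $R_i=X'-\beta_i$, as in \autoref{lem:sep}. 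A measurable-selection remark handles doing this on a set of types rather than one type at a time.

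Next I will identify the value of \eqref{min-payment} through its dual \eqref{max-payment}. Property \eqref{hat-c2} makes $\hat\pi$ strictly increasing (its slope is at least $z>0$) and continuous, so $\hat\pi^{-1}$ is well defined on its range. The claim is that the minimal resource needed to deliver $Y$ is $\hat\pi^{-1}(Y)$. Any feasible $(X,T)$ with resource $\hat X=X+T/z$ gives $\pi(X)+T\le\hat\pi(\hat X)$ by the definition of $\hat\pi$, so reaching $Y$ requires $\hat\pi(\hat X)\ge Y$, i.e.\ $\hat X\ge\hat\pi^{-1}(Y)$. Conversely, the maximizer of \eqref{max-payment} at $\hat X=\hat\pi^{-1}(Y)$ attains $Y$, using compactness and continuity of $\pi$ to guarantee the maximum exists. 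Combined with the first step this gives $\hat X_i(\theta)=\hat\pi^{-1}(Y_i(\theta))$, that is, $Y_i(\theta)=\hat\pi(\hat X_i(\theta))$.

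Finally, for the formula for $X_i(\theta)$: at the optimum the resource constraint binds, so $T_i=z(\hat\pi^{-1}(Y_i)-X_i)\ge0$. The reward constraint also binds, giving $\pi(X_i)+z(\hat\pi^{-1}(Y_i)-X_i)=Y_i$. The set of such $X$ is exactly the set of cost-minimizing exposure levels, and it is closed by continuity of $\pi$. If it has several elements, the platform is indifferent among them. I expect this tie-breaking to be the main obstacle. I would first try to show that choosing the largest solution is without loss, and that it is needed for the monotonicity of $X_i(\cdot)$ and the alternating-interval structure in \autoref{prop:money}. The argument would be that the largest selection is increasing in $Y$, and $Y_i(\cdot)$ is increasing because $q_i$ is. I would also check that this selection respects $X\ge\beta_i$ and that $\pi(X)\le Y_i$ automatically holds, which makes $T\ge0$ consistent.
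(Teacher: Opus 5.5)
Your proposal is correct and is essentially the paper's argument. The paper observes that on the extended-reach segment the pointwise Hamiltonian depends on $(X_i,T_i)$ only through $Y_i$ and the attention-equivalent cost $X_i+T_i/z$, concludes the split must be cost-efficient, identifies this with $\hat\pi$ by duality, and then simply \emph{imposes} a tie-breaking rule favoring higher attention. So your primal exchange argument is a more careful rendering of the same pointwise step, and the max is a selection convention rather than something to derive.

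One caution for your final check: $\pi(X)\le Y_i(\theta)$ does not hold automatically for the largest root. Even for concave $\pi$ with no transfer, the slope-$z$ line through $(X_i(\theta),\pi(X_i(\theta)))$ can cross $\pi$ again further right, giving a root with $T<0$. The max must therefore be restricted to $X\in[\beta_i,\hat\pi^{-1}(Y_i(\theta))]$, a constraint the paper leaves implicit.
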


\autoref{lem:resource} separates the incentive problem from the resource-allocation problem. The incentive problem determines the total money-metric reward $Y_i(\theta)$ that must be promised to each creator type. Given this reward, the resource-allocation problem determines the least costly way to implement it using extended reach and monetary transfers.

Returning to 
\autoref{fig:allocation-ben1}, the same geometry explains the increase-and-reset pattern. When the promised reward lies on a part of $\hat\pi$ that coincides with $\pi$, the reward is delivered through exposure alone, so $T_i(\theta)=0$. When the promised reward lies on an ironed segment, the platform keeps exposure fixed and uses money to fill the gap between the promised reward and the reward generated by exposure:
$
T_i(\theta)=Y_i(\theta)-\pi(X_i(\theta))>0$. 
As $Y_i(\theta)$ increases with ability along such an ironed segment, $X_i(\theta)$ remains fixed and $T_i(\theta)$ increases. When the promised reward reaches the graph of $\pi$ again, the platform jumps to a higher exposure level and the transfer resets to zero. This is the increase-and-reset pattern in \autoref{prop:money}.

The ironing construction also clarifies why the choice between exposure and money is not always governed by a purely local comparison between $\pi'(X)$ and $z$. If $\pi$ is non-concave, restarting exposure after a fixed-exposure interval may require a discrete jump in $X_i$. The relevant comparison is then whether the total reward gain from this jump justifies the attention cost of the jump. In this sense, the platform compares the average return from expanding exposure, not only its pointwise marginal return. When $\pi$ is concave, this complication disappears. The marginal return to exposure falls monotonically with $X_i$. Hence the alternating pattern collapses to a single cutoff. Exposure is used first, because its marginal money-metric return is initially high. Once $\pi'(X_i)$ falls below the advertising value $z$, additional exposure becomes more expensive than direct payment. From that point on, the platform holds exposure fixed and uses monetary transfers for all higher-ability creators. Therefore, monetary payments are reserved for creators with sufficiently large extended reach.

We next study how the platform’s reliance on exposure and monetary transfers changes with the profitability of advertising.

\begin{proposition}[Comparative Statics on Ads Income]\label{prop:z}\ \\
For any $(\theta,i)$-creator, $X_i(\theta)$ decreases in $z$ while $T_i(\theta)$ increases in $z$ whenever $X_i(\theta)\gs\beta_i$.
\end{proposition}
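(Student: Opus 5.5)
The plan is to reduce the problem on the extended-reach segment to a pointwise maximization over $(X_i(\theta),T_i(\theta))$ and then apply Topkis' monotone comparative statics on a suitably ordered lattice. Once the comparative statics are run at the level of the pair $(X,T)$ chosen jointly, \autoref{lem:resource} is no longer needed.

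\textbf{Step 1: pointwise reduction.} I follow the derivation behind \autoref{prop:incomplete} and \autoref{prop:money}. I substitute the Myersonian envelope condition $e_i(\theta)+U_i(\theta)=u(Y_i(\theta))$, with $U_i(\theta)=\int_0^\theta e_i/\tilde\theta$, into the objective. Integrating by parts in $\theta$ should turn the within-category term $\beta_i\int\int x_{ii}\Psi_i(v)\theta e_i\,\dd G_i\,\dd F_i$ into
\[
\int\Phi_i(\theta)\,\kappa_i(\theta)\,u(Y_i(\theta))\,\dd F_i(\theta),
\qquad
\kappa_i(\theta)=\beta_i\int_V x_{ii}(\theta,v)\Psi_i(v)\,\dd G_i(v),
\]
up to terms that do not depend on $\theta$'s own choice. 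On the segment $X_i(\theta)\ge\beta_i$ we have $x_{ii}(\theta,\cdot)\equiv1$, so $\kappa_i(\theta)=\beta_i\mathbb{E}_i[\Psi_i(v)]$ is a constant $k_i>0$ that does not depend on $z$. This is the same reduction that underlies the ladder-climber first-order condition in \autoref{cor:inc}. The platform's choice for type $\theta$ on this segment then maximizes
\[
f(X,T;z)=\Phi_i(\theta)\,k_i\,u\big(\pi(X)+T\big)-c\,(X-\beta_i)-\tfrac{c}{z}\,T
\quad\text{over } X\in[\beta_i,\beta],\ T\ge0 .
\]
The regularity assumptions stated before \autoref{prop:money} are what guarantee that this pointwise solution is monotone and needs no ironing, so it is the optimum. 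Only types with $\Phi_i(\theta)>0$ lie on this segment.

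\textbf{Step 2: supermodularity.} I order the choice set by $(-X,T)$, which is a lattice (a product of intervals), and check the cross-partials.
\begin{itemize}
\item $\partial^2 f/\partial X\partial T=\Phi_i k_i u''(Y)\pi'(X)<0$, because $u$ is strictly concave and $\pi$ is increasing. Hence $f$ is supermodular in $(-X,T)$.
\item $\partial^2 f/\partial T\partial z=c/z^2>0$.
\item $\partial^2 f/\partial X\partial z=0$.
\end{itemize}
Together these show that $f$ has increasing differences in $((-X,T),z)$. Topkis' theorem (in its strong-set-order form, since the maximizer may not be unique) then yields that $-X_i(\theta)$ and $T_i(\theta)$ are both increasing in $z$, which is the claim. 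Because $\pi$ need only be $C^1$, I will verify increasing differences directly through differences rather than second derivatives. Concretely, $u(\pi(X)+T)$ has decreasing differences in $(X,T)$ because $u$ is concave and $\pi$ is increasing, and $-cT/z$ has increasing differences in $(T,z)$.

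\textbf{Main obstacle.} The delicate step is Step 1: justifying that the comparative statics can be performed type by type. Two things must be shown. First, the coupling across types (the ``doubly global'' term) must enter only through $\Phi_i(\theta)$ and the constant $k_i$. Second, changing $z$ must not move types across the boundary $X_i=\beta_i$ in a way that breaks the argument, since the statement is only claimed where $X_i(\theta)\ge\beta_i$.

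For the first point, I plan to show that the non-ironing assumptions make the unconstrained pointwise maximizer satisfy monotonicity of $q_i(\theta)$ for every $z$. For the second point, I would note that below $\beta_i$ the money option $T$ only adds value as $z$ rises, so the set of types on the extended-reach segment can only shift in a way consistent with $X$ weakly falling.

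If uniqueness fails, for example at jumps of the ironed $\hat\pi$, I would state the result for the largest and smallest selections. This is consistent with the ``$\max$'' selection in \autoref{lem:resource}.
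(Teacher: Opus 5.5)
Your proposal is correct and follows essentially the paper's route: restrict to the sub-lattice $X_i\ge\beta_i$, $T_i\ge0$, and check three cross-partials. The $X$--$z$ cross-partial vanishes, the $T$--$z$ cross-partial equals $c/z^2>0$, and the $X$--$T$ cross-partial is negative; the paper's factor $\beta_i\lo{v}\theta^2 f_i+\gamma_i$ with $\gamma_i=-\theta\beta_i\lo{v}(1-F_i)$ equals $\beta_i\lo{v}\theta f_i\Phi_i$, which is your expression because $k_i=\beta_i\mathbb{E}_i[\Psi_i(v)]=\beta_i\lo{v}$. Topkis in the order $(-X,T)$ then gives the result. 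The only difference is cosmetic: you remove the state $U_i$ by integrating by parts, whereas the paper keeps the Hamiltonian and checks that the $U_i$ cross-partials vanish; your handling of non-unique selections and of the segment boundary is more careful than the paper's.
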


This comparative static follows naturally from the paper’s core trade-off in attention allocation. A higher $z$ means that each unit of viewer attention is more valuable in direct advertising monetization. As a result, diverting attention away from viewers’ preferred content in order to incentivize creators becomes more costly. The platform therefore relies more on monetary transfers and less on distorted attention allocation to provide creator incentives. 

The novelty is to link the platform’s advertising profitability to the accuracy of its recommendation algorithm. When advertising monetization is weak, attention is relatively cheap as an incentive instrument, so the platform has a stronger reason to use extended reach to reward creators. When advertising monetization is strong, attention is too valuable to waste on incentive provision, and the platform substitutes toward direct payments. Hence, platforms with greater advertising profitability should, all else equal, make more accurate recommendations. The parameter $z$ can be interpreted broadly. It may reflect the platform’s market power in advertising, the nature of the content environment, such as long videos versus short posts, or technological and legal constraints on profiling-based advertising. Restrictions that limit the platform’s ability to monetize attention through targeted ads reduce $z$, making exposure-based incentives relatively more attractive and potentially increasing recommendation distortions.

This comparative static provides a potential rationale for why different platforms rely on different creator compensation models. Platforms with weak advertising monetization may rely primarily on exposure-based rewards, such as recommendation amplification and external reach, because viewer attention is relatively cheap to use as an incentive instrument. By contrast, platforms with strong advertising profitability have a greater incentive to preserve attention for direct monetization and therefore rely more heavily on explicit monetary compensation programs for creators. Hence, differences in creator payment models across platforms may arise endogenously from differences in advertising environments rather than purely from technological or managerial choices.

The result also has policy implications. Regulations that restrict targeted or profiling-based advertising may affect not only advertising revenue directly, but also the equilibrium design of recommendation algorithms. For example, the DMA's restrictions on cross-app data combination and the requirement that Meta offer users the option to opt out of personalized ads as well as the DSA's restrictions on certain profiling-based ads may reduce the precision of ad targeting and therefore lower the value of ad impressions.  Similarly, privacy regulation (including GDPR and CCPA) limits any  platform's ability to sell targeted ads. Because advertisers see significantly lower click-through and conversion rates on less-personalized ads, the bidding value (yield) for each individual ad slot drops  \citep[see, e.g., ][]{johnson2020consumer}.

By lowering the profitability of monetizing viewer attention, such regulations make exposure-based incentives more attractive and induce platforms to distort recommendations more aggressively to motivate content creation. Importantly, this effect arises in equilibrium through the platform’s endogenous choice of incentive instruments. The model therefore provides a new perspective for policy discussions on advertising regulation and recommendation systems, highlighting a trade-off between limiting data-driven monetization and preserving recommendation accuracy.

\section{Conclusion}\label{sec:conclusion}

We build a framework to study a monopolistic social media platform's optimal content recommendation algorithm. The key trade-off is between monetizing attention and allocating attention to amplify endogenous network effects that increase content quality and, in turn, expand the revenue base. Our framework highlights the dual incentive role of recommendation systems: they shape viewers’ participation and creators’ production incentives through the allocation of attention.

Our model can be extended in many ways. One direction is to add monetary transfers between the platform and viewers. For example, YouTube offers an ad-free premium membership, allowing viewers to pay to reduce ads exposure (see, e.g.,%
\cite{bisceglia2025regulating}). One could allow viewers to pay  to avoid ads and even cross-category content. While such options would limit the platform's ability to leverage network effects, they would generate direct revenue, raising the question of how to optimally price the network effects a viewer generates.

A related extension is to add personalized advertising. In our baseline model, each unit of attention generates a constant advertising return. In practice, however, the value of exposure depends on the viewer's characteristics and the category of content consumed. Allowing advertising revenue to vary with viewer type and category would introduce an additional design margin: the algorithm would simultaneously manage production incentives and allocate attention to maximize heterogeneous advertising surplus. This would connect our framework to the literature on targeted advertising and data-driven ad markets (see, e.g., \cite{bergemann2011targeting,bergemann2024data}, 
and \cite{ichihashi2024mechanism}), while embedding targeting decisions within a unified two-sided screening environment.

One can also introduce limited attention capacity to  generate competition among content items for viewer time, creating crowding-out effects even within a monopoly platform. This becomes especially relevant in environments where technological progress—such as advances in AI—raises content productivity and amplifies differences across creators, making attention scarcity central to the determination of algorithmic thresholds and market structure. Attention capacity is also essential for studying platform competition: with multi-homing, viewers must allocate limited attention across platforms (as in, e.g.,  \cite{ambrus2016either}, 
\cite{prat2022attention}, and \cite{chenmarket}), and algorithm design would interact with cross-platform competition for viewer time. 

Finally, we abstract from dynamic considerations. In practice, user characteristics may evolve over time, and platforms learn about users' private information from their historical behavior. When platforms lack commitment, dynamic interaction may generate ratchet effects that dampen incentives for truthful revelation.%
Alternatively, users themselves may initially be uncertain about their types and learn through experience. Extending the framework to incorporate dynamic mechanism%
would deepen our understanding of how algorithm design interacts with long-run incentives.

\singlespacing

\section*{Appendix: Proofs}

\begin{proof}[Proof of \autoref{lem:binding}]
Suppose \eqref{eq:IR-c} is slack for a positive $G_j$-measure of $v$'s in some category $j$. Then the platform should increase $a_j(v)$ for these $v$'s to increase profit.

Now suppose \eqref{eq:IR-p} is slack for a positive $F_i$-measure of $\theta$'s in some category $i$. Without loss of generality, the platform can increase $e_i(\theta)$ for these $\theta$'s to increase the quality of their content. This in turn weakly relaxes \eqref{eq:IR-c} for all viewers $(v,j)$ and thus increases profit by inserting more ads for them.
\end{proof}

\begin{proof}[Proof of \autoref{lem:sep}]
Let $x_{ij}^*(\theta,v)=\frac{\hat{R}_i(\theta)}{\beta-\beta_i}$ for all $\theta,i,v,j$ so that $\sum_{j\ne i}\beta_j\int_Vx_{ij}^{*}(\theta,v)\dd G_j(v)=\hat{R}_i(\theta)$. Notice that \eqref{eq:obj} can be rewritten as:
\begin{eqnarray}
\sum_i\alpha_i\lt(\beta_i\int_{\Theta}\int_V x_{ii}(\theta,v)\Big(v\theta e_i(\theta)-c\Big)\dd G_i(v)\dd F_i(\theta)-c\int_{\Theta}R_i(\theta)\dd F_i(\theta)\rt).\nn
\end{eqnarray}
If $\{e_i^*,x_{ij}^*\}_{i,j}$ does not solve problem \eqref{eq:obj}, then there exists $i\in\mathcal{N}$, $e_i$ and $\{x_{ij}\}_j$ such that $R_i(\theta)\equiv\sum_{j\ne i}\beta_j\int_Vx_{ij}(\theta,v)\dd G_j(v)$, $e_i(\theta)=u\Big(\beta_i\int_Vx_{ii}(\theta,v)\dd G_j(v)+R_i(\theta)\Big)$, and:
\begin{eqnarray}
&&\beta_i\int_{\Theta}\int_V x_{ii}(\theta,v)\Big(v\theta e_i(\theta)-c\Big)\dd G_i(v)\dd F_i(\theta)-c\int_{\Theta}R_i(\theta)\dd F_i(\theta)\nn\\
&\gs&\beta_i\int_{\Theta}\int_V \hat{x}_{ii}(\theta,v)\Big(v\theta \hat{e}_i(\theta)-c\Big)\dd G_i(v)\dd F_i(\theta)-c\int_{\Theta}\hat{B}_i(\theta)\dd F_i(\theta).\nn
\end{eqnarray}
This implies that $\{\hat{e}_i,\hat{x}_{ii},\hat{R}_i\}_i$ is not a solution to problem \eqref{eq:sep}, a contradiction.
\end{proof}

\begin{proof}[Proof of \autoref{prop:complete}]
The effort choice is derived from \autoref{lem:binding}:%
$e_i(\theta)=u\lt(\beta_i\int_Vx_{ii}(\theta,v)\dd G_i(v)+R_i(\theta)\rt)$. Plugging \eqref{eq:eff} into \eqref{eq:obj} and taking derivative w.r.t.\ $x_{ij}(\theta,v)$, we have \eqref{eq:decom}. For $i=j$ and $X_i(\theta)>0$, then $e_i(\theta)>0$ and \eqref{eq:decom} is strictly increasing in $v$ for all $\theta>0$ and therefore $x_{ii}(\theta,\cdot)$ must be a step function. If $X_i(\theta)=0$ or $i\ne j$, \eqref{eq:decom} does not depend on $v$.

Notice that if $x_{ii}(\theta,\lo{v})=0$, then \eqref{eq:decom} $\ls0$ for $x_{ii}(\theta,\lo{v})$ implies \eqref{eq:decom} $<0$ for $x_{ij}(\theta,v)$ for any $v$ and $j\ne i$, and thus $R_i(\theta)=0$. Therefore, we use only one variable $X_i(\theta)$ to characterize optimal algorithm without ambiguity, where $X_i(\theta)\ls\beta_i$ implies $R_i(\theta)=0$ and $x_{ii}(\theta,v)=\mathbbm{1}\lt\{G_i(v)\gs 1-\tfrac{X_i(\theta)}{\beta_i}\rt\}$, and $X_i(\theta)>\beta_i$ implies $R_i(\theta)>0$ and $x_{ii}(\theta,v)=1$ for all $v$.

Category $i$'s sub-problem simplifies to:
\begin{eqnarray}
&&\max_{X_i}\int_{\Theta}H_i^C(\theta,X_i)\dd\theta,\nn\\
\mbox{where}&&\frac{H_i^C(\theta,X_i)}{f_i(\theta)}\equiv\beta_i\theta u(X_i)\int_{G_i^{-1}(1-\tfrac{\min\{X_i,\beta_i\}}{\beta_i})}^{\hi{v}}v\dd G_j(v)-cX_i(\theta).\nn
\end{eqnarray}

Note that
$$\frac{\partial^2}{\partial\theta\partial X}\frac{H_i^C(\theta,X)}{f_i(\theta)}=\lt\{\begin{array}{ll}
\displaystyle{u(X)G_i^{-1}\lt(1-\tfrac{X}{\beta_i}\rt)+\beta_iu'(X)\int_{G_i^{-1}(1-\tfrac{X}{\beta_i})}^{\hi{v}}v\dd G_j(v)>0} & \mbox{ if }X\ls\beta_i,\\
\displaystyle{\beta_iu'(X)\mathbb{E}_i[v]>0} & \mbox{ if }X>\beta_i.\end{array}\rt.$$
According to Topkis' theorem, $X_i(\theta)$ is increasing in $\theta$. Define $\theta_i^l\equiv\inf\{\theta:X_i(\theta)>0\}$, $\theta_i^m\equiv\inf\{\theta:X_i(\theta)>\beta_i\}$, and $\theta_i^h\equiv\inf\{\theta:X_i(\theta)=\beta\}$. By definition. $\theta_i^h\gs\theta_i^m\gs\theta_i^l$. Moreover, $\frac{\partial}{\partial X}\frac{H_i^C(0,X)}{f_i(0)}=-c$, so $X_i(\theta)=0$ for $\theta$ close enough to $0$, implying that $\theta_i^l>0$. Finally, $\frac{\partial}{\partial X}\frac{H_i^C(\theta_i^l,X)}{f_i(\theta_i^l)}=-c$ at $X=0$, which means $X_i(\theta_i^l)>0$.
\end{proof}

\begin{proof}[Proof of \autoref{cor:com}]
If the inequality holds, then for any $\theta>0$ and any $X\in(0,\beta_i)$, we have:
$$\frac{H_i^C(\theta,X)}{f_i(\theta)X}=\theta u(X)\mathbb{E}_i\lt[v\big|G_i(v)\gs 1-\tfrac{X}{\beta_i}\rt]-c.$$
By assumption, this is strictly increasing, and therefore we have $X_i(\theta_i^l)=\beta_i$. Define $\theta_i^l=\frac{c}{u(\beta_i)\mathbb{E}_i[v]}$, then $H_i^C(\theta_i^l,\beta_i)=0$.

Moreover, when $u$ is strictly concave, $\frac{\partial}{\partial X_i}\frac{H_i^C(\theta,X_i)}{f_i(\theta)}=\beta_i\mathbb{E}_i[v]\theta u'(X_i)=c$ for $X_i\gs\beta_i$. Setting $X_i(\theta_i^m)=\beta_i$ and $X_i(\theta_i^h)=\beta$ yields the thresholds.
\end{proof}

\begin{proof}[Proof of \autoref{lem:IC}]
(i) Necessity. If \eqref{eq:IC-p} holds, then for any $\theta>\theta'$:
\begin{eqnarray}
U_i(\theta)&\gs& U_i(\theta')+\frac{q_i(\theta')}{\theta'}-\frac{q_i(\theta')}{\theta},\nn\\
U_i(\theta')&\gs& U_i(\theta)+\frac{q_i(\theta)}{\theta}-\frac{q_i(\theta)}{\theta'}.\nn
\end{eqnarray}
Adding up, we have $(\theta-\theta')(q_i(\theta)-q_i(\theta'))\gs0$. Therefore, monotonicity holds. Taking the limit as $\theta'\lmt \theta$, we use Sandwich theorem to have $U_i'(\theta)=\frac{q_i(\theta)}{\theta^2}$, which implies the integral condition.

Sufficiency. Suppose monotonicity and integral conditions hold. Then for any $\theta,\theta'$:
\begin{eqnarray}
U_i(\theta)&=&U_i(\theta')+\int_{\theta'}^{\theta}\frac{q_i(\tilde{\theta})}{\tilde{\theta}^2}\dd\tilde{\theta}\gs U_i(\theta')+\int_{\theta'}^{\theta}\frac{q_i(\theta')}{\tilde{\theta}^2}\dd\tilde{\theta}\nn\\
&=&U_i(\theta')+q_i(\theta')\lt(\frac{1}{\theta'}-\frac{1}{\theta}\rt)=u\Big(\sum_j\beta_j\int_Vx_{ij}(\theta',v)\dd G_j(v)\Big)-\frac{\theta'e_i(\theta')}{\theta}.\nn
\end{eqnarray}

(ii) Necessity. If \eqref{eq:IC-c} holds, then for any $v,v'$:
\begin{eqnarray}
W_j(v)&\gs& W_j(v')-(v'-v)Q_j(v'),\nn\\
W_j(v')&\gs& W_j(v)-(v-v')Q_j(v).\nn
\end{eqnarray}
Adding up, we have $(v-v')(Q_j(v)-Q_j(v'))\gs0$. Therefore, monotonicity holds. Taking the limit as $v'\lmt v$, we use Sandwich theorem to have $W_j'(v)=Q_j(v)$, which implies the integral condition.

Sufficiency. Suppose monotonicity and integral conditions hold. Then for any $v,v'$:
\begin{eqnarray}
W_j(v)&=&W_j(v')+\int_{v'}^vQ_j(\tilde{v})\dd\tilde{v}\gs W_j(v')+\int_{v'}^vQ_j(v')\dd\tilde{v}\nn\\
&=&W(v')+(v-v')Q_j(v')\nn\\
&=&\sum_i\alpha_i\int_{\Theta}x_{ij}(\theta,v')\Big(v\theta e_i(\theta)\mathbbm{1}\{i=j\}-c\Big)\dd F_i(\theta)-ca_j(v').\nn
\end{eqnarray}
\end{proof}

\begin{proof}[Proof of \autoref{prop:incomplete}]
We first ignore the constraints that $\theta e_i(\theta)$ is increasing and $e_i(\theta)\gs0$, solve the problem, and then verify later. The problem is additively separable in creator categories. Plugging in $W_i(v)=W_i(\lo{v})+\int_{\lo{v}}^vQ_i(\tilde{v})\dd\tilde{v}$ and integrating by parts, we can rewrite category $i$'s Hamiltonian as:
\begin{eqnarray}
&&f_i(\theta)\sum_j\lt(\beta_j\int_Vx_{ij}(\theta,v)(\Psi_j(v)\theta e_i(\theta)\mathbbm{1}\{i=j\}-c)\dd G_j(v)\rt)+\gamma_i(\theta)\frac{e_i(\theta)}{\theta},\nn\\
\mbox{where}&&e_i(\theta)=u\lt(\sum_j\beta_j\int_Vx_{ij}(\theta,v)\dd G_j(v)\rt)-U_i(\theta).\nn
\end{eqnarray}
Taking derivative w.r.t.\ $x_{ij}(\theta,v)$, we have \eqref{eq:decom-inc}. For $i=j$ and $X_i(\theta)>0$, then $e_i(\theta)>0$ and \eqref{eq:decom-inc} is strictly increasing in $v$ for all $\theta>0$ and therefore $x_{ii}(\theta,\cdot)$ must be a step function. If $X_i(\theta)=0$ or $i\ne j$, \eqref{eq:decom-inc} does not depend on $v$.

Notice that if $x_{ii}(\theta,\lo{v})=0$, then \eqref{eq:decom-inc} $\ls0$ for $x_{ii}(\theta,\lo{v})$ implies \eqref{eq:decom-inc} $<0$ for $x_{ij}(\theta,v)$ for any $v$ and $j\ne i$, and thus $R_i(\theta)=0$. Therefore, we use only one variable $X_i(\theta)$ to characterize optimal algorithm without ambiguity, where $X_i(\theta)\ls\beta_i$ implies $R_i(\theta)=0$ and $x_{ii}(\theta,v)=\mathbbm{1}\lt\{G_i(v)\gs 1-\tfrac{X_i(\theta)}{\beta_i}\rt\}$, and $X_i(\theta)>\beta_i$ implies $R_i(\theta)>0$ and $x_{ii}(\theta,v)=1$ for all $v$.

Define:
$$J_i(X)\equiv\min\{X,\beta_i\}G_i^{-1}\lt(1-\tfrac{\min\{X,\beta_i\}}{\beta_i}\rt),$$
which satisfies $J_i(0)=0$, $J_i'(X)>0$ for $X<\beta_i$ and $J_i(X)=J_i(\beta_i)$ for $X\gs\beta_i$. The Hamiltonian can be simplified to:
$$\frac{H_i^I(\theta,X_i,U_i,\gamma_i)}{f_i(\theta)}\equiv\theta e_i(\theta)J_i(X_i)-cX_i+\frac{\gamma_i(\theta)}{f_i(\theta)}\frac{e_i(\theta)}{\theta}.$$

Optimality requires $\frac{\partial H_i^I}{\partial U_i}=-\gamma_i'(\theta)$, which means:
$$\gamma_i'(\theta)=\frac{\gamma_i(\theta)}{\theta}+\theta f_i(\theta)J_i(X_i).$$
Together with the transversality condition $\gamma_i(\overline{\theta})=0$, we know $\gamma_i(\theta)\in[-\beta_i\lo{v}\theta(1-F_i(\theta)),0]$.

For any $\theta$ and any $X\in(0,\tfrac{\beta_i}{2}]$, we have:
$$\frac{H_i^I(\theta,X,U_i,\gamma_i)-H_i^I(\theta,0,U_i,\gamma_i)}{f_i(\theta)X}=\frac{\theta J_i(X)}{X}(u(X)-U_i(\theta))+\frac{\gamma_i}{\theta f_i(\theta)X}u(X)-c.$$
The above has derivative w.r.t.\ $X$ of the same sign as:
$$\frac{\gamma_i}{\theta^2f_i(\theta)J_i(X)}\lt(\frac{Xu'(X)}{u(X)}-1\rt)+\frac{\Psi_i(G_i^{-1}(1-\tfrac{X}{\beta_i}))}{G_i^{-1}(1-\tfrac{X}{\beta_i})}+\frac{Xu'(X)}{u(X)}-1+\frac{U_i(\theta)}{u(X)}\lt(1-\frac{\Psi_i(G_i^{-1}(1-\tfrac{X}{\beta_i}))}{G_i^{-1}(1-\tfrac{X}{\beta_i})}\rt).$$
Note that the first term is positive, the sum of the next three is positive due to the assumption, and the last term is non-negative. Then, whenever $H_i^I(\theta,\tfrac{\beta_i}{2},U_i,\gamma_i)>0$, we have $H_i^I(\theta,\tfrac{\beta_i}{2},U_i,\gamma_i)>H_i^I(\theta,X,U_i,\gamma_i)$ for all $X\in[0,\tfrac{\beta_i}{2})$, and therefore $X\in(0,\tfrac{\beta_i}{2})$ is never optimal for any $\theta$.

Fix any $\theta$ such that $X_i(\theta)=X_i\gs\frac{\beta_i}{2}$. Then for any $\tilde{X}_i\in[\tfrac{\beta_i}{2},X_i)$:
$$\frac{\dd}{\dd\theta}\frac{H_i^I(\theta,X_i,U_i,\gamma_i)}{f_i(\theta)}-\frac{\dd}{\dd\theta}\frac{H_i^I(\theta,\tilde{X}_i,U_i,\gamma_i)}{f_i(\theta)}=(u(X_i)-u(\tilde{X}_i))\lt(J_i(X_i)+J_i(\tilde{X}_i)-\frac{\gamma_i(\theta)f_i'(\theta)}{\theta f_i(\theta)^2}\rt).$$
If $f_i'(\theta)\gs0$, then the above is positive. If $f_i'(\theta)<0$, then because $\gamma_i\gs-\beta_i\lo{v}\theta(1-F_i(\theta))$ and $f_i'(\theta)\gs-\frac{f_i(\theta)^2}{1-F_i(\theta)}$, we know
$$J_i(X_i)+J_i(\tilde{X}_i)-\frac{\gamma_i(\theta)f_i'(\theta)}{\theta f_i(\theta)^2}>2\frac{\beta_i}{2}\lo{v}-\beta_i\lo{v}=0.$$
By Topkis' theorem, $X_i(\theta)$ is increasing in $\theta$. Define $\theta_i^M\equiv\inf\{\theta:X_i(\theta)>\beta_i\}$ and $\theta_i^H\equiv\inf\{\theta:X_i(\theta)=\beta\}$.

Note that $e_i(\theta)=u(X_i(\theta))-U_i(\theta)$ and $e_i(0)=0$. Differentiating w.r.t.\ $\theta$ and using $U_i'(\theta)=\frac{e_i(\theta)}{\theta}$, we have an ODE:
$$\dd(\theta e_i(\theta))=\theta\ \dd u(X_i(\theta)),$$
which yields the expression in \autoref{prop:incomplete}. Because $X_i$ is increasing, we verify that $\theta e_i(\theta)\gs0$ and is increasing in $\theta$. Because $e_i(0)=0$ and $\theta e_i(\theta)\gs0$ for $\theta>0$, we verify that $e_i(\theta)\gs0$.

If $X_i(\hat{\theta})>0$ for all $\hat{\theta}\in(\theta,\theta+\varepsilon)$, then $\hat{\theta}U_i(\hat{\theta})=\int_0^{\hat{\theta}}u(X_i(\theta))\dd\theta>0$, and $X_i(\theta)\gs u^{-1}(U_i(\theta))\gs u^{-1}(U_i(\hat{\theta}))>0$ for all $\theta>\hat{\theta}$. Define $\theta_i^L\equiv\inf\{\theta:X_i(\theta)>0\}$. By definition, $\theta_i^H\gs\theta_i^M\gs\theta_i^L$. Moreover, $\frac{\partial}{\partial \tilde{X}_i}\frac{H_i^I(0,\tilde{X}_i,U_i,\gamma_i)}{f_i(0)}=-c-\frac{u'(\tilde{X}_i)}{\theta f_i(0)}\int_0^{\hi{\theta}}J_i(X_i(\theta))\dd F_i(\theta)<0$, so there exists a neighborhood of $0$ where $X_i(\theta)=0$, i.e., $\theta_i^L>0$. Finally, $\frac{\partial}{\partial X_i}\frac{H_i^I(\theta_i^L,0,U_i,\gamma_i)}{f_i(\theta_i^L)}=-c+\frac{\gamma_i(\theta_i^L)}{\theta_i^L f_i(\theta_i^L)}u'(0)<0$, which means $X_i(\theta_i^L)>0$.

Now we compare the thresholds between complete and incomplete information cases. For any $\theta$ such that $H_i^I(\theta,X_i,U_i,\gamma_i)\gs H_i^I(\theta,0,U_i,\gamma_i)$, we have
$$\frac{\gamma_iu(X_i)}{\theta f_i(\theta)}+\beta_i\theta (u(X_i)-U_i(\theta))\int_{G_i^{-1}(1-\tfrac{\min\{X_i,\beta_i\}}{\beta_i})}^{\hi{v}}\Psi_i(v)\dd G_i(v)\gs cX_i,$$
implying $\beta_i\theta u(X_i)\int_{G_i^{-1}(1-\tfrac{\min\{X_i,\beta_i\}}{\beta_i})}^{\hi{v}}v\dd G_i(v)>cX_i$, i.e., $H_i^C(\theta,X_i)>H_i^C(\theta,0)$ with complete information. By continuity of $H_i^C$, we have $\theta_i^L>\theta_i^l$. For any $\theta$ such that $H_i^I(\theta,X^H,U_i,\gamma_i)-H_i^I(\theta,X^L,U_i,\gamma_i)\gs0$ whenever $X^H>\beta_i\gs X^L$, we know $\gamma_i=-\beta_i\lo{v}\theta(1-F(\theta))$, and $H_i^I(\theta,X^H,0,\gamma_i)-H_i^I(\theta,X^L,0,\gamma_i)\gs0$. Notice that
$$\frac{H_i^C(\theta,X^H)-H_i^C(\theta,\beta_i)}{u(X^H)-u(\beta_i)}-\frac{H_i^I(\theta,X^H,0,\gamma_i)-H_i^I(\theta,\beta_i,0,\gamma_i)}{u(X^H)-u(\beta_i)}=\beta_i(\mathbb{E}_i[v]\theta-\lo{v}\Phi_i(\theta))>0.$$
Moreover,
$$\frac{\partial}{\partial X^L}(H_i^C(\theta,X^H)-H_i^C(\theta,X^L))-\frac{\partial}{\partial X^L}(H_i^I(\theta,X^H,0,\gamma_i)-H_i^I(\theta,X^L,0,\gamma_i))<0$$
implies $H_i^C(\theta,X^H)-H_i^C(\theta,X^L)>H_i^I(\theta,X^H,0,\gamma_i)-H_i^I(\theta,X^L,0,\gamma_i)\gs0$ for all $X^L<\beta_i$. By continuity of $H_i^C$ and $\frac{\partial H_i^C}{\partial X}$, we have $\theta_i^M>\theta_i^m$. Finally, for any $\theta$ such that $H_i^I(\theta,\beta,U_i,\gamma_i)-H_i^I(\theta,X_i,U_i,\gamma_i)\gs0$ for all $X_i$, the previous inequalities imply $H_i^C(\theta,\beta)-H_i^C(\theta,X_i)>0$ for all $X_i\ls\beta_i$, and furthermore, $H_i^C(\theta,\beta)-H_i^C(\theta,X_i)>0$ for all $X_i>\beta_i$ because $U_i>0$, $\mathbb{E}_i[v]>\lo{v}$ and $\theta\gs\Phi_i(\theta)$. By continuity of $H_i^C$, we have $\theta_i^H>\theta_i^h$.
\end{proof}

\begin{proof}[Proof of \autoref{cor:inc}]
Define $J_i(X)$ as in the proof of \autoref{prop:incomplete}. If the inequality holds, then for any $\theta<\theta_i^L$ and any $X$, we have:
$$\frac{H_i^I(\theta,X,U_i,\gamma_i)}{f_i(\theta)X}=\lt(\theta \frac{J_i(X)}{X}+\frac{\gamma_i}{\theta f_i(\theta)X}\rt)(u(X)-U_i(\theta))-c.$$
Note $U_i(\theta)=0$. The above is negative at $\theta=0$, and its derivative w.r.t.\ $X<\beta_i$ has the same sign as:
$$\frac{\gamma_i}{\theta^2f_i(\theta)J_i(X)}\lt(\frac{Xu'(X)}{u(X)}-1\rt)+\frac{\Psi_i(G_i^{-1}(1-\tfrac{X}{\beta_i}))}{G_i^{-1}(1-\tfrac{X}{\beta_i})}+\frac{Xu'(X)}{u(X)}-1.$$
Note that the first term is positive, and the sum of the last three is positive due to the assumption. Then, we must have $X_i(\theta_i^L)\gs\beta_i$. According to \autoref{prop:incomplete}, $X_i(\theta)\gs\beta_i$ for all $\theta\gs\theta_i^L$, and $\gamma_i(\theta)=-\beta_i\lo{v}\theta(1-F_i(\theta))$. Define $\theta_i^L=\Phi_i^{-1}\lt(\frac{c}{u(\beta_i)\mathbb{E}_i[\Psi_i(v)]}\rt)$ so that $H_i^I(\theta_i^L,\beta_i,U_i,\gamma_i)=0$. The above derivative w.r.t.\ $X>\beta_i$ is negative at $\theta_i^L$, so indeed $X_i(\theta)=\beta_i$.

Moreover, when $u$ is concave, $\frac{\partial}{\partial X}\frac{H_i^I(\theta,X,U_i,\gamma_i)}{f_i(\theta)}=\beta_i\mathbb{E}_i[\Psi_i(v)]\Phi_i(\theta)u'(X_i)=c$ for $X_i\gs\beta_i$. Setting $X_i(\theta_i^M)=\beta_i$ and $X_i(\theta_i^H)=\beta$ yields the thresholds.
\end{proof}

\begin{proof}[Proof of \autoref{prop:impc}]
Facing function $\chi_i(q)$, a $(\theta,i)$-creator maximizes:
$$u(\chi_i(q))-\frac{q}{\theta}.$$
Implicit function theorem gives $\chi_i'(q)=\frac{1}{X_i^{-1}(\chi_i(q))u'(\chi_i(q))}$. Derivative w.r.t.\ $q$ reads $\frac{1}{X_i^{-1}(\chi_i(q))}-\frac{1}{\theta}$. If $q=\theta e_i(\theta)$, then by definition $\chi_i(q)=X_i(\theta)$ and $\frac{1}{X_i^{-1}(\chi_i(q))}-\frac{1}{\theta}=0$. If $q>\theta e_i(\theta)$, then $\chi_i(q)>X_i(\theta)$ and $X_i^{-1}(\chi_i(q))\gs\theta$, and the derivative is non-positive. If $q<\theta e_i(\theta)$, then $\chi_i(q)<X_i(\theta)$ and $X_i^{-1}(\chi_i(q))\ls\theta$, and the derivative is non-negative. 
\end{proof}

\begin{proof}[Proof of \autoref{prop:impv}]
Facing function $K_j(A)$, a $(v,j)$-viewer maximizes:
$$vK_j(A)-cA.$$
By construction of the direct mechanism, the $(v,j)$-viewer is willing to participate and does not mimic any $(v',j)$-type. Then:
\begin{eqnarray}
vK_j(A_j(v))-cA_j(v)\gs vQ_j(v)-cA_j(v)&\gs&\sum_k\rho_k(vQ_j(\hat{v}_k)-cA_j(\hat{v}_k))\nn\\
&=&v\sum_k\rho_kQ_j(\hat{v}_k)-c\sum_k\rho_kA_j(\hat{v}_k)\nn
\end{eqnarray}
for any $\{\rho_k\}_k\subset\mathbb{R}_+$ with $\sum_k\rho_k\ls1$ and $\{\hat{v}_k\}_k\subset[\lo{v},\hi{v}]$. By definition, this means:
$$vK_j(A_j(v))-cA_j(v)\gs vK_j(A)-cA$$
for all $A\in[0,A_j(\hi{v})]$, and $Q_j(v)=K_j(A_j(v))$. Therefore, total attention $A_j(v)$ is implemented for viewer $(v,j)$ even though $A\notin\{A_j(v)\}_v$ is allowed.
\end{proof}

\begin{proof}[Proof of \autoref{prop:money}]
We first ignore the constraints that $\theta e_i(\theta)$ is increasing and $e_i(\theta)\gs0$, solve the problem, and then verify later. The problem is additively separable in creator categories. We can rewrite category $i$'s objective as:
\begin{eqnarray}
&&f_i(\theta)\sum_j\beta_j\int_Vx_{ij}(\theta,v)(\Psi_j(v)\theta e_i(\theta)\mathbbm{1}\{i=j\}-c)\dd G_i(v)-f_i(\theta)\frac{c}{z}T_i(\theta)+\gamma_i(\theta)\frac{e_i(\theta)}{\theta},\nn\\
\mbox{where}&&e_i(\theta)=u\lt(\pi\lt(\sum_j\beta_j\int_Vx_{ij}(\theta,v)\dd G_i(v)\rt)+T_i(\theta)\rt)-U_i(\theta).\nn
\end{eqnarray}
With the same argument as in the proof of \autoref{prop:incomplete}, $x_{ii}(\theta,\cdot)$ is a step function. Moreover, $X_i(\theta)\ls\beta_i$ implies $R_i(\theta)=0$, and $X_i(\theta)>\beta_i$ implies $x_{ii}(\theta)=1$ for all $v$. Define $J_i(X)$ as in the proof of \autoref{prop:incomplete}. The Hamiltonian can be simplified to:
$$\frac{H_i^z(\theta,X_i,Y_i,U_i,\gamma_i)}{f_i(\theta)}\equiv\theta e_i(\theta)J_i(X_i)-cX_i-\frac{c}{z}(Y_i-\pi(X_i))+\frac{\gamma_i(\theta)}{f_i(\theta)}\frac{e_i(\theta)}{\theta},$$
where $e_i(\theta)=u(Y_i(\theta))-U_i(\theta)$ and $Y_i=\pi(X_i)+T_i$.

Optimality requires $\frac{\partial H_i^z}{\partial U_i}=-\gamma_i'(\theta)$, which means:
$\gamma_i'(\theta)=\frac{\gamma_i(\theta)}{\theta}+\theta f_i(\theta)J_i(X_i).$ 
Together with the transversality condition $\gamma_i(\overline{\theta})=0$, we know $\gamma_i(\theta)\in[-\beta_i\lo{v}\theta(1-F(\theta)),0]$.

For any $\theta$ and any $X\in(0,\tfrac{\beta_i}{2}]$, we have:
\begin{eqnarray}
&&\frac{H_i^z(\theta,X,\pi(X)+T,U_i,\gamma_i)-H_i^z(\theta,0,T,U_i,\gamma_i)}{f_i(\theta)X}\nn\\
&=&\theta\frac{J_i(X)}{X}(u(\pi(X)+T)-U_i(\theta))+\frac{\gamma_i}{\theta f_i(\theta)X}(u(\pi(X)+T)-u(T))-c.\nn
\end{eqnarray}
Its total derivative w.r.t.\ $X$ has the same sign as:
\begin{eqnarray}
&&\frac{\gamma_i}{\theta^2f_i(\theta)J_i(X)}\lt(\frac{u(T)+X\pi'(X)u'(\pi(X)+T)}{u(\pi(X)+T)}-1\rt)\nn\\
&+&\frac{\Psi_i(G_i^{-1}(1-\tfrac{X}{\beta_i}))}{G_i^{-1}(1-\tfrac{X}{\beta_i})}+\frac{X\pi'(X)u'(\pi(X)+T)}{u(\pi(X)+T)}-1+\frac{U_i(\theta)}{u(\pi(X)+T)}\lt(1-\frac{\Psi_i(G_i^{-1}(1-\tfrac{X}{\beta_i}))}{G_i^{-1}(1-\tfrac{X}{\beta_i})}\rt).\nn
\end{eqnarray}
Note that the first term is positive, the sum of the next three is positive due to the assumption, and the last term is non-negative. Then, whenever $H_i^z(\theta,\tfrac{\beta_i}{2},\pi(\tfrac{\beta_i}{2})+T,U_i,\gamma_i)>H_i^z(\theta,0,T,U_i,\gamma_i)$, we have $H_i^z(\theta,\tfrac{\beta_i}{2},\pi(\tfrac{\beta_i}{2})+T,U_i,\gamma_i)>H_i^z(\theta,X,\pi(X)+T,U_i,\gamma_i)$ for all $X\in[0,\tfrac{\beta_i}{2})$, and therefore $X\in(0,\tfrac{\beta_i}{2})$ is never optimal for any $\theta$.

Fix any $\theta$ such that $X_i(\theta)=X_i\gs\frac{\beta_i}{2}$. Then for any $\tilde{X}_i\gs\tfrac{\beta_i}{2}$ and any $\tilde{Y}_i$:
$$\frac{\dd}{\dd\theta}\frac{H_i^z(\theta,X_i,Y_i,U_i,\gamma_i)}{f_i(\theta)}-\frac{\dd}{\dd\theta}\frac{H_i^z(\theta,\tilde{X}_i,\tilde{Y}_i,U_i,\gamma_i)}{f_i(\theta)}=(u(Y_i)-u(\tilde{Y}_i))\lt(J_i(X_i)+J_i(\tilde{X}_i)-\frac{\gamma_i(\theta)f_i'(\theta)}{\theta f_i(\theta)^2}\rt).$$
If $f_i'(\theta)\gs0$, then the above is positive. If $f_i'(\theta)<0$, then because $\gamma_i\gs-\beta_i\lo{v}\theta(1-F_i(\theta))$ and $f_i'(\theta)\gs-\frac{f_i(\theta)^2}{1-F_i(\theta)}$, we know
$$J_i(X_i)+J_i(\tilde{X}_i)-\frac{\gamma_i(\theta)f_i'(\theta)}{\theta f_i(\theta)^2}>2\frac{\beta_i}{2}\lo{v}-\beta_i\lo{v}=0.$$
Therefore, $H_i^z(\theta,X_i,Y_i,U_i,\gamma_i)\gs H_i^z(\theta,\tilde{X}_i,\tilde{Y}_i,U_i,\gamma_i)$ for some $\tilde{Y}_i\ls Y_i$ implies $H_i^z(\hat{\theta},X_i,Y_i,U_i,\gamma_i)\gs H_i^z(\hat{\theta},\tilde{X}_i,\tilde{Y}_i,U_i,\gamma_i)$ for all $\hat{\theta}>\theta$. Moreover, $H_i^z(\theta,X_i,Y_i,U_i,\gamma_i)\gs H_i^z(\theta,\tilde{X}_i,\tilde{Y}_i,U_i,\gamma_i)$ for some $\tilde{Y}_i>Y_i$ and $\tilde{X}_i<X_i$ implies $H_i^z(\hat{\theta},X_i,\tilde{Y}_i,U_i,\gamma_i)\gs H_i^z(\hat{\theta},\tilde{X}_i,\tilde{Y}_i,U_i,\gamma_i)$ for all $\hat{\theta}>\theta$. In sum, a higher $\theta$ can only lead to weakly higher $X_i(\theta)$ and $Y_i(\theta)$. Define $\theta_i^M\equiv\inf\{\theta:X_i(\theta)>\beta_i\}$ and $\theta_i^H\equiv\inf\{\theta:X_i(\theta)=\beta\}$.

Note that $e_i(\theta)=u(Y_i(\theta))-U_i(\theta)$ and $e_i(0)=0$. Differentiating w.r.t.\ $\theta$ and using $U_i'(\theta)=\frac{e_i(\theta)}{\theta}$, we have an ODE:
$$\dd(\theta e_i(\theta))=\theta\ \dd u(Y_i(\theta)),$$
which yields the expression in \autoref{prop:money}. Because $Y_i$ is increasing, we verify that $\theta e_i(\theta)\gs0$ and is increasing in $\theta$. Because $e_i(0)=0$ and $\theta e_i(\theta)\gs0$ for $\theta>0$, we verify that $e_i(\theta)\gs0$.

If $Y_i(\hat{\theta})>0$ for all $\hat{\theta}\in(\theta,\theta+\varepsilon)$, then $\hat{\theta}U_i(\hat{\theta})=\int_0^{\hat{\theta}}u(Y_i(\theta))\dd\theta>0$, and $Y_i(\theta)\gs u^{-1}(U_i(\theta))\gs u^{-1}(U_i(\hat{\theta}))>0$ for all $\theta>\hat{\theta}$. Also note that $X_i(\theta)=0$ implies $Y_i(\theta)=0$ from the definition of $H_i^z$. Therefore, define $\theta_i^L\equiv\inf\{\theta:Y_i(\theta)>0\}=\inf\{\theta:X_i(\theta)>0\}$. By definition, $\theta_i^H\gs\theta_i^M\gs\theta_i^L$. Moreover, $\frac{\partial}{\partial \tilde{Y}_i}\frac{H_i^z(0,\tilde{X}_i,\tilde{Y}_i,U_i,\gamma_i)}{f_i(0)}=-\frac{c}{z}-\frac{u'(\tilde{Y}_i)}{\theta f_i(0)}\int_0^{\hi{\theta}}J_i(X_i(\theta))\dd F_i(\theta)<0$, so that $\tilde{Y}_i=\pi(\tilde{X}_i)$. Plugging back in $H_i^z$, we have $\frac{\dd}{\dd \tilde{X}_i}\frac{H_i^z(0,\tilde{X}_i,\pi(\tilde{X}_i),U_i,\gamma_i)}{f_i(0)}=-c-\frac{\pi'(\tilde{X}_i)u'(\pi(\tilde{X}_i))}{\theta f_i(0)}\int_0^{\hi{\theta}}J_i(X_i(\theta))\dd F_i(\theta)<0$. This means for $\theta$ close enough to $0$ we have $X_i(\theta)=0$, i.e., $\theta_i^L>0$. Finally, $\frac{\partial}{\partial Y_i}\frac{H_i^z(\theta_i^L,0,0,U_i,\gamma_i)}{f_i(\theta_i^L)}=-\frac{c}{z}+\frac{\gamma_i(\theta_i^L)}{\theta_i^L f_i(\theta_i^L)}u'(0)<0$, and plugging back into $H_i^z$, we have $\frac{\dd}{\dd X_i}\frac{H_i^z(\theta_i^L,0,0,U_i,\gamma_i)}{f_i(\theta_i^L)}=-c+\frac{\gamma_i(\theta_i^L)}{\theta_i^L f_i(\theta_i^L)}u'(0)\pi'(0)<0$, which means $X_i(\theta_i^L)>0$.

When $X_i(\theta)\gs\beta_i$, we know $\gamma_i=-\theta\beta_i\lo{v}(1-F_i(\theta))$, and the relevant terms in the Hamiltonian reads:
$$\beta_i\lo{v}\Phi_i(\theta)u(Y_i)-\frac{c}{z}Y_i+\frac{c}{z}\pi(X_i)-cX_i.$$
If $T_i>0$, then $Y_i>\pi(X_i)$. Unconstrained optimization over $Y_i$ yields $Y_i={u'}^{-1}(\frac{c}{z\beta_i\lo{v}\Phi_i(\theta)})$. Fixing $Y_i$, the relevant terms for $X_i$ is always $\frac{c}{z}\pi(X_i)-cX_i$, independent of $\theta$, except the constraint that $\pi(X_i)\ls {u'}^{-1}(\frac{c}{z\beta_i\lo{v}\Phi_i(\theta)})$. As $\theta$ continuously increases, so does ${u'}^{-1}(\frac{c}{z\beta_i\lo{v}\Phi_i(\theta)})$. Therefore, using the tie-breaking rule that favors higher attention, we know that $T_i>0$ means the constraint is not binding, and at marginally lower $\theta$, the optimal $X_i$ does not change. Alternatively, if $X_i$ is increasing in $\theta$, it must be the case that $X_i$ attained at the boundary $\pi(X_i)\ls {u'}^{-1}(\frac{c}{z\beta_i\lo{v}\Phi_i(\theta)})$, in which case $T_i(\theta)=0$.

When $\pi$ is concave and $\pi'(\beta_i)>z$, we know $T_i(\theta)=0$ whenever $X_i(\theta)\ls\beta_i$ because the derivative of the Hamiltonian w.r.t.\ $X_i(\theta)$ reads $c(\tfrac{\pi'(X_i)}{z}-1)+\theta u(Y_i)J'(X_i)>0$ and $Y_i\gs\pi(X_i)$ must be binding. For $X_i(\theta)>\beta_i$, we need $\theta>\theta_i^M$. For the objective relevant for $X_i$, $\frac{c}{z}\pi(X_i)-cX_i$, first order condition requires $X_i={\pi'}^{-1}(z)$. If ${\pi'}^{-1}(z)<\beta$, then $\theta_i^{\$}$ is determined by ${u'}^{-1}(\frac{c}{z\beta_i\lo{v}\Phi_i(\theta_i^{\$})})=\pi({\pi'}^{-1}(z))$. If ${\pi'}^{-1}(z)\gs\beta$, then $\theta_i^{\$}$ is determined by ${u'}^{-1}(\frac{c}{z\beta_i\lo{v}\Phi_i(\theta_i^{\$})})=\pi(\beta)$.
\end{proof}

\begin{proof}[Proof of \autoref{lem:resource}]
When $X_i(\theta)\gs\beta_i$, the Hamiltonian boils down to:
$$\frac{H_i^z(\theta,X_i,\pi(X_i)+T_i,U_i,\gamma_i)}{f_i(\theta)}\equiv\theta e_i(\theta)\beta_i\lo{v}-c(X_i+\tfrac{T_i}{z})+\frac{\gamma_i(\theta)}{f_i(\theta)}\frac{u(\pi(X_i)+T_i)-U_i(\theta)}{\theta}.$$
Therefore, an optimal combination of $X_i$ and $T_i$ must maximize $Y_i=\pi(X_i)+T_i$ given any attention-equivalent total resource $\hat{X}_i=X_i+\tfrac{T_i}{z}$. In other words, $Y_i=\max\{\pi(X_i)+T_i:X_i+\tfrac{T_i}{z}=\hat{X}_i, T_i\gs0\}=\hat{\pi}(\hat{X}_i)$ by definition of $\hat{\pi}$.

The dual problem minimizes $X_i+\tfrac{T_i}{z}$ subject to $\pi(X_i)+T_i=Y_i$. Therefore, $\hat{\pi}^{-1}(Y_i)=\hat{X}_i=X_i+\tfrac{Y_i-\pi(X_i)}{z}$. Using the tie-breaking rule that favors higher attention, we have $X_i(\theta)=\max\{X:\pi(X)+z(\hat{\pi}^{-1}(Y_i(\theta))-X)=Y_i(\theta)\}$.
\end{proof}

\begin{proof}[Proof of \autoref{prop:z}]
Notice that $\frac{\partial^2}{\partial z\partial X_i}\frac{H_i^z}{f_i(\theta)}=0$, $\frac{\partial^2}{\partial z\partial T_i}\frac{H_i^z}{f_i(\theta)}=\frac{c}{z^2}>0$, $\frac{\partial^2}{\partial X_i\partial U_i}\frac{H_i^z}{f_i(\theta)}=0$, $\frac{\partial^2}{\partial T_i\partial U_i}\frac{H_i^z}{f_i(\theta)}=0$, and finally $\frac{\partial^2}{\partial X_i\partial T_i}\frac{H_i^z}{f_i(\theta)}=\frac{\pi'(X_i)u''(\pi(X_i)+T_i)}{\theta f_i(\theta)}(\beta_i\lo{v}\theta^2f_i(\theta)+\gamma_i)<0$, where the last inequality follows from the fact that $H_i^z(\theta,X_i,\pi(X_i)+T_i,U_i,\gamma_i)\gs H_i^z(\theta,0,0,U_i,\gamma_i)$.%
Moreover, $T_i\gs0$ and $X_i\gs\beta_i$ is a sub-lattice, and the continuous-time version of Topkis theorem implies $X_i(\theta)$ is decreasing in $z$ and $T_i(\theta)$ is increasing in $z$ for all $\theta$.
\end{proof}

\bibliography{biblio}

@article{ng2025moderating,
  title={Moderating content-hosting platforms},
  author={Ng, Robin and Taylor, Greg},
  journal={Collaborative Research Center Transregio},
  volume={224},
  year={2025}
}

@article{johnson2020consumer,
  title={Consumer privacy choice in online advertising: Who opts out and at what cost to industry?},
  author={Johnson, Garrett A and Shriver, Scott K and Du, Shaoyin},
  journal={Marketing Science},
  volume={39},
  number={1},
  pages={33--51},
  year={2020},
  publisher={INFORMS}
}

@article{bisceglia2025regulating,
  title={Regulating Privacy Policies on Digital Platforms},
  author={Bisceglia, Michele and Bonatti, Alessandro and Morton, Fiona Scott},
    journal={Working Paper, Yale University},
  year={2025}
}

@article{choi2023platform,
  title={Platform design biases in ad-funded two-sided markets},
  author={Choi, Jay Pil and Jeon, Doh-Shin},
  journal={The RAND Journal of Economics},
  volume={54},
  number={2},
  pages={240--267},
  year={2023},
  publisher={Wiley Online Library}
}

@article{berman2020curation,
  title={Curation algorithms and filter bubbles in social networks},
  author={Berman, Ron and Katona, Zsolt},
  journal={Marketing Science},
  volume={39},
  number={2},
  pages={296--316},
  year={2020},
  publisher={INFORMS}
}

@article{bar2025selling,
  title={Selling certification, content moderation, and attention},
  author={Bar-Isaac, Heski and Deb, Rahul and Mitchell, Matthew},
  journal={arXiv preprint arXiv:2506.12604},
  year={2025}
}

@article{de2023social,
  title={Social media and news: Content bundling and news quality},
  author={de Corniere, Alexandre and Sarvary, Miklos},
  journal={Management Science},
  volume={69},
  number={1},
  pages={162--178},
  year={2023},
  publisher={INFORMS}
}

@incollection{anderson2015advertising,
  title={The advertising-financed business model in two-sided media markets},
  author={Anderson, Simon P and Jullien, Bruno},
  booktitle={Handbook of media economics},
  volume={1},
  pages={41--90},
  year={2015},
  publisher={Elsevier}
}

@article{ichihashi2024mechanism,
  title={Mechanism Design for Ad-Supported Platforms},
  author={Ichihashi, Shota and Jeon, Doh-Shin and Kim, Byung-Cheol},
  journal={SSRN 5016687},
  year={2025}
}

@article{bergemann2024data,
  title={Data, competition, and digital platforms},
  author={Bergemann, Dirk and Bonatti, Alessandro},
  journal={American Economic Review},
  volume={114},
  number={8},
  pages={2553--2595},
  year={2024},
  publisher={American Economic Association 2014 Broadway, Suite 305, Nashville, TN 37203}
}

@article{jullien2021two,
  title={Two-sided markets, pricing, and network effects},
  author={Jullien, Bruno and Pavan, Alessandro and Rysman, Marc},
  booktitle={Handbook of industrial organization},
  volume={4},
  number={1},
  pages={485--592},
  year={2021},
  publisher={Elsevier}
}

@article{toubia2013intrinsic,
  title={Intrinsic vs. image-related utility in social media: Why do people contribute content to twitter?},
  author={Toubia, Olivier and Stephen, Andrew T},
  journal={Marketing Science},
  volume={32},
  number={3},
  pages={368--392},
  year={2013},
  publisher={INFORMS}
}

@article{anderson2005market,
  title={Market provision of broadcasting: A welfare analysis},
  author={Anderson, Simon P and Coate, Stephen},
  journal={The review of Economic studies},
  volume={72},
  number={4},
  pages={947--972},
  year={2005},
  publisher={Wiley-Blackwell}
}

@article{qian2024digital,
  title={Digital content creation: An analysis of the impact of recommendation systems},
  author={Qian, Kun and Jain, Sanjay},
  journal={Management Science},
  year={2024},
  publisher={INFORMS}
}

@article{aridor2024economics,
  title={The economics of social media},
  author={Aridor, Guy and Jim{\'e}nez-Dur{\'a}n, Rafael and Levy, Ro’ee and Song, Lena},
  journal={Journal of Economic Literature},
  volume={62},
  number={4},
  pages={1422--1474},
  year={2024},
  publisher={American Economic Association 2014 Broadway, Suite 305, Nashville, TN 37203-2425}
}

@article{ichihashi2023buyer,
  title={Buyer-Optimal Algorithmic Consumption},
  author={Ichihashi, Shota and Smolin, Alex},
  journal={SSRN 4635866},
  year={2025}
}

@article{myerson1986multistage,
  title={Multistage games with communication},
  author={Myerson, Roger B},
  journal={Econometrica: Journal of the Econometric Society},
  pages={323--358},
  year={1986},
  publisher={JSTOR}
}

@article{norman2004efficient,
  title={Efficient mechanisms for public goods with use exclusions},
  author={Norman, Peter},
  journal={The Review of Economic Studies},
  volume={71},
  number={4},
  pages={1163--1188},
  year={2004},
  publisher={Wiley-Blackwell}
}

@article{damiano2007price,
  title={Price discrimination and efficient matching},
  author={Damiano, Ettore and Li, Hao},
  journal={Economic Theory},
  volume={30},
  number={2},
  pages={243--263},
  year={2007},
  publisher={Springer}
}

@article{gomes2024price,
  title={Price customization and targeting in matching markets},
  author={Gomes, Renato and Pavan, Alessandro},
  journal={The Rand journal of economics},
  volume={55},
  number={2},
  pages={230--265},
  year={2024},
  publisher={Wiley Online Library}
}

@article{gomes2016many,
  title={Many-to-many matching and price discrimination},
  author={Gomes, Renato and Pavan, Alessandro},
  journal={Theoretical Economics},
  volume={11},
  number={3},
  pages={1005--1052},
  year={2016},
  publisher={Wiley Online Library}
}

@article{johnson2023platform,
  title={Platform design when sellers use pricing algorithms},
  author={Johnson, Justin P and Rhodes, Andrew and Wildenbeest, Matthijs},
  journal={Econometrica},
  volume={91},
  number={5},
  pages={1841--1879},
  year={2023},
  publisher={Wiley Online Library}
}

@article{chenmarket,
  title={A Model of the Attention Economy.},
  author={Chen, Daniel},
    year={2025},
  journal={Princeton University, Working Paper}
}

@article{ambrus2016either,
  title={Either or both competition: A “two-sided” theory of advertising with overlapping viewerships},
  author={Ambrus, Attila and Calvano, Emilio and Reisinger, Markus},
  journal={American Economic Journal: Microeconomics},
  volume={8},
  number={3},
  pages={189--222},
  year={2016},
  publisher={American Economic Association 2014 Broadway, Suite 305, Nashville, TN 37203-2425}
}

@article{bergemann2011targeting,
  title={Targeting in advertising markets: implications for offline versus online media},
  author={Bergemann, Dirk and Bonatti, Alessandro},
  journal={The RAND Journal of Economics},
  volume={42},
  number={3},
  pages={417--443},
  year={2011},
  publisher={Wiley Online Library}
}

@article{prat2022attention,
  title={Attention oligopoly},
  author={Prat, Andrea and Valletti, Tommaso},
  journal={American Economic Journal: Microeconomics},
  volume={14},
  number={3},
  pages={530--557},
  year={2022},
  publisher={American Economic Association 2014 Broadway, Suite 305, Nashville, TN 37203-2425}
}

@article{oren1982nonlinear,
  title={Nonlinear pricing in markets with interdependent demand},
  author={Oren, Shmuel S and Smith, Stephen A and Wilson, Robert B},
  journal={Marketing Science},
  volume={1},
  number={3},
  pages={287--313},
  year={1982},
  publisher={INFORMS}
}

@article{lockwood2000production,
  title={Production externalities and two-way distortion in principal-multi-agent problems},
  author={Lockwood, Ben},
  journal={Journal of Economic Theory},
  volume={92},
  number={1},
  pages={142--166},
  year={2000},
  publisher={Elsevier}
}

@article{hellwig2003public,
  title={Public-good provision with many participants},
  author={Hellwig, Martin F},
  journal={The Review of Economic Studies},
  volume={70},
  number={3},
  pages={589--614},
  year={2003},
  publisher={Wiley-Blackwell}
}

@article{meisner2024monetizing,
  title={Monetizing digital content with network effects: A mechanism-design approach},
  author={Meisner, Vincent and Pillath, Pascal},
  journal={arXiv preprint arXiv:2408.15196},
  year={2024}
}

@phdthesis{srinivasan2024paying,
  title={Paying attention},
  author={Srinivasan, Karthik},
  year={2024},
  school={The University of Chicago}
}
\bibliographystyle{chicago}
\end{document}